\newif\ifabstract
\newif\iffull
\newcommand{\contract}{\mathsf{Contract}}
\newcommand{\betaFCG}{\beta_{\mathrm{FCG}}}
\par\vspace{4mm}}
\newcommand{\connect}{\leadsto}
\newcommand{\sconnect}{\overset{\mbox{\tiny{1:1}}}{\leadsto}}
\newcommand{\algsc}{\ensuremath{{\mathcal{A}}_{\mbox{\textup{\scriptsize{ARV}}}}}\xspace}
\newcommand{\alphasc}{\ensuremath{\alpha_{\mbox{\tiny{\sc ARV}}}}}
\newcommand{\ceil}[1]{\ensuremath{\left\lceil#1\right\rceil}}
\newcommand{\set}[1]{\left\{ #1 \right\}}
\newcommand{\sse}{\subseteq}
\newcommand{\tset}{{\mathcal T}}
\newcommand{\eset}{{\mathcal E}}
\newcommand{\pset}{{\mathcal{P}}}
\newcommand{\qset}{{\mathcal{Q}}}
\newcommand{\cset}{{\mathcal{C}}}
\newcommand{\fset}{{\mathcal{F}}}
\newcommand{\wset}{{\mathcal{W}}}
\newcommand{\rset}{{\mathcal{R}}}
\newcommand{\sset}{{\mathcal{{S}}}}
\newcommand{\gset}{{\mathcal{G}}}
\newcommand{\tA}{\tilde A}
\newcommand{\tE}{\tilde E}
\newcommand{\nots}{\overline S}
\newcommand{\be}{\begin{enumerate}}
\newcommand{\ee}{\end{enumerate}}
\newcommand{\bd}{\begin{description}}
\newcommand{\ed}{\end{description}}
\newcommand{\bi}{\begin{itemize}}
\newcommand{\ei}{\end{itemize}}
\newtheorem{lemma}{Lemma}
\newtheorem{theorem}{Theorem}
\newtheorem{observation}{Observation}
\newtheorem{corollary}{Corollary}
\newtheorem{claim}{Claim}
\newtheorem{definition}{Definition}
\iffull \newenvironment{proof}{\par \smallskip{\bf Proof:}}{\hfill\stopproof}
\def\stopproof{\square}
\def\square{\vbox{\hrule height.2pt\hbox{\vrule width.2pt height5pt \kern5pt
\vrule width.2pt} \hrule height.2pt}}
\renewcommand{\phi}{\varphi}
\newcommand{\eps}{\epsilon}
\newcommand{\half}{\ensuremath{\frac{1}{2}}}
\newcommand{\poly}{\operatorname{poly}}
\newcommand{\reals}{{\mathbb R}}
\newenvironment{properties}[2][0]
{
\begin{enumerate} \setcounter{enumi}{#1}}{\end{enumerate}}
\newcommand{\dmax}{d_{\mbox{\textup{\footnotesize{max}}}}}
\newcommand{\mincut}{\operatorname{MinCut}}
\newcommand{\out}{\operatorname{out}}
\begin{document}

\ifabstract
\conferenceinfo{STOCÕ12,} {May 19--22, 2012, New York, New York, USA.} 
\CopyrightYear{2012} 
\crdata{978-1-4503-1245-5/12/05} 
\clubpenalty=10000 
\widowpenalty = 10000
\fi

\title{On Vertex Sparsifiers with Steiner Nodes}

\ifabstract
\numberofauthors{1}
\author{
\alignauthor Julia Chuzhoy
\titlenote{Supported in part by NSF CAREER grant CCF-0844872 and Sloan Research Fellowship.}\\
       \affaddr{Toyota Technological Institute}\\
       \affaddr{Chicago, IL 60637}\\
       \email{cjulia@ttic.edu}
}
\fi

\iffull
\author{Julia Chuzhoy\thanks{Toyota Technological Institute, Chicago, IL
60637. Email: {\tt cjulia@ttic.edu}. Supported in part by NSF CAREER award CCF-0844872 and by Sloan Research Fellowship}}
\fi

\maketitle
\begin{abstract}
Given an undirected graph $G=(V,E)$ with edge capacities $c_e\geq 1$ for $e\in E$ and a subset $\tset$ of $k$ vertices called terminals, we say that a graph $H$ is a quality-$q$ cut sparsifier for $G$ iff $\tset\sse V(H)$, and for any partition $(A,B)$ of $\tset$, the values of the minimum cuts separating $A$ and $B$ in graphs $G$ and $H$ are within a factor $q$ from each other. We say that $H$ is a quality-$q$ flow sparsifier for $G$ iff $\tset\sse V(H)$, and for any set $D$ of demands over the terminals, the values of the minimum edge congestion incurred by fractionally routing the demands in $D$ in graphs $G$ and $H$ are within a factor $q$ from each other.

So far vertex sparsifiers have been studied in a restricted setting where the sparsifier $H$ is not allowed to contain any non-terminal vertices, that is $V(H)=\tset$. For this setting, efficient algorithms are known for constructing quality-$O(\log k/\log\log k)$ cut and flow vertex sparsifiers, as well as a lower bound of $\tilde{\Omega}(\sqrt{\log k})$ on the quality of any flow or cut sparsifier.

We study flow and cut sparsifiers in the more general setting where Steiner vertices are allowed, that is, we no longer require that $V(H)=\tset$. We show algorithms to construct constant-quality cut sparsifiers of size $O(C^3)$ in time $\poly(n)\cdot 2^C$, and constant-quality flow sparsifiers of size $C^{O(\log\log C)}$ in time $n^{O(\log C)}\cdot 2^C$, where $C$ is the total capacity of the edges incident on the terminals.
\end{abstract}

\section{Introduction}
Suppose we are given an undirected graph $G=(V,E)$ with edge capacities $c_e\geq 1$ for $e\in E$, and a subset $\tset\sse V$ of $k$ vertices called terminals. Assume further that we are interested in routing traffic across  $G$ between the terminals in $\tset$. While the size of the graph $G$ may be very large, the specific structure of $G$ is largely irrelevant to our task, except where it affects our ability to route flow between the terminals. A natural question is whether we can build a smaller graph $H=(V',E')$, with $\tset\sse V'$, that  approximately preserves the routing properties of graph $G$ with respect to $\tset$. In this case, we say that $H$ is a \emph{vertex sparsifier} for $G$. Two types of vertex sparsifiers have been studied so far: cut sparsifiers, which preserve the minimum cuts between any partition of the terminals, and flow sparsifiers, which preserve the  minimum edge congestion required for routing any set $D$ of demands over $\tset$. 

More formally,
given any graph $G$ with capacities $c_e\geq 1$ for the edges $e\in E$, a subset $\tset$ of vertices called terminals, and a partition $(\tset_A,\tset_B)$ of the terminals, let $\mincut_{G}(\tset_A,\tset_B)$ denote the capacity of the minimum cut separating the vertices of $\tset_A$ from the vertices of $\tset_B$ in $G$. We say that a graph $H=(V',E')$ is a quality-$q$ \emph{vertex cut sparsifier}, or just cut sparsifier, for graph $G$ with terminal set $\tset$, iff $\tset\sse V'$, and for every partition $(\tset_A,\tset_B)$ of $\tset$, $\mincut_G(\tset_A,\tset_B)\leq \mincut_H(\tset_A,\tset_B)\leq q\cdot \mincut_G(\tset_A,\tset_B)$.

Given a graph $G$ with capacities $c_e\geq 1$ for every edge $e\in E$, and a subset $\tset\sse V$ of vertices called terminals, a set $D$ of demands over the terminals specifies, for every unordered pair $(t,t')$ of terminals, a demand $D(t,t')$. A flow $F$ is a \emph{routing} of the set $D$ of demands, iff for every pair $(t,t')$ of terminals, $t$ and $t'$ send $D(t,t')$ flow units to each other. The congestion of $F$ is the maximum, over all edges $e\in E$, of $F(e)/c_e$, where $F(e)$ is the flow sent along $e$. Given a set $D$ of demands over the set $\tset$ of terminals, let $\eta(G,D)$ denote the minimum congestion required for routing the demands in $D$ in graph $G$.
We say that a graph $H$ is a \emph{flow sparsifier of quality $q$} for $G$, iff $\tset\sse V(H)$, and for any set $D$ of demands over the set $\tset$ of terminals, $\eta(H,D)\leq \eta(G,D)\leq q\cdot \eta(H,D)$. 

For a vertex sparsifier $H$, we say that the vertices in $V(H)\setminus \tset$ are \emph{Steiner vertices}. 
Vertex cut sparsifiers were first introduced by Moitra~\cite{Moitra}, and later Leighton and Moitra~\cite{LM} defined flow sparsifiers and showed that they generalize cut sparsifiers. The main motivation in both papers was designing improved approximation algorithms for graph partitioning and routing problems. Specifically, if the solution value of some combinatorial optimization problem only depends on the values of the minimum cuts separating terminal subsets, then given any approximation algorithm for the problem, we can first compute a cut sparsifier $H$ for graph $G$, and then run this algorithm on $H$, thus obtaining an algorithm whose performance guarantee is independent of the size of $G$, and only depends on the size of the sparsifier $H$. Flow sparsifiers can be similarly used for combinatorial optimization problems whose solution value only depends on the congestion required for routing various demand sets over the terminals in $G$. The definitions of the cut and the flow sparsifiers of \cite{Moitra,LM} however required that the sparsifier $H$ does not contain any Steiner vertices, that is, $V(H)=\tset$.

Moitra~\cite{Moitra} showed that there exist  cut sparsifiers of quality $O(\log k/\log\log k)$ even when no Steiner vertices are allowed, and Leighton and Moitra~\cite{LM} proved the existence of quality-$O(\log k/\log\log k)$ flow sparsifiers for the same setting, and obtained an efficient algorithm to construct quality-$O(\log^2k/\log\log k)$ flow and cut sparsifiers. Recently, Charikar et al.~\cite{CLLM}, Englert et al.~\cite{EGK} and Makarychev and Makarychev~\cite{MM} have shown efficient algorithms to construct quality-$O(\log k/\log\log k)$ flow and cut sparsifiers that do not contain Steiner vertices. On the negative side, Leighton and Moitra~\cite{LM} have shown a
 lower bound of $\Omega(\log\log k)$ on the quality of flow sparsifiers when no Steiner vertices are allowed. This  bound was later improved to $\Omega(\sqrt{\log k/\log\log k})$ by Makarychev and Makarychev~\cite{MM}. Englert et al.~\cite{EGK} have shown a lower bound of $\Omega(\sqrt {\log k}/\log\log k)$ on the quality of flow sparsifiers when no Steiner vertices are allowed, and all edge capacities of the sparsifier are bounded from below by a constant. As for cut vertex sparsifiers with no Steiner nodes, \cite{CLLM} and \cite{MM} have shown a lower bound of $\Omega(\log^{1/4}k)$, and the results of~\cite{MM} together with the results of~\cite{FJS} give a lower bound of $\Omega(\sqrt{\log k}/\log\log k)$ on their quality.

It is therefore natural to ask whether we can obtain better quality vertex sparsifiers by allowing Steiner vertices. In particular, an interesting question is: what is the smallest size $S(k)$, such that for any graph $G$ with a set $\tset$ of $k$ terminals, there is a constant-quality cut or flow sparsifier of size at most $S(k)$. Notice that if our goal is to obtain better and faster approximation algorithms via graph sparsifiers, then the presence of Steiner nodes may actually lead to improved performance if we can construct better quality sparsifiers, while keeping the graph size sufficiently low.

For simplicity, we first consider a special case where all edge capacities in the input graph $G$ are unit, and every terminal in $\tset$ has degree $1$ in $G$. In this case we show that there exist constant-quality cut sparsifiers of size $O(k^3)$, and constant-quality flow sparsifiers of size $k^{O(\log\log k)}$. We also show algorithms to construct these sparsifiers in time $\poly(n)\cdot 2^k$ for cut sparsifiers and in time $n^{O(\log k)}\cdot 2^k$ for flow sparsifiers. We then generalize these algorithms to arbitrary edge capacities. Let $C$ be the total capacity of the edges incident on the terminals, assuming that for each edge $e\in E$, $c_e\geq 1$. We show that there exist constant-quality cut sparsifiers of size $O(C^3)$, and constant-quality flow sparsifiers of size $C^{O(\log\log C)}$, and show algorithms to construct such sparsifiers, with running time $\poly(n)\cdot 2^C$ for cut sparsifiers and $n^{O(\log C)}\cdot 2^C$ for flow sparsifiers.

We say that a graph $H$ is a \emph{restricted sparsifier} for graph $G$, if $H$ is a sparsifier that is associated with a collection $\cset$ of disjoint subsets of non-terminal vertices, and $H$ is obtained from $G$ by contracting every cluster $S\in \cset$ into a vertex. All sparsifiers that we construct are restricted sparsifiers. Interestingly, Charikar et al.~\cite{CLLM} showed that when Steiner vertices are not allowed, the ratio of the quality of the best possible restricted flow sparsifier to the quality of an optimal flow sparsifier is super-constant. Moreover, Englert et al.~\cite{EGK} have shown an $\Omega(\sqrt {\log k})$ lower bound on the quality of sparsifiers that do not contain Steiner vertices, and can be obtained from convex combinations of $0$-extensions in graph $G$.

We note that our techniques are very different from the techniques of~\cite{MM,CLLM,EGK}, who exploited the connection between vertex sparsifiers and $0$-extensions. Instead, we use well-linked decompositions and other techniques that are often employed in the context of graph routing.

\paragraph{Our Results and Techniques}

We start with a  simple construction of cut sparsifiers with Steiner vertices, which is summarized in the following theorem.

\begin{theorem}\label{thm: cut sparsifiers}
Let $G=(V,E)$ be any $n$-vertex graph with capacities $c_e\geq 1$ on edges $e\in E$, and a set $\tset\sse V$ of terminals. Let $C$ denote the total capacity of the edges incident on the terminals, and let $0<\eps\leq 1$ be any constant. Then there is a quality-$(3+\eps)$ vertex cut sparsifier $H=(V',E')$ for $G$, with $|V'|=O(C^3)$. Moreover, graph $H$ can be constructed in time $\poly(n)\cdot 2^C$.
\end{theorem}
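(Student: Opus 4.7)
The plan is to prove the theorem in three stages: a preprocessing reduction that normalizes the terminals, a brute-force enumeration of all terminal bipartition min cuts (the source of the $2^C$ factor in the running time), and a structural contraction argument that shrinks the graph down to $O(C^3)$ vertices while inflating cut values by at most a factor $(3+\eps)$.

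First, I would reduce to the setting in which each terminal has exactly one unit-capacity incident edge and $|\tset|\leq C$: subdivide each terminal-incident edge $e=(t,v)$ of capacity $c_e$ into $c_e$ parallel unit-capacity edges, each ending at a fresh degree-one terminal. Any cut sparsifier of quality $q$ for the resulting instance $(G',\tset')$ yields a sparsifier of quality $q$ for $(G,\tset)$ by identifying the copied terminals back to the originals. Second, for each of the fewer than $2^C$ non-trivial bipartitions $(A,B)$ of $\tset'$, I would run a standard max-flow computation to obtain $\mincut_{G'}(A,B)$ together with a concrete minimum cut $(X_{A,B},\overline{X_{A,B}})$. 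Each computation is $\poly(n)$, matching the claimed $\poly(n)\cdot 2^C$ budget.

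The heart of the argument is extracting from this $2^{O(C)}$-sized pile of minimum cuts a sparsifier with only $O(C^3)$ vertices. A naive attempt — let $F$ be the union of all the chosen min-cut edge sets and contract the connected components of $G'\setminus F$ — achieves quality $1$ but yields $2^{O(C)}$ Steiner vertices, which is exponentially too large. Instead, I would run a well-linked decomposition on the non-terminal vertex set: iteratively find sparse cuts, peel off the ``loose'' sides, and recurse, stopping when every remaining cluster of non-terminals is constant-well-linked with respect to its boundary edges. Each such well-linked cluster is then contracted to a single Steiner vertex, producing $H$. Since contraction can only raise $\mincut_H(A,B)$ above $\mincut_{G'}(A,B)$, one direction of the quality inequality is automatic. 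For the reverse direction, for every bipartition $(A,B)$ we invoke the stored minimum cut $(X_{A,B},\overline{X_{A,B}})$ together with the local well-linkedness to argue that any portion of that cut passing through the interior of a contracted cluster can be rerouted along the cluster boundary, inflating capacity by at most a $(3+\eps)$ multiplicative factor.

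The hard part is carrying out the decomposition so that two properties hold simultaneously: the well-linkedness parameter must be strong enough that all $2^{O(C)}$ bipartitions are preserved within factor $3+\eps$, yet the decomposition must still be coarse enough to leave only $O(C^3)$ surviving clusters. I expect the size bound to come from charging each cluster boundary to the terminal-incident capacity — the total boundary across the decomposition is $O(C)$ — combined with an $O(C^2)$ overhead from the laminar tree structure produced by the recursive sparse-cut peeling. Trading the well-linkedness parameter against $\eps$ is what pins down the constant $3$ in the final quality bound.
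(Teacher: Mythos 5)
Your high-level plan coincides with the paper's: preprocess to unit capacities, decompose the non-terminal vertices into constant-well-linked clusters, contract each cluster, and charge any cut edges internal to a cluster against the cluster's well-linkedness. You also correctly identify that the crux is a decomposition achieving constant well-linkedness with only polynomially many pieces. However, there are two genuine gaps in how you expect the argument to close.

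\paragraph{Where the $2^C$ really comes from.} You attribute the $2^C$ running time to enumerating all terminal bipartitions and computing their min cuts. That enumeration is never needed: the quality-$3$ argument works uniformly for \emph{every} bipartition $(\tset_A,\tset_B)$ by taking its min cut in $G$, moving each contracted cluster wholesale to one side, and charging the newly exposed boundary edges against the cluster's internal cut edges via $1/3$-well-linkedness; nothing is ``stored.'' The exponential factor is instead consumed entirely inside the decomposition itself: to certify that a candidate cluster with boundary of size at most $C$ is $1/3$-well-linked (and to extract a violating cut when it is not), one has to solve sparsest cut \emph{exactly} on the boundary edges, which is done by brute-forcing all $2^{O(C)}$ bipartitions of the boundary. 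Your storing-all-min-cuts step is orphaned: you pivot away from it immediately, and the final argument you sketch does not use it.

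\paragraph{The size accounting is wrong, and it matters.} You expect the total boundary of the decomposed clusters to be $O(C)$, with an $O(C^2)$ laminar-tree overhead supplying the $O(C^3)$. But a decomposition with total boundary $O(C)$ is only achievable if you settle for $\Omega(1/\mathrm{polylog}\,C)$-well-linkedness (the ``weak'' decomposition), which would blow your quality bound up from $3$ to $\mathrm{polylog}\,C$. Once you insist on $1/3$-well-linkedness, each sparse-cut split with $|E(A,B)|<|T_A|/3$ can nearly double the number of boundary edges charged to the smaller side, and the total boundary $\sum_R |\out(R)|$ genuinely grows to $\Theta(C^3)$. Bounding the number of resulting pieces by $O(C^3)$ requires a multi-scale counting argument: one groups clusters by the dyadic scale of their boundary size, shows that the boundary size of the smaller piece drops by a constant factor at every split, and proves by a charging scheme that the number of clusters at scale $i$ (boundary $\sim C/2^{i/2}$) is $2^{O(i)}$; summing over $O(\log C)$ scales gives the cubic bound. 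This is the content of the strong well-linked decomposition, and it is exactly the piece your proposal flags as ``the hard part'' but does not supply. Without it, there is no reason the decomposition terminates with polynomially many pieces.

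\paragraph{Minor points.} Your preprocessing only subdivides terminal-incident edges; you also need to handle non-terminal edges with non-integer or very large capacities (cap everything at $C$, then replace each edge by $\lceil c_e/\eps\rceil$ parallel unit edges), and that rounding is the \emph{sole} source of the $\eps$ slack. The well-linkedness parameter and $\eps$ do not trade off against each other: the constant $3$ comes from the $1/3$-well-linkedness and the $\eps$ comes only from capacity rounding.
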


For simplicity, we give an outline of the construction for the special case where all edge capacities are unit, and the degree of every terminal is $1$.
Our algorithm relies on the notion of well-linkedness, and on a new procedure to compute a well-linked decomposition. Given any subset $S$ of vertices, let $\out(S)$ denote the set of edges with exactly one endpoint  in $S$. We say that $S$ is $\alpha$-well-linked, iff for any partition $(A,B)$ of $S$, if we denote $\tset_A=\out(S)\cap \out(A)$ and $\tset_B=\out(S)\cap \out(B)$, then $|E(A,B)|\geq \alpha\cdot \min\set{|\tset_A|,|\tset_B|}$. Informally, we can set up an instance of the sparsest cut problem, on graph $G[S]\cup \out(S)$, where the edges of $\out(S)$ serve as terminals. Set $S$ being $\alpha$-well-linked is roughly equivalent to the value of sparsest cut in this new graph being at least $\alpha$. The notion of well-linkedness\footnote{Our definition of well-linkedness is very similar to what was called bandwidth property in~\cite{Raecke}, and cut well-linkedness in~\cite{CKS}, where we use the graph $G[S]\cup \out(S)$, and the set of terminals is the edges of $\out(S)$.} has been used extensively in graph routing~ e.g. in~\cite{Raecke,ANF,CKS,RaoZhou,Andrews}, and one of the useful tools for designing algorithms for routing problems is  \emph{well-linked decomposition}: a procedure that, given any subset $S$ of vertices with $|\out(S)|=z$, produces a partition $\wset$ of $S$ into well-linked subsets. In all standard well-linked decompositions, we can ensure that $|\wset|$ is  small (less than $z$), while each set $X\in \wset$ is guaranteed to be $\alpha$-well-linked, where $\alpha=1/\poly\log z$. We show a different well-linked decomposition, that instead ensures that every set $X\in \wset$ is $1/3$-well-linked, and we can still bound the number of clusters in $\wset$ by $O(z^3)$. An algorithm for constructing a cut sparsifier then simply computes a well-linked decomposition $\wset$ of the set $V(G)\setminus\tset$ of vertices, and contracts every cluster $X\in \wset$. Since every cluster $X\in \wset$ is $1/3$-well-linked, it is easy to verify that we obtain a constant-quality cut sparsifier.

We now turn to the more challenging task of constructing flow sparsifiers. We again first consider a special case where all edge capacities are unit, and each terminal $t\in \tset$ has exactly one edge incident to it in $G$. We show that for this special case, there is a flow sparsifier $H$ of quality $68$ and size $k^{O(\log\log k)}$, where $k=|\tset|$. Recall that a sparsifier $H$ is called a restricted sparsifier iff it is associated with a collection $\cset$ of disjoint subsets of non-terminal vertices, and graph $H$ is obtained from $G$ by contracting each cluster $S\in\cset$ into a vertex.

\begin{theorem}\label{thm: flow sparsifier for unit capacities}
Let $G=(V,E)$ be any $n$-vertex (multi-)graph with unit edge capacities and a set $\tset\sse V$ of $k$ terminals. Assume further that each terminal in $\tset$ has exactly one edge incident to it in $G$. Then there is an algorithm that finds, in time $n^{O(\log k)}\cdot 2^k$, a quality-$q$ restricted vertex flow sparsifier $H$ for $G$, with $|V(H)|=k^{O(\log\log k)}$ and $q=68$.
\end{theorem}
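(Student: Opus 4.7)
My plan is to generalize the well-linked decomposition strategy of Theorem~\ref{thm: cut sparsifiers} from the cut to the flow setting, and then to iterate it. For flows, the appropriate strengthening is \emph{flow} well-linkedness (the bandwidth property): a subset $X\sse V\setminus\tset$ is $\alpha$-flow-well-linked if every degree-$1$ demand supported on $\out(X)$ routes inside $G[X]\cup\out(X)$ with congestion at most $1/\alpha$. If a flow-well-linked decomposition with constant $\alpha$ and only $\poly(k)$ pieces were available, then contracting each piece would yield a constant-quality flow sparsifier exactly as in the cut case; but no such decomposition exists---classical analyses achieve constant $\alpha$ only after paying an $\Omega(\log k/\log\log k)$ factor either in the quality or in the total boundary.

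To circumvent this, I would use a recursive construction of $O(\log\log k)$ levels. At each level, I would compute a \emph{weak} flow-well-linked decomposition (say with $\alpha=\Theta(1/\log k)$, which can be obtained with only an $O(1)$ blow-up of the total boundary), take each resulting cluster $X$ together with $\out(X)$ as the new terminals, and recurse on $G[X]\cup\out(X)$. The recursion is truncated once a subinstance has constantly many terminals, at which point it is kept verbatim; the sparsifier for the original instance is assembled by replacing each cluster by its recursively computed sparsifier. The budgeted running time $n^{O(\log k)}\cdot 2^k$ is consumed as follows: at each of the $O(\log k)$ recursion levels, the flow-well-linked decomposition subroutine is implemented by enumerating over terminal partitions (this is where the $2^k$ factor is spent, to identify sparsest-cut style pieces exactly) and invoking polynomial-time max-flow computations to handle the Steiner vertices. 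The size bound $k^{O(\log\log k)}$ comes from the recursion tree: the effective number of terminals shrinks polynomially per level, so $O(\log\log k)$ levels suffice, while the branching factor is $\poly(k)$ per level.

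The main obstacle, and the place where most of the real work lies, is controlling the quality across the $O(\log\log k)$ levels. A naive compounding gives quality $(1/\alpha)^{O(\log\log k)}$, which is already super-constant for $\alpha=1/\log k$; to obtain the promised absolute constant $68$, the per-level loss must \emph{telescope} across recursion levels rather than multiply. I would establish this by a flow-decomposition-and-concatenation argument: any routing in the final sparsifier $H$ is pieced together from routings of degree-$1$ demands inside individual clusters, and the $1/\alpha$ overhead introduced inside a cluster $X$ is charged to the flow crossing $\out(X)$, whose total weight decays geometrically as one descends the recursion tree. Making this charging quantitatively tight, so that the cumulative overhead remains an absolute constant, is the crux of the proof and dictates the precise parameters of the flow-well-linked decomposition lemma that must be proved first.
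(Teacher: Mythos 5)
There is a genuine gap, and it sits exactly where you place it yourself: the control of quality across recursion levels. Your proposed fix --- that the $1/\alpha$ overhead incurred inside a cluster can be charged to the flow crossing $\out(X)$, whose ``total weight decays geometrically'' down the recursion tree --- does not hold up. Congestion is flow normalized by capacity, and when a routing is rerouted inside a level-$i$ cluster and that rerouted flow is again rerouted inside the level-$(i+1)$ clusters nested within it, the per-level factors of $1/\alpha$ compose \emph{multiplicatively}; there is no quantity that telescopes, and with $\alpha=\Theta(1/\log k)$ even two or three levels already exceed any absolute constant, let alone $68$. A secondary unsupported step is your size accounting: recursing on a cluster $X$ with $\out(X)$ as the new terminal set gives no automatic ``polynomial shrinkage'' of the terminal count, since a cluster of a decomposition of a set with boundary $k$ can itself have boundary close to $k$; so neither the claimed $O(\log\log k)$ depth nor the $k^{O(\log\log k)}$ size bound follows from what you have written.

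The paper's proof avoids the compounding problem entirely, and this is its main idea, which your proposal is missing. It never contracts a cluster merely because it is (flow- or cut-) well-linked with some sub-constant parameter; it only contracts \emph{good routers}: sets $S$ that are $1/3$-well-linked and, verifiably by a single multicommodity-flow computation, allow every pair of edges of $\out(S)$ to exchange $1/|\out(S)|$ flow units with congestion at most $\eta^*=34$. Contracting any disjoint family of such sets yields a quality-$2\eta^*=68$ sparsifier, independent of how many times or how deeply the construction recurses, so the quality is pinned once and for all and the recursion is used \emph{only} to bound the size. Progress is guaranteed by the notion of a contractible set (one with $|\out(S)|\leq\ceil{k/2}$ and $|S|$ exceeding a recursively defined threshold $F(|\out(S)|)$), so the terminal count at least halves in each recursive call, giving depth $O(\log k)$ and size $(\operatorname{poly}\log k)^{O(\log k)}=k^{O(\log\log k)}$; and the dichotomy that drives each iteration --- either find a contractible set, or exhibit a type-1/type-2 ``witness'' certifying that the whole remaining vertex set is itself a good router --- is the technical heart of the argument (using the flow-cut gap only \emph{inside} the witness analysis, where its $O(\log k^*)$ loss is absorbed by the parameter $r$ rather than by the sparsifier's quality). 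Without a mechanism of this kind that decouples the quality guarantee from the recursion depth, your outline cannot reach a constant quality bound.
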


It is then fairly easy to obtain the following corollary that extends the results of Theorem~\ref{thm: flow sparsifier for unit capacities} to general graphs. 

\begin{corollary}\label{corollary: general flow sparsifier} \label{COROLLARY: GENERAL FLOW SPARSIFIER}
Let $G=(V,E)$ be any $n$-vertex graph with edge capacities $c_e\geq 1$ for $e\in E$, and a set $\tset\sse V$ of  terminals. Let $C$ denote the total capacity of all edges incident on the terminals, and let $0<\eps<1$ be any constant. Then there is an algorithm that finds, in time $n^{O(\log C)}\cdot 2^C$, a quality-$q$ vertex flow sparsifier $H$ for $G$, with $|V(H)|=C^{O(\log\log C)}$ and $q=68+\eps$. 
\end{corollary}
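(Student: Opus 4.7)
The plan is a direct reduction to Theorem~\ref{thm: flow sparsifier for unit capacities} via a standard capacity-splitting transformation. To handle non-integer capacities, fix $\eps'=\eps/70$ and replace each $c_e$ by $c'_e=\lceil c_e/\eps'\rceil\cdot\eps'$, so that $c_e\leq c'_e\leq(1+\eps')c_e$ (using $c_e\geq 1$); after rescaling, all capacities are integers and the total capacity on terminal-incident edges becomes $C'=O(C)$. Now build an auxiliary multigraph $G'$ by replacing every edge of integer capacity $c'_e$ with $c'_e$ parallel unit-capacity edges, and every terminal $t\in\tset$ with $c(t):=\sum_{e\ni t}c'_e$ new degree-$1$ copies $t^1,\ldots,t^{c(t)}$, each owning exactly one unit edge that was formerly incident to $t$. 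The resulting terminal set $\tset'=\bigcup_{t\in\tset}\{t^1,\ldots,t^{c(t)}\}$ has size $O(C)$, every edge of $G'$ has unit capacity, and every terminal of $\tset'$ has degree $1$.

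Next, apply Theorem~\ref{thm: flow sparsifier for unit capacities} to $(G',\tset')$ to obtain a quality-$68$ restricted flow sparsifier $H'$ with $|V(H')|=C^{O(\log\log C)}$ in time $n^{O(\log C)}\cdot 2^{O(C)}$. Form $H$ from $H'$ by contracting, for each $t\in\tset$, the set $\{t^1,\ldots,t^{c(t)}\}$ of copies into a single vertex labelled $t$. Then $\tset\sse V(H)$, and the size and running-time bounds match the claim.

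To verify the sparsifier guarantee, introduce the demand projection $\pi$ sending demands on $\tset'$ to demands on $\tset$ by $\pi(D')(t,t'):=\sum_{j,l}D'(t^j,t'^l)$. I will establish the two identities
\[
\eta(G_{\mathrm{rnd}},D)\;=\;\min_{\pi(D')=D}\eta(G',D'), \qquad \eta(H,D)\;=\;\min_{\pi(D')=D}\eta(H',D'),
\]
where $G_{\mathrm{rnd}}$ denotes the rounded version of $G$. Each identity is a routine lifting/projection argument: a flow in the coarser graph lifts to the finer graph by routing each flow path through an appropriate terminal copy and splitting the flow on each multi-edge evenly among its unit copies (preserving congestion), and conversely a flow in the finer graph projects to the coarser one by summing parallel-edge flows and taking the net flow across identified copies (also preserving congestion). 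Combining these identities with the sparsifier inequality $\eta(H',D')\leq\eta(G',D')\leq 68\,\eta(H',D')$ and the rounding inequality $\eta(G_{\mathrm{rnd}},D)\leq\eta(G,D)\leq(1+\eps')\eta(G_{\mathrm{rnd}},D)$ yields $\eta(H,D)\leq\eta(G,D)\leq 68(1+\eps')\eta(H,D)\leq(68+\eps)\eta(H,D)$, as required.

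The main subtlety is the direction $\eta(H,D)\geq\min_{\pi(D')=D}\eta(H',D')$ of the second identity: given any flow routing $D$ in $H$ at congestion $\eta$, one must produce a flow in $H'$ at the same congestion routing some $D'$ with $\pi(D')=D$. The key observation is that each edge of $H$ incident to the contracted vertex $t$ descends from a unique edge of $H'$ incident to a \emph{specific} copy $t^j$; decomposing the flow in $H$ into path flows and relabelling each path's endpoints by the copies they correspond to produces the desired flow in $H'$, with demand $D'$ satisfying $\pi(D')=D$ and identical per-edge congestion, because parallel edges in $H$ arise from distinct unmerged edges of $H'$.
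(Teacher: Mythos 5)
Your overall route is the same as the paper's: round capacities, expand each edge into unit-capacity parallel edges, split each terminal into degree-$1$ copies, apply Theorem~\ref{thm: flow sparsifier for unit capacities}, and merge the copies back, with the lifting/projection of demands argued exactly as in the paper. However, as written there is a scaling error in the final chain of inequalities. Your graph $G'$ (and hence $H'$ and your $H$) lives in the \emph{rescaled} world where one unit of capacity represents $\eps'$ units of original capacity, but you compare $\eta(H,D)$ directly with $\eta(G,D)$. The two displayed identities and the rounding inequality cannot hold simultaneously under any single reading of $G_{\mathrm{rnd}}$: if $G_{\mathrm{rnd}}$ carries the capacities $c'_e=\lceil c_e/\eps'\rceil\cdot\eps'$ (so that $\eta(G_{\mathrm{rnd}},D)\leq\eta(G,D)\leq(1+\eps')\eta(G_{\mathrm{rnd}},D)$ holds), then $\min_{\pi(D')=D}\eta(G',D')=\eps'\cdot\eta(G_{\mathrm{rnd}},D)$, not $\eta(G_{\mathrm{rnd}},D)$, since each unit edge of $G'$ represents only $\eps'$ capacity; if instead $G_{\mathrm{rnd}}$ is the integer-capacity graph, the first identity holds but the rounding inequality fails by a factor of roughly $1/\eps'$. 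With $H$ keeping unit capacities, one direction $\eta(H,D)\leq\eta(G,D)$ survives (with slack), but the other direction only gives $\eta(G,D)\leq \frac{68(1+\eps')}{\eps'}\eta(H,D)$, i.e.\ quality $O(68/\eps)$ rather than $68+\eps$.

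The fix is the one extra step the paper performs explicitly: after merging the terminal copies, set the capacity of every edge of $H$ to $\eps'$ (equivalently, multiply all capacities of $H'$ by $\eps'$, which is the paper's multiplication by $\eps/(2\eta^*)$). With $H$ so rescaled, $\eta(H,D)=\frac{1}{\eps'}\min_{\pi(D')=D}\eta(H',D')$, and your chain gives $\eta(H,D)\leq\eta(G_{\mathrm{rnd}},D)\leq\eta(G,D)$ and $\eta(G,D)\leq 68(1+\eps')\eta(H,D)\leq(68+\eps)\eta(H,D)$, as claimed. One further small omission: before expanding edges into parallel copies you should cap every capacity at $C$ (as the paper does; this does not change any $\eta(G,D)$ because all flow crosses the terminal-incident edges of total capacity $C$). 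Without this cap, an edge of enormous capacity makes $|E(G')|$, and hence the running time, blow up beyond the claimed $n^{O(\log C)}\cdot 2^{O(C)}$ bound.
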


 We now outline our algorithm for constructing flow sparsifiers for the special case where all edge capacities are unit, and the degree of every terminal is $1$. Let us assume for simplicity that the set $R=V(G)\setminus \tset$ of vertices is $1/3$-well-linked (we perform a well-linked decomposition as a pre-processing step to ensure this). One of the central notions in our algorithm is that of good routers. We say that a subset $S\sse R$ of vertices is a good router iff it is $1/3$-well-linked, and moreover, every pair of edges in $\out(S)$ can simultaneously send $1/z$ flow units to each other with constant congestion inside $S$, where $z=|\out(S)|$. We say that a graph $H$ is a legal contracted graph for $G$ iff there is a collection $\cset$ of disjoint good routers in graph $G$, and $H$ is obtained from $G$ by contracting every cluster $S\in \cset$. It is easy to verify that if $H$ is a legal contracted graph, then it is a constant quality flow sparsifier, since contracting the good routers in $\cset$ may only affect the congestion of any routing by a constant factor. Our goal is then to find a legal contracted graph whose size is small enough.
 
 Notice that we have assumed that $R=V(G)\setminus\tset$ is $1/3$-well-linked. However, this is not sufficient to ensure that $R$ is a good router, as the ratio between the minimum sparsest cut and the maximum concurrent flow, known as the flow-cut gap, can be as large as logarithmic in undirected graphs. To overcome this difficulty, we define several special structures that we call witnesses. If graph $G$ contains such a witness, then we are guaranteed that $R$ is a good router. For example, suppose that for some value $0<\alpha<1$, graph $G$ contains $r=\log k/\alpha$ disjoint subsets $S_1,\ldots,S_r$ of non-terminal vertices, where for each $1\leq j\leq r$, subset $S_j$ is $\alpha$-well-linked, and there is a set $\pset_j$ of edge-disjoint paths in $G$, connecting every  terminal in $\tset$ to some edge in $\out(S_j)$. For each $1\leq j\leq r$, let $E_j\sse \out(S_j)$ be the set of $k$ edges where the paths of $\pset_j$ terminate.
Since the flow-cut gap in undirected graphs is bounded by $O(\log k)$, and the set $S_j$ is $\alpha$-well-linked,  every pair of edges in $E_j$ can simultaneously send $\frac{\alpha}{k\log k}=\frac{1}{rk}$ flow units to each other with constant congestion inside $S_j$. If graph $G$ contains such a witness $\set{S_1,\ldots,S_r}$, it is easy to verify that $R$ must be a good router, since we can send, for each $1\leq j\leq r$, $1/r$ flow units along each path in $\pset_j$, so that each edge $e\in \out(S_j)$ receives at most $1/r$ flow units, and then send $\frac{1}{rk}$ flow units between every pair of edges in  $E_j$, with constant congestion inside $S_j$. In this way, every pair of terminals sends $1/k$ flow units to each other with constant congestion in $G$.
 
 Our algorithm proceeds as follows. Throughout the algorithm, we maintain a legal contracted graph $G'$ of $G$, where at the beginning $G'=G$. As long as the number of vertices in $G'$ is large enough, we perform an iteration, whose output is either a witness for set $R$ being a good router, or another legal contracted graph $G''$ that contains fewer vertices than $G'$. In the former case, we stop the algorithm and output a sparsifier $H$ obtained from $G$ by contracting the set $R$, and in the latter case we proceed to the next iteration. In fact, we can efficiently check whether $R$ is a good router beforehand, by computing an appropriate multicommodity flow in $G[R]$, to ensure that the former case never happens. Once the size of the current graph $G'$ becomes small enough, we output it as our final sparsifier.
 
{\bf Organization:} We start with preliminaries and notation in Section~\ref{sec: Prelims}. We construct cut sparsifiers in Section~\ref{sec:cut sparsifiers} and flow sparsifiers in Section~\ref{sec: flow sparsifiers}.

\section{Preliminaries and Notation}
\label{--------------------------------sec: prelims--------------------------------------}
\label{sec: Prelims}
\label{SEC: PRELIMS}
In all our results, we start with a special case where all edges in $G$ have unit capacities, and then extend our results to the general setting. Therefore, all definitions and results presented in this section are for graphs with unit edge capacities.

\paragraph{General Notation}
For a graph $G=(V,E)$, and subsets $V'\sse V$, $E'\sse E$ of its vertices and edges respectively, we denote by $G[V']$, $G\setminus V'$, and $G\setminus E'$ the sub-graphs of $G$ induced by $V'$, $V\setminus V'$, and $E\setminus E'$, respectively.
For any subset $S\sse V$ of vertices, we denote by $\out_G(S)=E_G(S,V\setminus S)$ the subset of edges with one endpoint in $S$ and the other endpoint in $V\setminus S$. When clear from context, we omit the subscript $G$. All logarithms are to the base of $2$.


Let $\pset$ be any collection of paths in graph $G$. We say that paths in $\pset$ cause congestion $\eta$ in $G$, iff for each edge $e\in E(G)$, the number of paths in $\pset$ containing $e$ is at most $\eta$.

Given a graph $G=(V,E)$, and a subset $\tset\sse V$ of vertices called terminals, a set $D$ of demands is a function $D:\tset\times \tset\rightarrow \reals^+$, that specifies, for each pair $t,t'\in \tset$ of terminals, a demand $D(t,t')$. For simplicity, we assume that the pairs $t,t'$ of terminals are unordered, that is $D(t,t')=D(t',t)$ for all $t,t'\in\tset$.
We say that the set $D$ of demands is \emph{$\gamma$-restricted}, iff for each terminal $t\in \tset$, the total demand $\sum_{t'\in \tset}D(t,t')\leq \gamma$.

Given any set $D$ of demands, a \emph{routing} of $D$ is a flow $F$, where for each unordered pair $t,t'\in \tset$, the amount of flow sent from $t$ to $t'$ (or from $t'$ to $t$) is $D(t,t')$. The \emph{congestion} of the flow is the maximum, over all edges $e\in E$, of $F(e)$ --- the amount of flow sent via the edge $e$.

Given any two subsets $V_1,V_2$ of vertices, we denote by $F: V_1\connect_{\eta}V_2$ a flow that causes congestion at most $\eta$ in $G$, where each vertex in $V_1$ sends one flow unit, and each flow-path starts at a vertex of $V_1$ and terminates at a vertex of $V_2$. We denote by $F: V_1\sconnect_{\eta}V_2$ a flow with the above properties, where additionally each vertex in $V_2$ receives at most one flow unit. Similarly, we denote by $\pset:V_1\connect_{\eta}V_2$ a collection of paths $\pset=\set{P_v\mid v\in V_1}$ in graph $G$, where each path $P_v$ originates at $v$ and terminates at some vertex of $V_2$, and the paths in $\pset$ cause congestion at most $\eta$. We denote $\pset:V_1\sconnect_{\eta}V_2$ if additionally each vertex of $V_2$ serves as an endpoint of at most one path in $\pset$. Similarly, we define flows and paths between subsets of edges. For example, given two collections $E_1,E_2$ of edges of $G$, we denote by $F:E_1\connect_{\eta}E_2$ a flow that causes congestion at most $\eta$ in $G$, where each flow-path has an edge in $E_1$ as its first edge, and an edge in $E_2$ as its last edge, and moreover each edge in $E_1$ sends one flow unit. (Notice that it is then guaranteed that each edge in $E_2$ receives at most $\eta$ flow units due to the bound on congestion). If additionally each edge in $E_2$ receives at most one flow unit, we denote this by $F:E_1\sconnect_{\eta}E_2$. Collections of paths connecting subsets of edges to each other are defined similarly. We will often be interested in a scenario where we are given a subset $S\sse V(G)$ of vertices, and $E_1,E_2\sse \out(S)$. In this case, we say that a flow $F:E_1\connect_{\eta}E_2$ is \emph{contained} in $S$, iff for each flow-path $P$ in $F$, all edges of $P$ belong to $G[S]$, except for the first and the last edges that belong to $\out(S)$. Similarly, we say that a set $\pset:E_1\connect_{\eta}E_2$ of paths is contained in $S$, iff all inner edges on paths in $\pset$ belong to $G[S]$.

\paragraph{Sparsest Cut and the Flow-Cut Gap}
 
  Suppose we are given a graph $G=(V,E)$, with non-negative weights $w_v$ on vertices $v\in V$, and a subset $\tset\sse V$ of $k$ terminals, such that for all $v\not\in \tset$, $w_v=0$. Given any partition $(A,B)$ of $V$, the \emph{sparsity} of the cut $(A,B)$ is $\frac{|E(A,B)|}{\min\set{W(A),W(B)}}$, where $W(A)=\sum_{v\in A}w_v$ and $W(B)=\sum_{v\in B}w_v$. In the sparsest cut problem, the input is a graph $G$ with non-negative weights on vertices, and the goal is to find a cut of minimum sparsity. Arora, Rao and Vazirani~\cite{ARV} have shown an $O(\sqrt{\log k})$-approximation algorithm for the sparsest cut problem. We denote by $\algsc$ this algorithm and by $\alphasc(k)=O(\sqrt{\log k})$ its approximation factor.
We will usually work with a special case of the sparsest cut problem, where for each $t\in \tset$, $w_t=1$. We denote such an instance by $(G,\tset)$.

The dual of the sparsest cut problem is the maximum concurrent flow problem, where the goal is to find the maximum possible value $\lambda$, such that each pair $(t,t')$ of terminals can simultaneously send $\lambda/k$ flow units to each other with unit congestion (we assume that in the sparsest cut problem instance the weights $w_t=1$ for all $t\in \tset$). 
The flow-cut gap is the maximum possible ratio, in any graph, between the value of the minimum sparsest cut and the value $\lambda$ of the maximum concurrent flow. The flow-cut gap in undirected graphs, that we denote by $\betaFCG(k)$ throughout the paper, is $\Theta(\log k)$~\cite{LR, GVY,LLR,Aumann-Rabani}. In particular, if the value of the sparsest cut in graph $G$ is $\alpha$, then every pair of terminals can send at least $\frac{\alpha}{k\betaFCG(k)}$ flow units to each other simultaneously with no congestion. It is also easy to see that any $1$-restricted set $D$ of demands on set $\tset$ of terminals can be routed with congestion at most $2\betaFCG(k)/\alpha$. In order to find this routing, let $F$ be the flow where every pair of terminals sends $\frac{\alpha}{k\betaFCG(k)}$ flow units to each other with no congestion, and let $F'$ be the same flow scaled up by factor $\betaFCG(k)/\alpha$, so the flow in $F'$ causes congestion at most $\betaFCG(k)/\alpha$, and every pair of terminals sends $1/k$ flow units to each other. For each pair $(t,t')$ of terminals, vertex $t$ sends $D(t,t')/k$ flow units to each terminal in $\tset$ using the flow $F'$ (scaled by factor $D(t,t')$), and vertex $t'$ collects $D(t,t')/k$ flow units from each terminal in $\tset$. It is easy to verify that, since the set $D$ of demands is $1$-restricted, the total congestion of this flow is bounded by $2\betaFCG(k)/\alpha$.

\paragraph{Well-Linked Decompositions}
\begin{definition} Given a graph $G$, a subset $S$ of its vertices, and a parameter $\alpha>0$, we say that $S$ is $\alpha$-well-linked, iff for any partition $(A,B)$ of $S$, if we denote by $T_A=\out(A)\cap \out(S)$, and by $T_B=\out(B)\cap \out(S)$, then
$|E(A,B)|\geq \alpha\cdot \min\set{|T_A|,|T_B|}$.
\end{definition}

Given a subset $S$ of vertices of $G$, we define a graph $G_S$ associated with $S$, and a corresponding instance $(G_S,\tset'_S)$ of the sparsest cut problem, that we use throughout the paper. We start by sub-dividing every edge $e\in \out_G(S)$ by a vertex $t_e$, and let $\tset'_S=\set{t_e\mid e\in \out_G(S)}$ be the set of these new vertices. We then let $G_S$ be the sub-graph of the resulting graph, induced by $S\cup \tset'_S$. Notice that set $S$ is $\alpha$-well-linked in $G$ iff the value of the sparsest cut in instance $(G_S,\tset'_S)$ is at least $\alpha$ (for $\alpha<1$). In particular, if $S$ is $\alpha$-well-linked, and $|\out(S)|=z$, then we have the following two properties:

\label{---------------------------------Properties of well-linked------------------------------}
\begin{properties}{P}
\item Any set $D$ of $1$-restricted demands on the edges of $\out(S)$ can be routed inside $S$ with congestion at most $2\betaFCG(z)/\alpha$. \label{Property: any matching can be routed with low congestion}
\item For any two subsets $E_1,E_2\sse \out(S)$, where $|E_1|=|E_2|$, there is a collection $\pset:E_1\sconnect_{\lceil 1/\alpha\rceil}E_2$ of paths contained in $S$. \label{Property: any partition of edges can be routed}
\end{properties}
\label{---------------------------------End Properties of well-linked------------------------------}

In order to obtain the latter property, we  set up a single-source single-sink max-flow instance in graph $G_S$, where the edges of $E_1$ serve as the source and the edges of $E_2$ serve as the sink. The existence of the flow $F:E_1\sconnect_{1/\alpha}E_2$ follows from the max-flow/min-cut theorem, and the existence of the set $\pset:E_1\sconnect_{\lceil 1/\alpha\rceil}E_2$ of paths follows from the integrality of flow.

A well-linked decomposition of an arbitrary subset $S$ of vertices, is a partition of $S$ into a collection of  well-linked subsets. We use two different types of well-linked decomposition, that give slightly different guarantees. We start with a standard decomposition, that we refer to as the \emph{weak well-linked decomposition}, and it is similar to the one used in \cite{CKS, Raecke}. The proof of the next theorem appears in the Appendix.

\begin{theorem}[Weak well-linked decomposition]\label{thm: weak well-linked} Given any graph $G=(V,E)$, and any subset $S\sse V$ of vertices with $|\out(S)|=z$, there is an efficient algorithm, that finds a partition $\wset$ of $S$, such that for each set $R\in \wset$, $|\out(R)|\leq |\out(S)|$, $R$ is $\alpha_W(z)$-well-linked for $\alpha_W(z)=\Omega\left(\frac 1 {\log^{3/2} z}\right)$, and $\sum_{R\in \wset}|\out(R)|\leq 1.2|\out(S)|$.
\end{theorem}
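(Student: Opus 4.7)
The plan is a top-down sparsest-cut based decomposition, in the spirit of Chekuri--Khanna--Shepherd and of R\"acke. Fix a small absolute constant $c > 0$ and set the target well-linkedness parameter $\alpha := c/(\alphasc(z)\log z)$, which is $\Omega(1/\log^{3/2} z)$ since $\alphasc(z) = O(\sqrt{\log z})$. Initialize $\wset := \{S\}$ and repeat the following: pick any $R \in \wset$ not yet marked final, form the associated sparsest-cut instance $(G_R, \tset'_R)$, and run $\algsc$ on it. If the returned cut $(A,B)$ has sparsity at most $\alpha\cdot \alphasc(z) = c/\log z$, replace $R$ in $\wset$ by the two sides $A$ and $B$; otherwise the approximation guarantee of $\algsc$ certifies that the true optimal sparsity in $(G_R,\tset'_R)$ exceeds $\alpha$, so $R$ is already $\alpha$-well-linked and we mark it final. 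Each iteration either splits a set or marks one final, so the process runs in polynomial time and terminates with every $R\in\wset$ being $\alpha$-well-linked, giving the $\alpha_W(z)$-well-linkedness property.

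The per-cluster bound $|\out(R)|\le |\out(S)|$ follows by induction along the partition tree. Whenever $R$ of boundary $r\le z$ is split along a cut with $|T_A|\le|T_B|$ and $|E(A,B)|\le (c/\log z)|T_A|< |T_A|$, one has $|\out(A)|=|T_A|+|E(A,B)|\le 2|T_A|\le r$ and $|\out(B)|=|T_B|+|E(A,B)|<|T_A|+|T_B|=r$, so no boundary ever exceeds $z$.

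The main obstacle is the $1.2$-factor bound on $\sum_{R\in\wset}|\out(R)|$. Writing $N$ for the number of edges of $G[S]$ that become inter-cluster under $\wset$, one has $\sum_{R\in\wset}|\out(R)|=|\out(S)|+2N$, so it suffices to show $N \le 0.1\,z$. I would prove this by a halving-plus-charging argument along the partition tree. First, whenever $R$ is split and an edge $e$ ends up on the smaller side, the boundary of $e$'s new cluster is at most $(1+c/\log z)\cdot|\out(R)|/2\le (2/3)|\out(R)|$, once $c$ is chosen small enough. Since every boundary edge first appears in a cluster of boundary at most $z$ (either because it lies in $\out(S)$ and starts in $S$, or because it is created by a split of a cluster of boundary at most $z$), any fixed edge can be on the smaller side at most $\log_{3/2} z$ times along any single cluster-chain it belongs to. An edge of $\out(S)$ belongs to only one such chain, while an edge of $E(G[S])$ that becomes inter-cluster belongs to two (one per endpoint), and therefore $\sum_{\text{splits}}|T_A|\le 2\log_{3/2} z\cdot(z+N)$. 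Since each split contributes at most $(c/\log z)\cdot|T_A|$ new edges to $N$, we obtain $N\le O(c)\cdot(z+N)$; choosing $c$ a sufficiently small absolute constant (e.g.\ $c=1/100$) then yields $N\le 0.05\,z$, whence $\sum_{R\in\wset}|\out(R)|\le 1.1\,z\le 1.2\,|\out(S)|$, completing the proof.
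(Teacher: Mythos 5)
Your proposal is correct and follows essentially the same route as the paper: iteratively split clusters using the ARV algorithm with a sparsity threshold of $\Theta(1/\log z)$, certify $\alpha_W(z)$-well-linkedness when no sparse cut is returned, and bound the newly cut edges by observing that a boundary edge can lie on the smaller side of a split only $O(\log z)$ times because its cluster's boundary then shrinks by a constant factor. Your accounting (solving the self-referential inequality $N\leq O(c)(z+N)$) is just a repackaging of the paper's direct-plus-indirect charging scheme, so the two proofs are essentially identical.
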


The next theorem gives what we call a \emph{strong well-linked decomposition}. This decomposition gives a better guarantee for the well-linkedness of the resulting sets in the partition. The drawback is that the running time of the algorithm is exponential in $|\out(S)|$, and the number of edges adjacent to the subsets in the partition is higher.
The proof of the next theorem appears in the Appendix.

\begin{theorem} [Strong well-linked decomposition] \label{thm: strong well linked}
Given any $n$-vertex graph $G=(V,E)$, and any subset $S\sse V$ of its vertices, where $G[S]$ is connected and $|\out(S)|=z$, there is an algorithm running in time $2^z\cdot \poly(n)$, that finds a partition $\sset$ of $S$, such that:

\begin{itemize}
\item For each $R\in \sset$, $|\out(R)|\leq |\out(S)|$, and $R$ is $1/3$-well-linked;

\item $\sum_{R\in \sset}|\out(R)|=O(z^3)$; and

\item For each $i: 1\leq i\leq 2\lfloor \log z\rfloor$, if $\sset_i\subseteq \sset$ denotes the collection of subsets $R\in \sset$ with $z/2^{i}< |\out(R)|\leq z/2^{i-1}$, then $|\sset_i|\leq 2^{3i+3}$ for all $i$.
\end{itemize}
\end{theorem}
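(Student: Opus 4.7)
The plan is to produce $\sset$ by a recursive exact-sparsest-cut decomposition. Given a current cluster $R\subseteq S$, I compute the minimum-sparsity cut in the instance $(G_R,\tset'_R)$ by enumerating all $2^{|\out(R)|}\le 2^z$ bipartitions of $\tset'_R$, running a standard single-source/single-sink min-cut for each, and returning the minimizer; this uses $2^z\cdot\poly(n)$ time per cluster. If the minimum sparsity is at least $1/3$, then $R$ is $1/3$-well-linked by definition and becomes a leaf of $\sset$; otherwise let $(A,B)$ be the sparsest cut and set $a:=|T_A|$, $b:=|T_B|$, $c:=|E(A,B)|$, WLOG $a\le b$. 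The sparsity condition gives $c<a/3$, and the invariant $|\out(\cdot)|\le z$ is preserved at every split since $|\out(A)|=a+c\le 4a/3\le (2/3)|\out(R)|$ and $|\out(B)|=b+c\le |\out(R)|-2a/3\le |\out(R)|$.

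The analysis is driven by a single potential, $\Phi(R):=|\out(R)|^3$. Direct expansion yields
\[
\Phi(R)-\Phi(A)-\Phi(B)=3ab(a+b)-3c(a^2+b^2)-3c^2(a+b)-2c^3,
\]
which is monotonically decreasing in $c$ on $[0,a/3]$; substituting the extremal value $c=a/3$ and using $b\ge a$ reduces this (by elementary algebra) to at least $\tfrac{89}{27}a^3$. Because every valid sparsest cut has $a\ge 1$, $\Phi$ strictly decreases at each split, and in fact by $\Omega(1)$, so the number of splits is $O(\Phi(S))=O(z^3)$ and the overall running time is $2^z\cdot\poly(n)$.

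The two stated guarantees now follow from this accounting. For the level bound, every $R\in\sset_i$ satisfies $\Phi(R)>(z/2^i)^3$, while the leaves still satisfy $\sum_{R\in\sset}\Phi(R)\le\Phi(S)=z^3$, which forces $|\sset_i|<8^i\le 2^{3i+3}$. For the total boundary bound, the per-split decrease of $\Omega(a^3)\ge \Omega(a)$ (using $a\ge 1$) gives $\sum_{\text{splits}}a=O(z^3)$; combined with $c<a/3$ this yields $\sum_{\text{splits}}c=O(z^3)$, and therefore $\sum_{R\in\sset}|\out(R)|=|\out(S)|+2\sum_{\text{splits}}c=O(z^3)$.

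The main technical obstacle is to pick and verify the potential. We need one whose per-split decrease is simultaneously strong enough (as a function of $c$) to pay for all new cut edges generated by the recursion and large enough (relative to $(z/2^i)^3$) to cap the number of clusters at each level. The cubic potential is calibrated so that the dominant term $3ab(a+b)$ in the decrease beats the adversarial cost $3c(a^2+b^2)+\cdots$ by a constant factor precisely when $c<a/3$; checking this dominance over all admissible triples $(a,b,c)$ with $a\le b$ and $c<a/3$ is the one step that requires explicit computation.
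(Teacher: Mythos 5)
Your algorithm is exactly the paper's: repeatedly solve the sparsest-cut instance $(G_R,\tset'_R)$ exactly by enumerating the $2^{|\out(R)|}\le 2^z$ terminal bipartitions, split along any cut of sparsity below $1/3$, and stop when every cluster is certified $1/3$-well-linked. Where you genuinely diverge is the analysis. The paper proves the two quantitative bounds by a leveled charging scheme: it assigns each cluster to a level according to $|\out(R)|$ (levels defined by factors of $\sqrt 2$), classifies the boundary ``terminals'' created by splits as special or regular per level, and shows the terminal counts grow at most geometrically ($S_i\le 3\cdot 2^{i-2}z$), which yields both $\sum_R|\out(R)|=O(z^3)$ and $|\sset_i|\le 2^{3i+3}$. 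You replace all of this with the single potential $\Phi(R)=|\out(R)|^3$: the identity $\Phi(R)-\Phi(A)-\Phi(B)=3ab(a+b)-3c(a^2+b^2)-3c^2(a+b)-2c^3$ is decreasing in $c$, and at the extreme $c=a/3$, $b=a$ it equals $\tfrac{88}{27}a^3$ (your $\tfrac{89}{27}$ is a harmless arithmetic slip --- the value is $(2a)^3-2(4a/3)^3=\tfrac{88}{27}a^3$; any constant multiple of $a^3$ suffices). From the resulting per-split decrease of $\Omega(a^3)$ you get $\sum_{\mathrm{splits}}a=O(z^3)$, hence $\sum_{\mathrm{splits}}c=O(z^3)$ and the boundary bound, while superadditivity of $\Phi$ along sparse splits gives $\sum_{R\in\sset}|\out(R)|^3\le z^3$ and hence $|\sset_i|<8^i$ directly. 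One point you leave implicit but which is immediate: a cut with $T_A=\emptyset$ or with $A\cap R=\emptyset$ has sparsity at least $1$, so every sparse cut indeed satisfies $a\ge 1$ and induces a nontrivial partition of $R$. Your potential argument is more compact and delivers both bounds from one inequality; the paper's charging scheme is more hands-on but produces slightly finer level-by-level information (terminal counts per level) of the same order. Both establish the theorem with the stated $2^z\cdot\poly(n)$ running time, since the number of splits is $O(z^3)$.
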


We will sometimes use the notion of well-linkedness in a slightly different setting. Suppose we are given a graph $G=(V,E)$, and a subset $\tset\sse V$ of vertices called terminals. We say that $G$ is $\alpha$-well-linked with respect to $\tset$, iff for any partition $(A,B)$ of $V$, if we denote $T_A=\tset\cap A$ and $T_B=\tset\cap B$, then $|E(A,B)|\geq \alpha\cdot \min\set{|T_A|,|T_B|}$. A convenient way of viewing this consistently with the previous definition of well-linkedness is to augment the graph $G$, by adding an edge connecting each terminal $t\in \tset$ to a new vertex $v_t$. Saying that $G$ is $\alpha$-well-linked for $\tset$ is then equivalent to saying that the subset $V$ of vertices of the new graph is $\alpha$-well-linked.

\label{------------------------------------------Cut Sparsifiers--------------------------------------------------------}
\section{Cut Sparsifiers}\label{sec:cut sparsifiers}


In this section we prove Theorem~\ref{thm: cut sparsifiers}. We fist consider a simpler special case where all edge capacities are unit (but parallel edges are allowed), in the following theorem.

\begin{theorem}\label{thm: cut sparsifiers - unweighted}
Let $G=(V,E)$ be any $n$-vertex (multi-)graph with unit edge capacities, and a set $\tset\sse V$ of terminals. Let $k=\sum_{t\in \tset}d_t$ be the sum of degrees of all terminals. Then there is a quality-$3$ vertex cut sparsifier $H=(V',E')$ for $G$, with $|V'|=O(k^3)$. Moreover, graph $H$ can be constructed in time $\poly(n)\cdot 2^k$.
\end{theorem}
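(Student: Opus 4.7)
The plan is to build $H$ by contracting a strong well-linked decomposition of the non-terminal vertices. Let $R_1,\ldots,R_m$ be the connected components of $G[V\setminus\tset]$. Since any edge leaving a component $R_j$ must have its far endpoint in $\tset$, the component boundaries satisfy $\sum_j|\out(R_j)|\le k$. Applying Theorem~\ref{thm: strong well linked} to each $R_j$ separately yields a partition $\sset$ of $V\setminus\tset$ into $1/3$-well-linked clusters with $|\sset|=O(k^3)$, in total time $\poly(n)\cdot 2^k$. I then let $H$ be the graph obtained from $G$ by contracting each $R'\in\sset$ into a single Steiner vertex $v_{R'}$ (keeping parallel edges, discarding self-loops); thus $|V(H)|\le|\tset|+|\sset|=O(k^3)$.

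For any partition $(\tset_A,\tset_B)$ of $\tset$, the bound $\mincut_G(\tset_A,\tset_B)\le\mincut_H(\tset_A,\tset_B)$ is immediate: any $H$-cut lifts to a $G$-cut of equal value by placing every cluster $R'$ wholly on the side assigned to $v_{R'}$. The interesting direction is $\mincut_H(\tset_A,\tset_B)\le 3\,\mincut_G(\tset_A,\tset_B)$. Starting from a minimum $G$-cut $(X,Y)$ separating $\tset_A$ from $\tset_B$, I would iteratively repair each split cluster $R'\in\sset$ as follows: setting $T_X=\out(R'\cap X)\cap\out(R')$ and $T_Y=\out(R'\cap Y)\cap\out(R')$ and assuming WLOG $|T_X|\le|T_Y|$, reassign every vertex of $R'\cap X$ to the $Y$ side of the current cut. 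This eliminates the $|E(R'\cap X,R'\cap Y)|$ internal edges that were crossing and flips the cut-status of the $|T_X|$ external edges incident on $R'\cap X$, so the net change in cut value is at most $|T_X|-|E(R'\cap X,R'\cap Y)|$. The $1/3$-well-linkedness of $R'$ gives $|T_X|\le 3|E(R'\cap X,R'\cap Y)|$, so the change is bounded by $2|E(R'\cap X,R'\cap Y)|$.

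The delicate point—and the main thing to verify—is that this per-cluster bound remains valid even after earlier clusters have been reassigned, because the current cut has drifted away from $(X,Y)$ and the far endpoint of each flipped external edge may no longer lie on its original side. The argument survives because only two static quantities enter the bound: the set $E(R'\cap X,R'\cap Y)$ of internal crossing edges, which is untouched by any other cluster's move (each cluster is reassigned in one piece, never altering a different cluster's interior); and the count $|T_X|$ of external edges incident on $R'\cap X$, which is independent of where their far endpoints currently reside. Summing over all split clusters and using that the sets $E(R'\cap X,R'\cap Y)$ are pairwise disjoint subsets of $E_G(X,Y)$, the total increase is at most $2\,\mincut_G(\tset_A,\tset_B)$. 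The final cut has every cluster unsplit, so it projects to an $H$-cut of the same value $\le 3\,\mincut_G(\tset_A,\tset_B)$, yielding the desired upper bound.
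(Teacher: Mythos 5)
Your proposal is correct and follows essentially the same route as the paper: contract the clusters of the strong well-linked decomposition (Theorem~\ref{thm: strong well linked}), and repair a minimum cut cluster-by-cluster, moving the side with fewer incident boundary edges and using $1/3$-well-linkedness to bound the net increase by twice the cluster's internal crossing edges, which are disjoint across clusters. Your explicit per-component application of the decomposition (needed since Theorem~\ref{thm: strong well linked} assumes $G[S]$ connected) and your careful handling of the ``drift'' of far endpoints are exactly the points the paper's charging argument implicitly relies on.
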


\begin{proof}
We assume w.l.o.g. that $G$ is a connected graph: otherwise, we construct a sparsifier for each of its connected components separately.
Let $S=V\setminus \tset$. Notice that $|\out(S)|=k$. In order to construct the sparsifier $H$, we compute a strong well-linked decomposition $\sset$ of $S$, given by Theorem~\ref{thm: strong well linked}. Recall that the decomposition can be found in time $2^k\cdot \poly(n)$, and $|\sset|=O(k^3)$. We now contract each set $R\in \sset$ into a single super-node $v_R$. The resulting graph is the sparsifier $H$. Notice that $H$ is an unweighted multi-graph, and $|V(H)|=O(k^3)$.

Assume that we are given any partition $(\tset_A,\tset_B)$ of the set $\tset$ of terminals.
It is easy to see that $\mincut_G(\tset_A,\tset_B)\leq \mincut_H(\tset_A,\tset_B)$: let $(X',Y')$ be the minimum cut, separating $\tset_A$ from $\tset_B$ in graph $H$. Since $H$ is obtained from $G$ by contracting some subsets of  its vertices, the cut $(X',Y')$ naturally induces a cut $(X,Y)$ separating $\tset_A$ from $\tset_B$ in $G$: for each cluster $R\in \sset$, if $v_R\in X'$, then we add all vertices of $R$ to $X$, and otherwise we add them to $Y$. The value of the cut, $|E_G(X,Y)|=|E_H(X',Y')|$, and so $\mincut_G(\tset_A,\tset_B)\leq \mincut_H(\tset_A,\tset_B)$.

We now prove that $\mincut_H(\tset_A,\tset_B)\leq 3\mincut_G(\tset_A,\tset_B)$. Let $(X,Y)$ be the minimum cut separating $\tset_A$ from $\tset_B$ in $G$. We define a cut $(X',Y')$, separating $\tset_A$ from $\tset_B$ in $H$, with $|E_H(X',Y')|\leq 3|E_G(X,Y)|$, as follows. we start with the cut $(X,Y)$ in graph $G$, and we gradually change this cut, so that eventually, for each set $R\in \sset$, all vertices of $R$ are completely contained in either $X$ or in $Y$. The resulting partition will then naturally define the cut $(X',Y')$ in graph $H$.

We process the sets $R\in \sset$ one-by-one. Let $R$ be any such set. Partition the edges of $\out(R)$ into four subsets: $E_X,E_Y,E_{XY},E_{YX}$, as follows. Let $e=(u,v)\in \out(R)$, where $u\in R$, $v\not\in R$. If both $u$ and $v$ belong to $X$, then $e$ is added to $E_X$. If both vertices belong to $Y$, then $e$ is added to $E_Y$. If $u$ belongs to $X$ and $v$ to $Y$, then $e$ is added to $E_{XY}$. Otherwise, it is added to $E_{YX}$ (see Figure~\ref{fig: illustration}).
Let $E'_R=E_G(R\cap X,R\cap Y)$. If $|E_X|+|E_{XY}|\leq |E_Y|+|E_{YX}|$, then we move all vertices of $R$ to $Y$; otherwise we move them to $X$.


\begin{figure}[h]
\scalebox{0.3}{\includegraphics{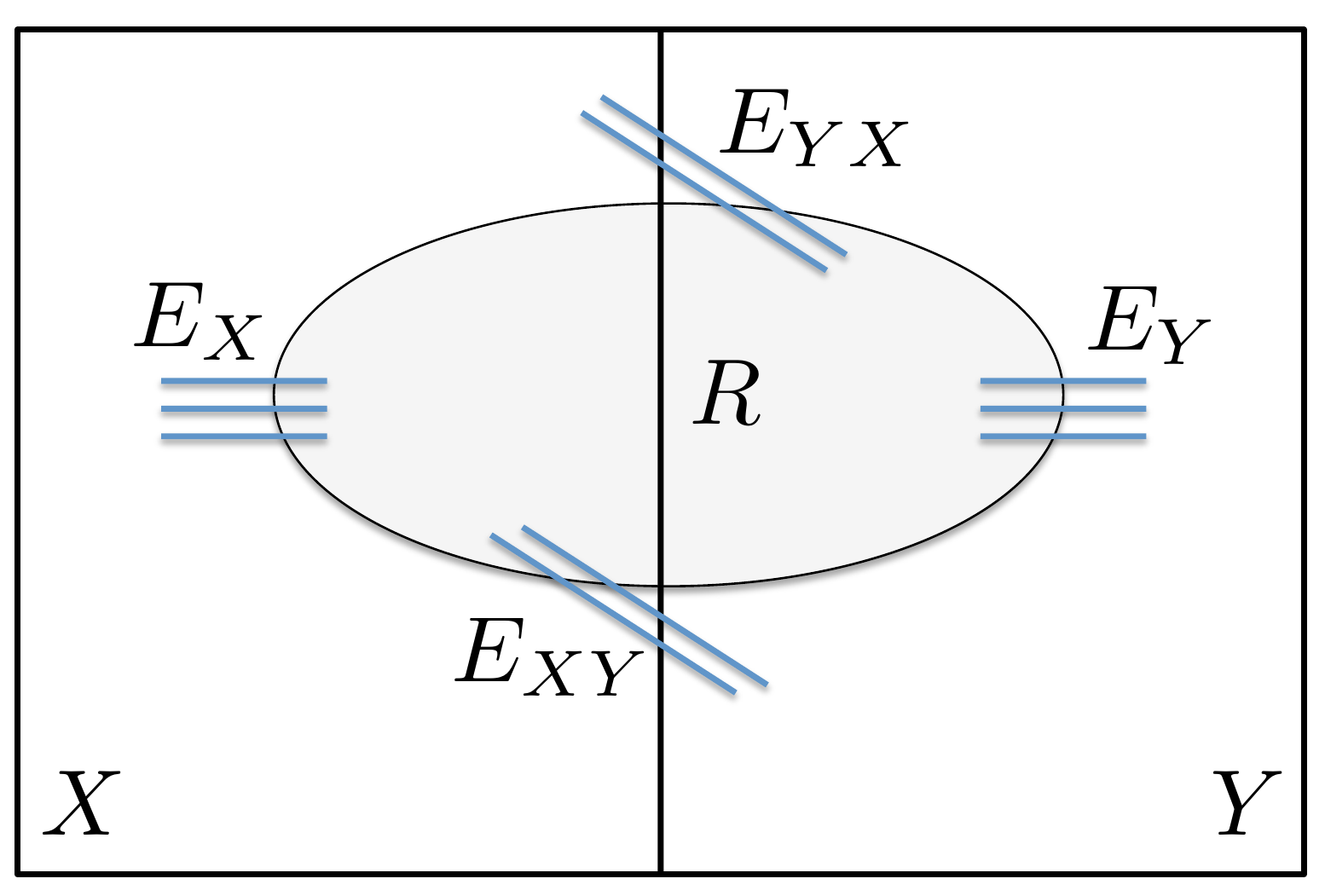}} \caption{Illustration for Theorem~\ref{thm: cut sparsifiers - unweighted}\label{fig: illustration}}
\end{figure}

Assume w.l.o.g. that $|E_X|+|E_{XY}|\leq |E_Y|+|E_{YX}|$, and so we have moved the vertices of $R$ to $Y$. The only new edges that we have added to the cut are the edges of $E_X$. On the other hand, the edges of $E'_R$, that belonged to the cut before the current iteration, do not belong to the cut anymore. We charge the edges of $E'_R$ for the edges of $E_X$. Since set $R$ is $1/3$-well-linked, $|E'_R|\geq |E_X|/3$ must hold, and so the charge to each edge of $E'_R$ is at most $3$. Moreover, since the edges of $E'_R$ are the inner edges of the set $R$ (that is, both endpoints of each such edge belong to $R$), we will never charge these edges again. Therefore, if $(\tilde{X},\tilde{Y})$ denotes the final cut, after all clusters $R\in\sset$ have been processed, then $|E_G(\tilde X,\tilde Y)|\leq 3|E_G(X,Y)|$. Finally, the cut $(\tilde X,\tilde Y)$ in graph $G$ naturally defines a cut $(X',Y')$ in graph $H$: for each cluster $R\in \sset$, if $R\sse \tilde X$, then we add $v_R$ to $X'$; otherwise we add it to $Y'$. Clearly, $|E_H(X',Y')|=|E_G(\tilde X,\tilde Y)|\leq 3|E_G(X,Y)|$. We conclude that  $\mincut_H(\tset_A,\tset_B)\leq 3\mincut_G(\tset_A,\tset_B)$.
 \end{proof}

We now complete the proof of Theorem~\ref{thm: cut sparsifiers}. Suppose we are given a graph $G$ with arbitrary edge capacities $c_e\geq 1$. For notational convenience, we denote the input parameter $\eps$ by $\eps'$, and we set $\eps=\eps'/3$. We perform the following transformation in graph $G$. Let $C$ be the sum of the capacities of all edges incident on the terminals. For each edge $e\in E$, if the capacity $c_e>C$, then we set it to be $C$. Notice that this does not change the values $\mincut_G(\tset_A,\tset_B)$ for any partition $(\tset_A,\tset_B)$ of the set $\tset$ of the terminals, since $\mincut_G(\tset_A,\tset_B)\leq C$ always holds. Finally, we replace each edge $e\in E$ with $\lceil c_e/\eps\rceil$ parallel unit-capacity edges. Let $G'$ be the resulting graph. We now apply Theorem~\ref{thm: cut sparsifiers - unweighted} to graph $G'$, to obtain a sparsifier $H'$ of size $O((C/\eps)^3)=O(C^3)$. We obtain a sparsifier $H$ for graph $G$, by setting the capacity of every edge in $H'$ to $\eps$. We now show that $H$ is a quality-$(3+3\eps)=(3+\eps')$-sparsifier for $G$.

Notice that for each partition $(\tset_A,\tset_B)$ of $\tset$, $\frac{\mincut_G(\tset_A,\tset_B)}{\eps}\leq \mincut_{G'}(\tset_A,\tset_B)\leq  \mincut_{H'}(\tset_A,\tset_B)=\frac{ \mincut_{H}(\tset_A,\tset_B)}{\eps}$, and so $\mincut_G(\tset_A,\tset_B)\leq \mincut_H(\tset_A,\tset_B)$. On the other hand, $\mincut_{G'}(\tset_A,\tset_B)\leq \frac{\mincut_G(\tset_A,\tset_B)}{\eps}(1+\eps)$, since all original edge capacities $c_e\geq 1$, and so $\lceil\frac{c_e}{\eps}\rceil\leq \frac{c_e}{\eps}(1+\eps)$. Therefore, $\mincut_{H}(\tset_A,\tset_B)=\eps\cdot \mincut_{H'}(\tset_A,\tset_B)\leq 3\eps \mincut_{G'}(\tset_A,\tset_B)\leq 3(1+\eps)\mincut_{G}(\tset_A,\tset_B)$.


\label{------------------------------------------Flow Sparsifiers--------------------------------------------------------}

\section{Flow Sparsifiers}\label{sec: flow sparsifiers}\label{SEC: FLOW SPARSIFIERS}

In this section we prove Theorem~\ref{thm: flow sparsifier for unit capacities} and Corollary~\ref{corollary: general flow sparsifier}. We start with the following definition.


\begin{definition}
Let $S\sse V\setminus \tset$ be any subset of non-terminal vertices, and let $|\out(S)|=z$. We say that $S$ is a \emph{good router} iff $S$ is $1/3$-well-linked, and every pair $(e,e')\in \out(S)$ of edges can simultaneously send $1/z$ flow units to each other inside $S$, with congestion at most $\eta^*=34$. 
\end{definition}

Notice that we can efficiently check whether $S$ is a good router by computing an appropriate multicommodity flow in the graph $G_S$.
Notice also that if $S$ is a good router, then any $1$-restricted set $D$ of demands on the edges of $\out(S)$ can be routed with congestion at most $2\eta^*$ inside $S$. Indeed, let $F$ be the flow, where each pair $(e,e')\in \out(S)$ of edges sends $1/z$ flow units to each other with congestion at most $\eta^*$ inside $S$. In order to route the set $D$ of demands, consider any pair $(e,e')\in \out(S)$ of edges. Edge $e$ sends $D(e,e')/z$ flow units to each edge $e''\in \out(S)$, using the flow $F$  (scaled by factor $D(e,e')$), while edge $e'$ collects $D(e,e')/z$ flow units from each edge $e''\in \out(S)$, using the flow $F$. In the end, we have $D(e,e')$ flow units sent from $e$ to $e'$, and since the set $D$ of demands is $1$-restricted, the total congestion of this routing is bounded by $2\eta^*$.

\begin{definition}
We say that a graph $G'$ is a \emph{legal contracted graph} for $G$ iff there is a collection $\cset$ of disjoint good routers, where the clusters  $S\in \cset$ do not contain any terminals, and $G'$ is obtained from $G$ by contracting every cluster $S\in \cset$ into a super-node $v_S$. (We remove self-loops, but leave parallel edges).
\end{definition}

It is easy to see that if $G'$ is a legal contracted graph for $G$, then it is a quality-$2\eta^*$ flow sparsifier, as the next claim shows.

\begin{claim}\label{claim: a legal contracted graph is a sparsifier}
If $G'$ is a legal contracted graph for $G$, then it is a quality-$2\eta^*$ restricted flow sparsifier.
\end{claim}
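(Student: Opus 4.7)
The plan is to prove the two defining inequalities of a quality-$2\eta^*$ flow sparsifier separately: first that $\eta(G',D)\leq \eta(G,D)$ for every demand set $D$ on $\tset$, and then the harder direction $\eta(G,D)\leq 2\eta^*\cdot \eta(G',D)$. Throughout, I will exploit the fact, already noted after the definition of a good router, that any $1$-restricted set of demands on $\out(S)$ can be routed inside $S$ with congestion at most $2\eta^*$.

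For the easy direction, I would take any fractional routing $F$ of $D$ in $G$ and project it to $G'$: every flow-path $P$ in $F$ is mapped to the walk obtained by identifying each vertex $v\in S$ with the super-node $v_S$, for every cluster $S\in\cset$. Since no terminal lies in any $S\in\cset$, endpoints of flow-paths are preserved, and edges internal to some $G[S]$ are collapsed to self-loops and can be discarded. Any edge of $G'$ corresponds to an edge of $G$ that is not contracted, and its flow value is at most its value under $F$; hence the resulting flow routes $D$ in $G'$ with congestion at most that of $F$. Taking $F$ to be optimal gives $\eta(G',D)\leq \eta(G,D)$.

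For the main direction, let $F'$ be an optimal routing of $D$ in $G'$ with congestion $\eta:=\eta(G',D)$, and decompose $F'$ into flow-paths. I will transform $F'$ into a routing $F$ in $G$ cluster by cluster. Consider a cluster $S\in\cset$ with $z=|\out(S)|$. Each flow-path in $F'$ passing through the super-node $v_S$ enters $v_S$ along some edge $e\in \out(S)$ and leaves along some edge $e'\in\out(S)$ (or starts/ends at $v_S$, which cannot happen since $v_S\notin\tset$). Let $D_S(e,e')$ be the total amount of such flow for each ordered pair, viewed as a demand set on $\out(S)$. Because the $F'$-congestion on each edge $e\in\out(S)$ is at most $\eta$, we have $\sum_{e'}D_S(e,e')\leq \eta$ for every $e$, i.e., $D_S$ is $\eta$-restricted. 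Scaling by $1/\eta$ and using that $S$ is a good router, the set $D_S/\eta$ can be routed inside $S$ with congestion at most $2\eta^*$; scaling back, $D_S$ itself is routable inside $S$ with congestion at most $2\eta^*\eta$. Replacing each portion of an $F'$-flow-path that was ``collapsed at $v_S$'' by the corresponding internal flow in $S$ yields a genuine routing $F$ of $D$ in $G$.

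It remains to bound congestion. For edges of $G$ not incident to any cluster, the flow is unchanged from $F'$, with congestion at most $\eta$. For edges $e\in\out(S)$ (the ``boundary'' edges of clusters), the flow on $e$ equals its $F'$ value, still at most $\eta$; here we rely on the fact that in $G_S$ the edge $e$ is used by $D_S$ only as a source/sink terminal, so the internal routing inside $S$ does not add load to $e$. For edges interior to some $G[S]$, the congestion comes solely from routing $D_S$ inside $S$, and is at most $2\eta^*\eta$. Hence $\eta(G,D)\leq 2\eta^*\cdot \eta(G',D)$, completing the proof. The main technical point to get right is the bookkeeping of the decomposition of $F'$ at each super-node and the verification that boundary edges are not overloaded when the internal routings are combined; this is where the distinction between edges in $\out(S)$ (serving as terminals in $G_S$) and edges in $G[S]$ matters.
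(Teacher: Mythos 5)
Your proposal is correct and follows essentially the same route as the paper's proof: both directions are handled identically, with the key step being the decomposition of the optimal routing in $G'$ at each super-node $v_S$ into a demand set $D_S$ on $\out(S)$ that is ($1$-restricted after scaling by the congestion), and then invoking the good-router property to route $D_S$ inside $S$ with congestion at most $2\eta^*$ before recombining. The only cosmetic difference is that the paper scales the whole demand set $D$ down by $\eta(G',D)$ at the outset rather than scaling each $D_S$ per cluster, which is immaterial.
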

\begin{proof}
Given any set $D$ of demands on the terminals in $\tset$, it is immediate to see that $\eta(G',D)\leq \eta(G,D)$, since $G'$ is obtained from $G$ by contracting some vertex subsets into super-nodes.

Assume now that we are given some set $D$ of demands on $\tset$, and $\eta(G',D)=\eta$. For simplicity, we scale the demands in $D$ down by the factor of $\eta$, to obtain a new set $D'$ of demands with $\eta(G',D')=1$. It is now enough to show that we can route the demands in $D'$ in graph $G$ with congestion at most $2\eta^*$.
 Let $F$ be the routing of $D'$ in $G'$ with congestion $1$. For each cluster $S\in \cset$, for each pair $(e,e')\in \out(S)$ of edges, let $D_S(e,e')$ be the total amount of flow in $F$ sent on flow-paths that enter $v_S$ through edge $e$, and leave it through edge $e'$. We have thus obtained a set $D_S$ of $1$-restricted demands on the edges of $\out(S)$. Since $S$ is a good router, these demands can be routed inside $S$ with congestion at most $2\eta^*$. Let $F_S$ denote this routing. In order to obtain the final routing $F'$ of the set $D'$ of demands in $G$, we start with the flow $F$, and we augment it with the routings $F_S$ that we have computed in each cluster $S\in \cset$. Therefore, $\eta(G,D')\leq 2\eta^*$.
 It is immediate to see that $G'$ is a restricted sparsifier for $G$, from the definition of a legal contracted graph. 
  \end{proof}

Most of this section is devoted to proving the following theorem, which gives a construction of a flow sparsifier for the special case where the set $V\setminus \tset$ is $1/3$-well-linked.

\begin{theorem}\label{thm: flow sparsifier: well-linked terminals}
Assume that we are given any (multi-)graph $G=(V,E)$, with unit edge capacities and a subset $\tset\sse V$ of $k$ terminals, where every vertex in $\tset$ has degree $1$. Let $R=V\setminus \tset$, and assume further that $R$ is $1/3$-well-linked. Then there is an algorithm that finds, in time $2^{k}\cdot n^{O(\log k)}$, a restricted flow sparsifier $H$ of quality $q=2\eta^*$ for $(G,\tset)$, such that $|V(H)|=k^{O(\log\log k)}$, and $H$ is a legal contracted graph for $G$. 
\end{theorem}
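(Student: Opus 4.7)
The plan is to design an algorithm that starts with $G_0 := G$ and iteratively contracts good routers, producing a chain of legal contracted graphs $G_0, G_1, \ldots, G_N$, terminating either when $|V(G_N)| \le k^{O(\log\log k)}$, or when the non-terminal set of the current graph is certified to be a good router (in which case it is contracted to a single super-vertex). Correctness and quality $2\eta^*$ then follow from Claim~\ref{claim: a legal contracted graph is a sparsifier}, since the output is a legal contracted graph by construction.

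The central subroutine per iteration is: given $G'$ whose non-terminal set $R' = V(G') \setminus \tset$ is $1/3$-well-linked, either certify that $R'$ is a good router (by solving the defining multicommodity flow in $G_{R'}$, which is a $\poly(n)$-size LP) and halt, or identify a good router strictly contained in $R'$ and contract it. I would apply the strong well-linked decomposition (Theorem~\ref{thm: strong well linked}) to $R'$, obtaining a partition $\sset$ with $|\sset| = O(z^3)$ (where $z = |\out(R')| \le k$) into $1/3$-well-linked subsets stratified by boundary size. For each $S \in \sset$ I test whether $S$ itself is a good router; if so, I contract it and continue. Otherwise I recurse on $S$ with $\out(S)$ playing the role of terminal edges, which after suitable subdivision yields a sub-instance in the scope of the theorem; the recursive output is a collection of good routers inside $S$ that I then contract in $G'$.

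The key structural ingredient is the witness argument sketched in the introduction: if there exist $r = \Theta(\log z)$ pairwise disjoint $\Omega(1)$-well-linked sub-clusters of $R'$, each admitting edge-disjoint paths from its boundary to all edges of $\out(R')$, then $R'$ is a good router---routing a $1/r$-share of every demand through each witness cluster, and using each cluster's $\Omega(1/(rz))$ concurrent flow between its boundary edges (the $1/\log z$ flow-cut-gap loss is exactly compensated by the $r = \Theta(\log z)$ cluster count), yields $\Omega(1/z)$ concurrent flow between all pairs in $\out(R')$ at constant congestion. The dual use of this argument is crucial: failure of the top-level good-router test for $R'$ implies no such witness exists, which restricts the possible ``shape'' of $\sset$ and forces recursive sub-instances to have geometrically shrinking boundary.

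The main obstacle will be controlling the recursion to achieve the $k^{O(\log\log k)}$ size bound. A naive recurrence obtained by simply recursing on every cluster produced by Theorem~\ref{thm: strong well linked} gives $f(z) \le c z^3 f(z/2)$, whose solution $z^{O(\log z)}$ is too large by a $\log$ vs.\ $\log\log$ factor. To save this factor, I would leverage the stratification bound $|\sset_i| \le 2^{3i+3}$ from Theorem~\ref{thm: strong well linked} together with the witness argument, arguing that in the non-trivial case (no cluster is a good router and no witness exists) most of the ``boundary budget'' is concentrated in the small-boundary strata; this yields a sharper recurrence in which each halving step contributes only a $(\log z)^{O(1)}$ multiplicative factor, and iterating over $O(\log z)$ halving steps gives $(\log z)^{O(\log z)} = z^{O(\log\log z)}$. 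The running time bound $n^{O(\log k)} \cdot 2^k$ is consistent with this scheme: one $2^k \cdot \poly(n)$ cost at the top level from Theorem~\ref{thm: strong well linked}, plus $\poly(n)$ work per recursion level (max-flow computations to test good-router property, weak decompositions from Theorem~\ref{thm: weak well-linked}, and enumeration), multiplied across the $n^{O(\log k)}$ recursive invocations arising from the recursion tree.
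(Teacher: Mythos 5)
There is a genuine gap at the heart of your recursion. Your per-iteration subroutine applies the strong well-linked decomposition (Theorem~\ref{thm: strong well linked}) to the current non-terminal set $R'$ and recurses on its clusters, but in the setting of this theorem $R'$ is \emph{already} $1/3$-well-linked, so that decomposition simply returns $\sset=\set{R'}$: being $1/3$-well-linked is the stopping condition of the decomposition, while it is not sufficient for being a good router (the flow-cut gap is $\Theta(\log k)$). So in the problematic case --- $R'$ well-linked but not a good router --- your subroutine makes no progress at all, and even in general the decomposition only guarantees $|\out(Z)|\leq|\out(R')|$ for its clusters, so the recursion parameter need not shrink; the recurrence $f(z)\le cz^3f(z/2)$ you write down is not what your scheme yields, since the top stratum of the decomposition contains clusters with boundary up to $z$. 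Relatedly, your claim that ``failure of the top-level good-router test implies no witness exists, which \ldots forces recursive sub-instances to have geometrically shrinking boundary'' is exactly the missing content: nothing in your plan produces pieces with boundary at most $\ceil{k/2}$, which is what the halving (and hence the $k^{O(\log\log k)}$ bound) must rest on.

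The paper fills this gap with machinery you do not have: it defines \emph{contractible} sets (many vertices, boundary at most $\ceil{k/2}$) and proves Theorem~\ref{thm: can find good structures or contract further}, an algorithmic dichotomy that, when the contracted graph is still large, explicitly constructs either a contractible set or a type-1/type-2 witness, via repeated balanced cuts driven by max-flow computations and the sparsest-cut algorithm (Lemma~\ref{lemma: balanced-cut}), followed by weak well-linked decompositions of the resulting pieces and a terminal-routing max-flow step. Theorem~\ref{thm: from witnesses to good routers} then rules out the witness branch when $R$ is not a good router, so the algorithm always obtains a contractible set; procedure $\contract$ recurses only on such sets, where the induction hypothesis applies with at most $\ceil{k/2}$ terminals, and the size bound comes from stopping once $|V(G')\setminus\tset|\le F(k)$ together with the geometric-sum estimate $\sum_Z F(k_Z)\le 2^7F(k'')$ in Claim~\ref{claim: contraction procedure}. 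Your witness sketch also quietly assumes $\Omega(1)$-well-linked witness clusters with $r=\Theta(\log z)$; the efficiently computable (weak) decomposition only gives $\alpha_W=\Omega(1/\log^{3/2})$-well-linkedness, which is why the paper needs $r=O(\log^3k)$ and the two distinct witness types. The quality/correctness part of your proposal (via Claim~\ref{claim: a legal contracted graph is a sparsifier}) is fine, but without the contractible-set/witness dichotomy the algorithm neither terminates with the claimed size nor supports the claimed recurrence.
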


We defer the proof of Theorem~\ref{thm: flow sparsifier: well-linked terminals} to Section~\ref{subsec: Proof for well-linked terminals}, and complete the proof of Theorem~\ref{thm: flow sparsifier for unit capacities} here. We assume w.l.o.g that $G$ is a connected graph: otherwise, we compute a sparsifier for each of its connected components separately.
Our first step is to compute a strong well-linked decomposition $\sset$ of the set $V\setminus \tset$ of vertices, given by Theorem~\ref{thm: strong well linked}. Recall that each set $X\in \sset$ is $1/3$-well-linked, $|\sset|=O( k^3)$, and the decomposition can be computed in time $2^k\cdot \poly(n)$. For each edge $e$ in set $\bigcup_{X\in \sset}\out(X)$, we sub-divide $e$ by a new vertex $v_e$, and we let $G'$ denote the resulting graph. For each cluster $X\in \sset$, let $\tset_X=\set{v_e\mid e\in \out_G(X)}$, and let $G_X=G'[X\cup \tset_X]$. Notice that $|\tset_X|\leq k$, and all vertices in $\tset_X$ have degree $1$ in $G_X$.
For each cluster $X\in \sset$, we use Theorem~\ref{thm: flow sparsifier: well-linked terminals} on graph $G_X$ and the set $\tset_X$ of terminals, to find a restricted flow sparsifier $H_X$. Let $\cset_X$ be the corresponding collection of disjoint subsets of $V(G_X)\setminus \tset_X$, such that $H_X$ is obtained from $G_X$ by contracting every cluster in $\cset_X$. Let $\cset=\bigcup_{X\in \sset}\cset_X$. We obtain our final sparsifier $H$ by contracting every cluster $S\in \cset$ into a super-node $v_S$. Notice that since, for each cluster $X\in \sset$, graph $H_X$ is a legal contracted graph for $G_X$, each cluster $S\in \cset$ is a good router, and so $H$ is a legal contracted graph for $G$. From 
Claim~\ref{claim: a legal contracted graph is a sparsifier}, $H$ is a quality-$(2\eta^*)$ restricted sparsifier for $G$. It is easy to see that the running time of the algorithm is $2^{k}\cdot n^{O(\log k)}$. 
It now only remains to bound $|V(H)|$.

Recall that $|\sset|\leq O(k^3)$, and for each $X\in \sset$, $|\out(X)|\leq |\tset|=k$. Therefore,  $|V(H_X)|=k^{O(\log\log k)}$, and $|V(H)|=O(k^3)\cdot k^{O(\log\log k)}=k^{O(\log\log k)}$. 
This completes the proof of Theorem~\ref{thm: flow sparsifier for unit capacities}.
The proof of Corollary~\ref{corollary: general flow sparsifier} follows from Theorem~\ref{thm: flow sparsifier for unit capacities} using standard techniques, and it appears in Section~\ref{subsec: Proof of the corollary} of the Appendix.
We now focus on the proof of Theorem~\ref{thm: flow sparsifier: well-linked terminals}, which is the main technical contribution of this section.


\label{------------------------------------------Proof for well-linked terminals-----------------------------------------}
\subsection{Proof of Theorem~\ref{thm: flow sparsifier: well-linked terminals}}
\label{subsec: Proof for well-linked terminals}
\label{SUBSEC: PROOF FOR WELL-LINKED TERMINALS}

We prove the theorem by induction on the value of $k$.
Throughout the proof, we use two parameters: $r=O(\log^3k)$, and $k^*=2kr\log r=k\poly\log k$. We set the value $r$ to be a large enough integer, so that the following inequality holds:

\vspace{-5mm}

\begin{equation}
r>24\betaFCG(k^*)/\alpha_W(k^*)\label{eq: for r and kstar}
\end{equation}
\vspace{-5mm}

Notice that $\betaFCG(k^*)/\alpha_W(k^*)=O(\log^{5/2}(2kr\log r))=O(\log^3 k)+O(\log^3(r\log r))$, so $r=O(\log ^3k)$ is sufficient. 

Next, we define a function $F:\reals^+\rightarrow \reals^+$, where $F(k')$ will roughly serve as an upper bound on the size of the sparsifier for any graph $G$ with $k'$ terminals. Function $F$ is defined recursively, as follows. 
For $k'\leq 4$, $F(k')=1$. 
If $k'>4$ is an integral power of $2$, then
$F(k')=2^{16}\cdot r^3\log r\cdot F(k'/2)=O(\log^9k\log\log k)\cdot F(k'/2)$.
Otherwise, $F(k')=F(k'')$, where $k''$ is the smallest integral power of 
$2$ with $k''\geq k'$. Notice that for any integer $k'>4$, $F(k'/2)=F(\ceil{k'/2})$, and we will sometimes use these values interchangeably.

Notice that for all values $k'$, $F(k')=(\log k')^{O(\log k')}$, so $F(k)=k^{O(\log\log k)}$ as required. 
From now on, we focus on proving that if $G=(V,E)$ is a graph as in the theorem statement with $k$ terminals, then we can find, in time $n^{O(\log k)}\cdot 2^k$, a restricted  quality-$(2\eta^*)$ sparsifier $H$ for $G$, such that $|V(H)\setminus \tset|\leq F(k)$, and $H$ is a legal contracted graph for $G$. 

The proof is by induction on the values of $k$. If $k\leq 4$, then the set $R=V\setminus \tset$ is a good router, so we can let $\cset=\set{R}$, and return the corresponding contracted graph $H$ as our sparsifier, so $|V(H)\setminus\tset|=1$. Assume now that the claim holds for values $k'<k$, and we now prove it for $k$.

Notice that if the set $R=V\setminus \tset$ of vertices is a good router, then we can set $\cset=\set{R}$, and output a sparsifier $H$, obtained from $G$, after we contract the cluster $R$ into a super-node $v_R$. Therefore, we can assume from now on that $R$ is not a good router. The main idea of the algorithm is as follows. Throughout the algorithm, we maintain a collection $\cset$ of disjoint good routers in graph $G$ and the corresponding legal contracted graph $G'$. At the beginning, $\cset=\emptyset$, and $G'=G$. While the number of vertices in $V(G')\setminus \tset$ is greater than $F(k)$, we perform an iteration, in which we obtain a new collection $\cset'$ of disjoint good routers, such that the corresponding graph $G''$ contains strictly fewer vertices than $G'$. Once the number of vertices in $V(G')\setminus \tset$ falls below $F(k)$, we stop and output $G'$ as our sparsifier.

Notice that if $G'$ is a legal contracted graph for $G$, then each edge of $G'$ corresponds to some edge of $G$. We do not distinguish between these edges. For example, if $S\sse V(G')$ is any subset of vertices, and $S'\sse V(G)$ is obtained from $S$ by replacing each super-node $v_C\in S$ by the vertices of $C$, then we view $\out_{G'}(S)=\out_G(S')$.
We need the following definition.

\begin{definition}
Let $G'$ be the current legal contracted graph, and let $S\sse V(G')\setminus \tset$ be any subset of non-terminal vertices, such that $G'[S]$ is connected. We say that $S$ is a \emph{contractible set} iff $|\out(S)|\leq \ceil{k/2}$, and $|S|>128F(|\out_{G'}(S)|)$.
\end{definition}

Let $G'$ be the current contracted graph, and let $\cset$ be the corresponding collection of good routers. 
Suppose we can find a contractible set $S$ of vertices in the current graph $G'$, with $|\out(S)|=k'$. We show that in this case we can compute a smaller legal contracted graph $G''$. We denote this procedure by $\contract(G',S)$. Procedure $\contract(G',S)$ is executed as follows. 
Let $\cset_S\sse \cset$ contain all clusters $C$ with $v_C\in S$, and let
 $S'$ be the subset of vertices of the original graph $G$ obtained from $S$ by replacing each super-node $v_C\in \cset_S$ with the vertices of $C$. Clearly, $|S'|>128F(k')$ still holds, $G[S']$ is a connected graph, and $|\out_G(S')|=k'$. 
Let $\sset$ be the strong well-linked decomposition of $S'$ given by Theorem~\ref{thm: strong well linked}. 
We now process the clusters in $\sset$ one by one.
Consider some cluster $Z\in \sset$. We construct a new graph $G_Z$ from graph $G$, by first sub-dividing every edge $e\in \out_G(Z)$ by a vertex $v_e$, setting $\tset_Z=\set{v_e\mid e\in \out_G(Z)}$, and we let $G_Z$ be the sub-graph of the resulting graph induced by $Z\cup \tset_Z$. Let $k_Z=|\tset_Z|=|\out_G(Z)|$, and observe that $k_Z\leq k'\leq \ceil{k/2}<k$. Recall that $G_Z$ is $1/3$-well-linked for $\tset_Z$, so by the induction hypothesis, we can find a sparsifier $H_Z$ for $(G_Z,\tset_Z)$, with $|V(H_Z)\setminus \tset_Z|\leq F(k_Z)$. Let $\cset_Z$ be the collection of the good routers corresponding to $H_Z$. Recall that each cluster $C\in \cset_Z$ only contains vertices of $Z$. 
Let $\cset'=(\cset\setminus\cset_S)\cup\left (\bigcup_{Z\in \sset}\cset_Z\right )$ be the new collection of good routers in graph $G$, and let $G''$ be the contracted graph corresponding to $\cset'$. Graph $G''$ is the output of procedure $\contract(G',S)$. In the next claim we show that $|V(G'')|<|V(G')|$. 

\begin{claim}\label{claim: contraction procedure}
Let $G''$ be the output of Procedure $\contract(G',S)$. Then $|V(G'')|<|V(G')|$. 
\end{claim}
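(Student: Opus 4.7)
The plan is a vertex-counting argument. First I would partition $V(G') = \tset \sqcup S \sqcup U$, where $U = V(G') \setminus (S \cup \tset)$. The procedure $\contract(G',S)$ only alters vertices of $G'$ that lie in $S$: each super-node $v_C \in \cset_S$ is expanded in $G$ and then re-grouped by the clusters of $\bigcup_{Z} \cset_Z$, while the clusters of $\cset \setminus \cset_S$ and the uncontracted non-terminal vertices outside $S$ remain unchanged. Hence $\tset \cup U \subseteq V(G'')$, and $|V(G')| - |V(G'')| = |S| - X$, where $X$ denotes the number of non-terminal vertices of $G''$ that arose from $S'$. It thus suffices to prove $X < |S|$.

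Second, I would decompose $X$ per cluster of $\sset$. For each $Z \in \sset$, the non-terminal vertices of $G''$ coming from $Z$ are precisely the super-nodes $\{v_C : C \in \cset_Z\}$ together with the vertices of $Z$ left uncontracted by $\cset_Z$, which is exactly $V(H_Z) \setminus \tset_Z$. Since the sets $Z \in \sset$ are pairwise disjoint, $X = \sum_{Z \in \sset}(|V(H_Z)| - k_Z)$. Now $k_Z \leq k' \leq \lceil k/2 \rceil < k$, and $G_Z$ satisfies the hypotheses of Theorem~\ref{thm: flow sparsifier: well-linked terminals} (its non-terminal set $Z$ is $1/3$-well-linked by the strong well-linked decomposition, and every vertex of $\tset_Z$ has degree $1$ because it was introduced by sub-dividing an edge), so the inductive hypothesis gives $|V(H_Z)| - k_Z \leq F(k_Z)$. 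It therefore suffices to prove $\sum_{Z \in \sset} F(k_Z) < |S|$.

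The third step exploits the refined size bounds from Theorem~\ref{thm: strong well linked}: letting $\sset_i$ denote the clusters with $k'/2^i < k_Z \leq k'/2^{i-1}$, we have $|\sset_i| \leq 2^{3i+3}$. Setting $\alpha = 2^{16} r^3 \log r$, the recursion $F(k''/2) = F(k'')/\alpha$ together with the monotonicity of $F$ gives $F(k_Z) \leq F(k')/\alpha^{i-1}$ for every $Z \in \sset_i$ (any ceilings introduced when $k'$ is not a power of two contribute at most a factor of $2$ inside $F$'s argument, absorbed by the huge ratio $\alpha$). Summing,
\[
\sum_{Z \in \sset} F(k_Z) \;\leq\; \sum_{i \geq 1} 2^{3i+3}\,\frac{F(k')}{\alpha^{i-1}} \;=\; 64\,F(k')\sum_{j \geq 0}\!\left(\frac{8}{\alpha}\right)^{\!j} \;\leq\; \frac{64\,F(k')}{1 - 8/\alpha} \;\leq\; 65\,F(k'),
\]
where the last inequality uses $\alpha \geq 2^{16}$. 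Combining with the contractibility hypothesis $|S| > 128\,F(k')$ yields $X \leq 65\,F(k') < |S|$, hence $|V(G'')| < |V(G')|$.

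The main obstacle I anticipate is the careful bookkeeping with ceilings when $k'$ or $k'/2^{i-1}$ is not a power of two, since $F$'s recursion is stated only on powers of two; this is handled by noting that rounding up inside $F$'s argument at most doubles the argument and is dwarfed by the $\alpha^{i-1}$ factor. A second small detail is ensuring that the sets $Z \in \sset$ really live entirely in the non-terminal vertex set of $G$, so that each $G_Z$ is a legitimate input for the inductive call — this holds because $S \subseteq V(G') \setminus \tset$ and therefore $S' \subseteq V \setminus \tset$, so no $Z$ contains a terminal.
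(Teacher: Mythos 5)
Your proof is correct and follows essentially the same strategy as the paper's: reduce the claim to bounding $\sum_{Z\in\sset} F(k_Z)$, invoke the level-wise counts $|\sset_i|\le 2^{3i+3}$ from the strong well-linked decomposition, and observe that the recursion $F(k''/2)=F(k'')/\alpha$ with $\alpha=2^{16}r^3\log r$ makes the resulting sum a rapidly converging geometric series dominated by the $i=1$ term. The only differences are cosmetic: you track the per-cluster contribution $|V(H_Z)\setminus\tset_Z|$ explicitly rather than writing it directly, you index the levels by $k'$ (which matches the statement of Theorem~\ref{thm: strong well linked}) whereas the paper passes through the rounded value $k''$, and your geometric-series estimate yields the slightly sharper constant $65$ in place of the paper's $128$ — all of which are within the slack afforded by the hypothesis $|S|>128\,F(k')$.
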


\begin{proof}
From the definition of $G''$, 

\[\begin{split}
V(G'')&= |V(G')|-|S|+\sum_{Z\in \sset}|H_Z\setminus \tset_Z|\\ & \leq  |V(G')|-|S|+\sum_{Z\in \sset}F(k_Z).\end{split}\]

Let $k'=|\out_{G'}(S)|$, and let $k''$ be the smallest power of $2$, such that $k''\geq k'$.
Recall that $|S|>128F(k')=128F(k'')$, so in order to show that $|V(G'')|<|V(G')|$, it is enough to show that $\sum_{Z\in \sset}F(k_Z)\leq 128F(k'')$. 
For each $i: 1\leq i\leq \log k''+1$, let $\sset_i\subseteq \sset$ be the collection of subsets $Z\in \sset$ with $k''/2^{i}< k_Z\leq k''/2^{i-1}$. Then from Theorem~\ref{thm: strong well linked}, $|\sset_i|\leq 2^{3i+3}$ for all $i$.
Therefore,

\[
\begin{split}
\sum_{Z\in \sset}F(k_Z)&\leq \sum_{i=1}^{\log k''+1}|\sset_i|\cdot F(k''/2^{i-1})\\
&\leq \sum_{i=1}^{\log k''+1}2^{3i+3}\cdot F(k''/2^{i-1})\end{split}\]

Let $T(i)=2^{3i-3}F(k''/2^{i-1})$. Then 
\[T(i)=8\cdot 2^{3i-6}F(k''/2^{i-1})<\half \cdot 2^{3i-6} F(k''/2^{i-2})=\half T(i-1).\]

 Therefore, values $T(i)$ form a geometrically decreasing sequence, and $\sum_{i=1}^{\log k''+1}2^{3i+3}\cdot F(k''/2^{i-1})<2^7\cdot T(1)= 2^{7}F(k'')$.
\end{proof}

We now proceed to define two structures, that we call a type-1 and a type-2 witnesses. 
We show that if $G'$ is a legal contracted graph for $G$, and $G'$ contains either a type-1 or a type-2 witness, then $R=V(G)\setminus \tset$ must be a good router. Finally, we show an algorithm, that, given a legal contracted graph $G'$ with $|V(G')\setminus \tset|>F(k)$, either finds a contractible subset $S\sse V(G')\setminus \tset$ of vertices in $G'$, or returns a type-1 or a type-2 witness in $G'$. Since we have assumed that $R$ is not a good router, whenever we apply this algorithm to the current legal contracted graph $G'$, we will obtain a contractible subset $S$ of vertices, and by using procedure $\contract(G',S)$, we can obtain a new legal contracted graph $G''$ with $|V(G'')|<|V(G')|$. We continue this process until $|V(G')\setminus \tset|\leq F(k)$ holds, and output $G'$ as our sparsifier then.
We now proceed to define the two types of witnesses.

 \begin{definition}
 Let $G'$ be a legal contracted graph, and let $\fset=\set{S_1',\ldots,S_r'}$ be a family of disjoint subsets of $V(G')\setminus \tset$. We say that $\fset$ is a type-1 witness, iff for each $1\leq j\leq r$, $S_j'$ is $\alpha_W(k^*)$-well-linked in graph $G'$, and there is a collection $\pset_j'$ of $\ceil{k/2}$ edge-disjoint paths in graph $G'$, where each path connects a distinct terminal in $\tset$ to a distinct edge in $\out_{G'}(S_j')$.
\end{definition}

\begin{definition}
Let $\tA\sse V(G')\setminus \tset$ be any subset of non-terminal vertices. We say that $\tA$ is a type-2 witness iff we are given a subset $\tE\sse \out_{G'}(\tA)$ of $r\cdot \ceil{k/4}$ edges, such that $\tA$ is $\alpha_W(r\cdot\ceil{k/4})$-well linked for $\tE$, and we are given a partition $E_1,\ldots,E_r$ of $\tE$ into $r$ disjoint subsets of size $\ceil{k/4}$ each, and a subset $\tset^*\sse \tset$ of $\ceil{k/4}$ terminals, such that for each $1\leq j\leq r$, there  is a collection $\pset_j':\tset^*\sconnect_2E_j$ of paths in graph $G'$.
\end{definition}

(Here we say that $\tA$ is $\alpha$-well-linked for $\tE\sse \out_{G'}(\tA)$ iff for any partition $(X,Y)$ of $\tA$, if we denote $T_X=\tE\cap \out_{G'}(X)$, and $T_Y=\tE\cap \out_{G'}(Y)$, then $|E_{G'}(X,Y)|\geq \alpha\cdot \min\set{|T_X|,|T_Y|}$.)

We start by showing that if a legal contracted graph $G'$ contains a type-1 witness or a type-2 witness, then the set $R$ is  good router.

\begin{theorem}\label{thm: from witnesses to good routers}
If any legal contracted graph $G'$ contains a type-1 witness $\fset$, or a type-2 witness $\tA$, then $R=V(G)\setminus \tset$ is a good router.
\end{theorem}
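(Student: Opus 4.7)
My plan is as follows. Since the hypothesis of Theorem~\ref{thm: flow sparsifier: well-linked terminals} already gives that $R=V(G)\setminus\tset$ is $1/3$-well-linked, it remains to build a concurrent flow in $R$ that sends $1/k$ units between every pair of edges of $\out(R)$ (equivalently, every pair of terminals, since each terminal has degree $1$) at congestion at most $\eta^*=34$. I will actually build the flow in the legal contracted graph $G'$; this is enough because every contracted cluster $C\in\cset$ is a good router and can internally realize any $1$-restricted boundary demand at congestion $\le 2\eta^*$, so a concurrent flow in $G'$ lifts to one in $G$ with only a constant-factor blow-up in congestion.

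For the type-$1$ case I would, for each $j$, combine $S_j'$ with the edge-disjoint paths $\pset_j'$. The $\alpha_W(k^*)$-well-linkedness of $S_j'$ and the flow-cut gap $\betaFCG(k^*)$ give, inside $S_j'$, a concurrent flow of rate $\Omega(\alpha_W(k^*)/(\betaFCG(k^*)\cdot k^*))=\Omega(1/(rk))$ between every pair of edges of $\out_{G'}(S_j')$ at unit congestion, using the inequality $r>24\betaFCG(k^*)/\alpha_W(k^*)$ and $|\out_{G'}(S_j')|\le k^*$. Concatenating with $\pset_j'$ yields a rate-$\Omega(1/(rk))$ concurrent flow between every pair of terminals in the endpoint set $\tset_j\sse\tset$ of $\pset_j'$, supported inside $S_j'\cup\pset_j'$. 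Summed over $j$, pair $(t,t')$ receives flow proportional to $n(t,t')=|\{j:\{t,t'\}\sse\tset_j\}|$; by double counting $\sum_j\binom{|\tset_j|}{2}\ge r\binom{\ceil{k/2}}{2}$, this is $\Omega(1/k)$ on average, and the $1$-restricted residual demand on under-covered pairs is routed directly through $R$ via its $1/3$-well-linkedness at constant extra congestion. External congestion off the routers remains bounded because each edge is used by at most one path per $\pset_j'$ and each such path carries $\Theta(1/r)$ flow.

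For the type-$2$ case, each terminal in $\tset^*$ has $r$ independent ports into $\tA$, one in each $E_j$, via the congestion-$2$ bundles $\pset_j'$. I would split the desired demand $1/k$ between every pair $(t,t')\in\tset^*\times\tset^*$ evenly across the $r^2$ port-pair combinations, so that each inside-$\tA$ port pair carries $\Theta(1/(r^2k))$; by $\tA$ being $\alpha_W(rk/4)$-well-linked, together with $rk/4\le k^*$ and the choice of $r$, the flow-cut gap inside $\tA$ realizes this at constant congestion, while the external congestion is constant because the per-port inflow is $O(1/r)$ and each $\pset_j'$ has congestion only $2$. Since $|\tset^*|=\ceil{k/4}$ is a constant fraction of $\tset$, pairs with at least one endpoint outside $\tset^*$ are routed through $\tset^*$ as an intermediate hub via $R$'s $1/3$-well-linkedness, with the induced demand on $\tset^*$ remaining $1$-restricted and hence absorbable at constant extra congestion.

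The hard part will be the congestion bookkeeping: the router-internal flow, the external paths $\pset_j'$, the fall-back routing through $R$'s $1/3$-well-linkedness, and the constant-factor lift through the contracted good routers in $\cset$ must together fit under the fixed budget $\eta^*=34$. The slack comes entirely from the choice $r=O(\log^3 k)$ satisfying $r>24\betaFCG(k^*)/\alpha_W(k^*)$, which is precisely what absorbs the polylogarithmic overhead of the flow-cut gap divided by well-linkedness at scale $k^*$; each individual layer then contributes only a small constant, and the verification reduces to routine arithmetic to fit inside the budget $34$.
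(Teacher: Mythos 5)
Your plan diverges from the paper's in a way that creates a real gap: you propose to build the concurrent flow in the contracted graph $G'$ and then ``lift'' it to $G$ at a constant factor, justified by the clusters of $\cset$ being good routers. But the lift through a good router costs a factor of $2\eta^*=68$: a flow of congestion $c'$ in $G'$ induces a $c'$-restricted demand on $\out(S)$ for each cluster $S\in\cset$, and the good-router property routes this internally at congestion $2\eta^* c'$. Since your flow in $G'$ already incurs congestion at least $1$ (the internal flows inside the $S_j'$ realize the flow-cut gap with congestion $1$), the lifted flow in $G$ would have congestion at least $68$, which cannot fit under the budget $\eta^*=34$. Using the clusters' $1/3$-well-linkedness instead of the good-router property doesn't help either: routing an arbitrary $c'$-restricted \emph{fractional} demand inside a $1/3$-well-linked cluster via Property~P\ref{Property: any matching can be routed with low congestion} costs $O(c'\log k)$, not a constant. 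The paper avoids this trap by never lifting a flow at all: it uses Claims~\ref{claim: well-linkedness of sets in G} and~\ref{claim: paths in contracted and original graphs} to transfer the \emph{witness structures} (well-linkedness of $S_j'$ or $\tA$, and the path collections $\pset_j'$) from $G'$ down to $G$ with only a factor-$3$ loss, and then constructs the concurrent flow entirely in $G$.

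Your Type-1 argument has a second gap. You cover a pair $(t,t')$ only if both $t,t'\in\tset_j$ for some $j$, and then argue by averaging ($\sum_j\binom{|\tset_j|}{2}\geq r\binom{\lceil k/2\rceil}{2}$) that pairs get $\Omega(1/k)$ flow on average, routing the residual demand for under-covered pairs through $R$'s $1/3$-well-linkedness. But the averaging bound permits a constant fraction of pairs to receive zero flow, and the residual demand is then a $\Theta(1)$-restricted fractional demand set; routing that through $R$ using its $1/3$-well-linkedness and the flow-cut gap costs $\Theta(\log k)$ congestion, not constant. The paper sidesteps this entirely: for each $j$, it routes the terminals in $\tset\setminus\tset_j$ into $\tset_j$ via a bounded-congestion path collection $\pset_j^*:\tset\setminus\tset_j\sconnect_3\tset_j$ (obtained from $G$'s $1/3$-well-linkedness for $\tset$, which is a max-flow/min-cut argument at integral congestion $3$, not a flow-cut-gap argument), so that \emph{every} pair of terminals exchanges $1/(kr)$ flow in each $F_j^*$, and summing over $j$ gives exactly $1/k$ for every pair with no residual. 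Your Type-2 sketch is closer to the paper's hub argument through $\tset^*$, but it inherits the same lift problem and is vague about why the demand induced on $\tset^*$ (which is $4$-restricted, not $1$-restricted, since $|\tset\setminus\tset^*|\leq 3\lceil k/4\rceil$) is absorbed at constant congestion; the paper handles this explicitly by decomposing $\tset\setminus\tset^*$ into three blocks and using three path collections $\qset_i:\tset_i\sconnect_3\tset^*$.
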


\begin{proof}
Recall that $R$ is $1/3$-well-linked. So we only need to prove that if $G'$ contains a type-1 or a type-2 witness, then every pair of terminals can simultaneously send $1/k$ flow units to each other with congestion at most $\eta^*$. We need the following two simple claims, whose proofs appear in the Appendix.


\begin{claim}\label{claim: well-linkedness of sets in G}
Let $G'$ be a legal contracted graph, $S'\sse V(G')\setminus \tset$, and $E'\sse \out_{G'}(S')$, such that $S'$ is $\alpha$-well-linked for $E'$, for any $\alpha<1$. Let $S\sse V(G)\setminus\tset$ be the set of vertices obtained from $S'$, after we replace every super-node $v_C\in S'$ with the set $C$ of vertices. Then $S$ is $\alpha/3$-well-linked for $E'$ in graph $G$.
\end{claim}


\begin{claim}\label{claim: paths in contracted and original graphs}
Let $G'$ be a legal contracted graph for $G$, $S'\sse V(G')\setminus \tset$ any subset of non-terminal vertices in $G'$, and $E'\sse \out_{G'}(S')$ any subset of edges, and assume further that we are given a subset $\tset'\sse \tset$ of terminals with $|\tset'|=|E'|$, such that there is a collection $\pset':\tset'\sconnect_{\eta}E'$ of paths in $G'$. Let $S\sse V(G)\setminus \tset$ be the set of vertices obtained from $S'$ after we replace every super-node $v_C$ by the set $C$ of vertices, and consider the same subset $E'\sse \out_{G}(S)$ of edges. Then there is a set $\pset:\tset'\sconnect_{3\eta}E'$ of paths in graph $G$.
\end{claim}

\paragraph{Type-1 Witnesses}
Assume first that graph $G'$ contains a type-1 witness $\fset=\set{S_1',\ldots,S_r'}$. Fix some $1\leq j\leq r$, and consider the subset $S_j'$ of vertices. Let $S_j$ be the corresponding subset of vertices of the original graph $G$, after we un-contract each super-node $v_C\in S_j$, replacing it with the corresponding set $C$ of vertices. 
Let $\tset_j\sse \tset$ be the subset of $\ceil {k/2}$ terminals that serve as endpoints of the paths in $\pset_j'$, and let $E_j\sse \out_{G'}(S_j')$ be the subset of $\ceil{k/2}$ edges where these paths terminate. 
From Claim~\ref{claim: well-linkedness of sets in G}, set $S_j$ is $\alpha_W(k^*)/3$-well-linked. 
Therefore, every pair $(e,e')\in E_j$ of edges can simultaneously send to each other at least  $\frac{1}{\ceil{k/2}}\cdot\frac{\alpha_W(k^*)}{3\betaFCG(k^*)}\geq \frac{8}{kr}$ flow units with no congestion in $G[S_j]$. (We have used Equation~\ref{eq: for r and kstar}). Denote this flow by $F_j$.
From Claim~\ref{claim: paths in contracted and original graphs}, there is a set $\pset_j: \tset_j\sconnect_3 E_j$ of paths in graph $G$. Let $\tset'_j=\tset\setminus\tset_j$. Then $|\tset_j'|\leq |\tset_j|\leq k/2$. Since graph $G$ is $1/3$-well-linked for $\tset$, there is a set $\pset_j^*: \tset'_j\sconnect_3\tset_j$ of paths in graph $G$. We now define a flow $F^*_j$, as follows: each terminal $t\in \tset_j'$ sends $1/r$ flow units to some terminal in $\tset_j$, along the path in $\pset^*_j$ that originates at $t$. Next, each terminal $t'\in \tset_j$ sends $\frac{2-1/k} r$ flow units to some edge in $E_j$, using the path in $\pset_j$ that originates at $t'$. 
Each edge in $E_j$ now receives $\frac{2-1/k} r$ flow units, and uses the flow $F_j$ to spread this flow evenly among the edges of $E_j$. This defines the flow $F_j^*$, where every pair $(t,t')$ of terminals sends $\frac{1}{kr}$ flow units to each other. The congestion of the flow $F_j^*$ is computed as follows: the congestion due to flow on paths in $\pset_j^*$ is at most $3/r$; the congestion due to flow on paths in $\pset_j$ is at most $6/r$, and the congestion due to the flow $F_j$ is at most $1$. Notice that flow $F_j$ is entirely contained inside $G[S_j]$.

The final flow $F^*$ is simply the union of flows $F_j$ for $1\leq j\leq r$. Clearly, in $F^*$, every pair of terminals sends $1/k$ flow units to each other. It is easy to see that the flow congestion is bounded by $10$.

\paragraph{Type-2 Witnesses}
Assume now that we are given a type-2 witness $\tA$, and let $A\sse V(G)\setminus\tset$ be the subset of vertices obtained from $\tA$, after we replace each super-node $v_C$ with the set $C$ of vertices. From Claim~\ref{claim: well-linkedness of sets in G}, set $A$ is $\frac 1 3 \alpha_W(r\ceil{k/4})$-well-linked for the subset $\tE\sse \out_G(A)$ of edges. Therefore, every pair $(e,e')\in \tE$ of edges can send $\frac{1}{r\cdot \ceil{k/4}}\cdot \frac{\alpha_W(r\cdot \ceil{k/4})}{3\betaFCG(r\cdot \ceil{k/4})}>\frac{16}{kr^2}$ flow units to each other with no congestion in graph $G$.  (We have used Equation~\ref{eq: for r and kstar} and the fact that $k^*>r\cdot \ceil{k/4}$). Let $F$ denote this flow. Recall that for each $1\leq j\leq r$, we have a collection $\pset'_j: \tset^*\sconnect_2E_j$ of paths in graph $G'$. From Claim~\ref{claim: paths in contracted and original graphs}, there is a set $\pset_j: \tset^*\sconnect_6 E_j$ of paths in graph $G$.
Finally, partition $\tset\setminus\tset^*$ into three subsets, $\tset_1,\tset_2,\tset_3$ of size at most $\ceil{k/4}$ each. Since graph $G$ is $1/3$-well-linked, for each $1\leq i\leq 3$, there is a set $\qset_i:\tset_i\sconnect_3\tset^*$ of paths in $G$. Let $\qset$ denote the following set of paths: start with $\qset_1\cup \qset_2\cup \qset_3$, and add, for each terminal $t\in \tset^*$, an empty path $Q_t$ connecting $t$ to itself. Then set $\qset$ contains, for each terminal $t\in \tset$, a path $Q_t$, connecting $t$ to some terminal $t'\in \tset^*$, such that for each terminal $t'\in \tset^*$, there are exactly four terminals in $\tset$ whose path $Q_t$ terminates at $t'$. 
Notice that the paths in $\qset$ cause congestion at most $9$ in $G$. We are now ready to define our final flow $F^*$. First, every terminal in $\tset$ sends one flow unit to some terminal in $\tset^*$, along the path $Q_t\in \qset$. Next, for each $1\leq j\leq r$, each terminal $t\in \tset^*$, sends $\frac{4} r$ flow units along the path in $\pset_j$ that originates at $t$. Notice that each edge in $\tE$ now receives $\frac{4} r$ flow units. Finally, we use the flow $F$, to spread the flow that every edge receives evenly among the edges in $\tE$, so every pair of edges in $\tE$ needs to send $\frac{4}{r}\cdot \frac{1}{r\cdot \ceil{k/4}}\leq \frac{16}{kr^2}$ flow units to each other. This finishes the definition of the flow $F^*$. Clearly, every pair of terminals sends $1/k$ flow units to each other. We now analyze the congestion due to this flow. The paths in $\qset$ cause congestion $9$, and the paths in $\pset_1,\ldots,\pset_r$ cause congestion at most $24$ altogether 
(each set $\pset_j$ of paths originally caused congestion $6$, and we send $4/r$ flow units along each path in $\pset_j$). Finally, flow $F$ causes congestion at most $1$. Altogether, flow $F^*$ causes congestion at most $34$.
 \end{proof}

The next theorem provides an algorithm that, given any legal contracted graph $G'$, either finds a contractible subset of vertices in $G'$, or finds a witness of type $1$ or $2$ in $G'$.
\begin{theorem}\label{thm: can find good structures or contract further}
Let $G'$ be any legal contracted graph, and assume that $|V(G')\setminus \tset|>F(k)$. Then there is an efficient algorithm that finds either a contractible subset $S'$ of vertices, or a type-1 witness $\fset$, or a type-2 witness $\tA$ in graph $G'$.
\end{theorem}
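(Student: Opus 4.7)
The plan is to perform an iterative well-linked decomposition of $B = V(G') \setminus \tset$ and greedily extract witness candidates, with the recursive growth of $F$ driving the case analysis. I would first apply the strong well-linked decomposition of Theorem~\ref{thm: strong well linked} to $B$, obtaining layers $\sset_1, \sset_2, \ldots$ of clusters with $|\sset_i|\leq 2^{3i+3}$ and $|\out(S)|\in (k/2^i, k/2^{i-1}]$ for $S\in \sset_i$. If any $S$ with $|\out(S)| \leq \ceil{k/2}$ satisfies $|S|>128F(|\out(S)|)$, return it as the contractible set. Otherwise, the definition $F(k)=2^{16}r^3\log r \cdot F(k/2)$ combined with the layer-size bound forces the total size of the clusters in layers $i\geq 2$ to be at most $O(F(k)/(r^3\log r))$, so almost all of the $>F(k)$ non-terminal vertices concentrate in the $\leq 64$ big-boundary clusters of $\sset_1$; we then work inside these clusters, recursively decomposing if necessary to localize the argument.

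I would then attempt to construct a type-1 witness greedily. For $j=1,\ldots,r$, I look for a subset $S_j'$ disjoint from $S_1',\ldots,S_{j-1}'$ that is $\alpha_W(k^*)$-well-linked in $G'$ and admits $\ceil{k/2}$ edge-disjoint paths from distinct terminals to distinct edges of $\out_{G'}(S_j')$. A candidate $S_j'$ is produced by a weak well-linked decomposition of the residual non-terminal vertices and selection of a big-boundary cluster; the well-linkedness requirement is automatic since $k^* > k$ and $\alpha_W$ is decreasing. Existence of the $\ceil{k/2}$ edge-disjoint paths is verified by a max-flow computation between $\tset$ and $\out_{G'}(S_j')$ in $G'$, using the $1/3$-well-linkedness of $B$ for $\tset$ (inherited from $R$ in $G$) to supply the requisite terminal-side connectivity. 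Successful completion of $r$ iterations yields $\fset=\{S_1',\ldots,S_r'\}$ as the type-1 witness.

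If the construction fails at some $j<r$, the failure is witnessed, by Menger's theorem, by a small cut separating the residual non-terminal core from $\tset$ in $G'$; let $\tA \sse V(G') \setminus \tset$ be its non-terminal side. Although the cut to the terminal side has capacity $<\ceil{k/2}$, the total $|\out_{G'}(\tA)|$ is large because most of its boundary goes into the previously extracted clusters $S_1',\ldots,S_{j-1}'$. A further weak well-linked decomposition of $\tA$, together with the same size accounting used above, isolates a subset $\tE \sse \out_{G'}(\tA)$ of size $r\ceil{k/4}$ for which $\tA$ is $\alpha_W(r\ceil{k/4})$-well-linked in $G'$. A max-flow argument in $G'$, driven again by the $1/3$-well-linkedness of $B$, then produces a flow of $rk/4$ units from a common subset $\tset^*\sse \tset$ of size $\ceil{k/4}$ to $\tE$ with congestion at most $2r$; decomposing this flow into $r$ integral routings of $k/4$ units each yields the required path systems $\pset_j' : \tset^* \sconnect_2 E_j$ for a uniform partition $\tE = E_1 \cup \cdots \cup E_r$.

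The main obstacle is the tight quantitative dichotomy: one has to verify that failure of the greedy type-1 extraction at step $j<r$ really does force the bottleneck $\tA$ to contain at least $r\ceil{k/4}$ boundary edges, to be $\alpha_W(r\ceil{k/4})$-well-linked, and simultaneously to support the $r$ uniform path systems from a single common $\tset^*$. The parameters $r=O(\log^3 k)$ (chosen via Equation~\ref{eq: for r and kstar}) and $F(k)=2^{16}r^3\log r\cdot F(k/2)$ are calibrated precisely for this accounting—the recursive growth of $F$ ensures the non-contractible residual is large enough that, absent $r$ disjoint type-1 witnesses, a type-2 witness with the required well-linkedness and routing structure must exist.
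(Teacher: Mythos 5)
Your proposal does not match the paper's argument, and the places where it diverges are exactly where it has real gaps. On the type-1 side, your candidates $S_j'$ are taken from a weak well-linked decomposition of an uncontrolled ``residual'' set, and two essential facts are never established. First, you never bound $|\out|$ of the set you decompose by $k^*$; the value $k^*=2kr\log r$ is not generic slack but comes from the paper's $\ceil{\log r}$ rounds of \emph{balanced} partitioning, each adding at most $rk$ cut edges, which is what guarantees the clusters are $\alpha_W(k^*)$-well-linked. Your remark that the well-linkedness is ``automatic since $k^*>k$ and $\alpha_W$ is decreasing'' has the monotonicity backwards: decomposing a set with boundary $z>k^*$ only gives $\alpha_W(z)<\alpha_W(k^*)$. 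Second, you never prove the selected cluster is large (the paper proves $|S_j'|>2^7F(k/2)$ via a separate counting claim over the weak decomposition); without that, when the max-flow from $\tset$ to $\out(S_j')$ has value below $\ceil{k/2}$, the cut side containing $S_j'$ need not be contractible, and your dichotomy collapses. (Also, opening with a strong well-linked decomposition of all of $V(G')\setminus\tset$ costs $2^k\poly(n)$, whereas this theorem claims an efficient algorithm; the paper uses only the weak decomposition, \algsc, and max-flow computations here.)

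The type-2 side is the more serious gap. The witness requires, for \emph{each} $j$, a separate system $\pset_j':\tset^*\sconnect_2 E_j$ with congestion $2$ and the one-to-one property at $E_j$; splitting a single flow of value $r\ceil{k/4}$ and congestion $2r$ into $r$ pieces does not yield congestion $2$ per piece, and the constant $\eta^*=34$ in Theorem~\ref{thm: from witnesses to good routers} depends on that per-system bound. Likewise, your claims that the Menger cut side $\tA$ has at least $r\ceil{k/4}$ boundary edges and is $\alpha_W(r\ceil{k/4})$-well-linked for some chosen $\tE$ are unsupported: a weak well-linked decomposition certifies well-linkedness of its clusters with respect to their entire boundaries, not of $\tA$ with respect to a designated edge subset. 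In the paper the type-2 candidate is one side $X$ of a balanced partition of a large set $S$ with $|E(X,Y)|>rk$: the sets $E_j$ are subsets of $\Gamma=E(X,Y)$, the path systems come from a max-flow from $\tset$ to $\Gamma$ followed by edge-disjoint routings from $\Gamma_1$ to $\Gamma_j$ inside $X$ (concatenation gives congestion $2$), and well-linkedness for $\bigcup_j\Gamma_j$ is certified by running \algsc; every failure of a step produces either a contractible set or a new balanced partition with strictly smaller $|E(X,Y)|$, and this decreasing potential is what makes the procedure terminate. Your greedy scheme has no analogous progress measure: when the extraction fails at some $j<r$ you assert that a type-2 witness ``must exist'' rather than exhibiting one, so the claimed dichotomy is not proved.
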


\begin{proof}
Since we only work with graph $G'$ in this proof, we omit the sub-script $G'$ in our notation, and use $\out(S)$ to denote $\out_{G'}(S)$.
Let $S\sse V(G')\setminus \tset$ be any subset of non-terminal vertices. We say that a partition $(X,Y)$ of $S$ is \emph{balanced}, iff $|X|,|Y|\geq |S|/4$. We start with the following lemma.

\begin{lemma} \label{lemma: balanced-cut}
Let $S\sse V(G')\setminus \tset$ be any subset of non-terminal vertices with $|S|>2^9\cdot F(k/2)$. Then there is an efficient algorithms that either finds a type-2 witness $\tilde A$, or a contractible set $S'$ of vertices in $G'$, or a balanced partition $(X,Y)$ of $S$ with $|E(X,Y)|\leq rk$.
\end{lemma}

\begin{proof}
Let $(X,Y)$ be any balanced partition of $S$, and assume w.l.o.g. that $|X|\geq |Y|$. If $|E(X,Y)|\leq rk$, then we stop and output the partition $(X,Y)$. Otherwise, we perform a number of iterations. In each iteration, we are given as input a balanced partition $(X,Y)$ of $S$ with $|X|\geq |Y|$ and $|E(X,Y)|> rk$, and we
try to establish whether $X$ is a type-2 witness. If this is not the case, then we will either find a contractible subset 
$S'$ of vertices in $G'$, or we will produce a new balanced partition $(X',Y')$ of $S$, with $|E(X',Y')|<|E(X,Y)|$. 
Therefore, after at most $|E(G')|$ steps, we are guaranteed to find a type-2 witness $\tilde A$, or a contractible set $S'$ of vertices, or a balanced partition $(X,Y)$ of $S$ with $|E(X,Y)|\leq rk$.

We now proceed to describe each iteration. Suppose we are given a balanced partition $(X,Y)$ of $S$ with $|X|\geq |Y|$ and $|E(X,Y)|> rk$. Throughout the iteration execution, we denote $\Gamma=E(X,Y)$. An iteration consists of three steps. In the first step, we try to find a collection $\pset_1$ of $\ceil{k/4}$ edge-disjoint paths in graph $G'$ connecting $\ceil{k/4}$ distinct terminals in $\tset$ to a subset $E_1$ of $\ceil{k/4}$ edges in $\Gamma$. In the second step, we identify additional $(r-1)$ subsets $E_2,\ldots,E_r$ of edges of $\Gamma$ of size $\ceil{k/4}$ each, and try to find, for each $1\leq j\leq r$, a collection $\pset_j$ of paths connecting terminals in $\tset$ to the edges in $E_j$ with congestion at most $2$. Finally, in the third step, we set $\tE=\bigcup_{j=1}^rE_j$, and we try to establish whether $X$ is $\alpha_W(r\cdot \ceil{k/4})$-well-linked for $\tE$. If all three steps succeed, then we output $X$ as a type-2 witness. If any of the three steps fails, then we will either find a contractible set $S'$ of vertices in $G'$, or a new balanced partition $(X',Y')$ of $S$ with $|E(X',Y')|<|E(X,Y)|$. In the latter case, we continue to the next iteration with the new partition $(X',Y')$ replacing the partition $(X,Y)$. We now turn to describe each of the three steps.

\paragraph{Step 1}
In this step we try to find a set $\pset_1$ of edge-disjoint paths in graph $G'$ connecting 
$\ceil{k/4}$ distinct  terminals in $\tset$ to the edges of $\Gamma$. 
In order to do so, we set up the following flow network $N$. We sub-divide each edge $e\in \Gamma$ by a vertex $z_e$, and set $\tset'=\set{z_e\mid e\in \Gamma}$. We then contract the vertices of $\tset$ into a source $s$, and the vertices of $\tset'$ into a sink $t$. Assume first that there is an $s$-$t$ flow of value at least $\ceil{k/2}$ in the resulting network $N$. This flow defines a collection $\pset'$ of $\ceil{k/2}$ paths, where each path connects a distinct terminal in $\tset$ to some edge in $\Gamma$ (since each terminal in $\tset$ has exactly one adjacent edge in $G'$). These paths are completely edge-disjoint, except that each edge in $\Gamma$ may serve as an endpoint of up to two such paths. We  select a subset $\pset_1\sse \pset'$ of $\ceil{k/4}$ paths, such that each edge in $\Gamma$ now participates in at most one path in $\pset_1$, that is, the paths in $\pset_1$ are edge-disjoint.

Assume now that the value of the maximum $s$-$t$ flow in $N$ is less than $\ceil{k/2}$. We show that in this case, we can either find a contractible set $S'$ of vertices, or a balanced partition $(X',Y')$ of $S$ with $|E(X',Y')|\leq \ceil{k/2}<rk$. Since the value of the maximum $s$-$t$ flow in $N$ is less than $\ceil{k/2}$, there is an $s$-$t$ cut $(A',B')$ with $s\in A'$, $t\in B'$, and $|E(A',B')|<\ceil{k/2}$ in $N$. Let $A=A'\setminus \set{s}$ and $B=B'\setminus\set{t}$. Then $(A,B)$ is a partition of  $V(G')\setminus \tset$. Denote $X_A=X\cap A$, $X_B=X\cap B$, $Y_A=Y\cap A$, and $Y_B=Y\cap B$. Let $\Gamma'\sse \Gamma$ be the subset of edges $e=(u,v)$ where either $u\in A$, $v\in B$, or both $u,v\in A$ (see Figure~\ref{fig: for lemma-balanced-cut}). Notice that every edge in $\Gamma'$ contributes at least $1$ to the cut $E_N(A',B')$, and since $E_{G'}(A,B)\sse E_N(A,B)\cup \Gamma'$, we get that $|E_{G'}(A,B)|<\ceil{k/2}$.

\begin{figure}[h]
\scalebox{0.3}{\rotatebox{0}{\includegraphics{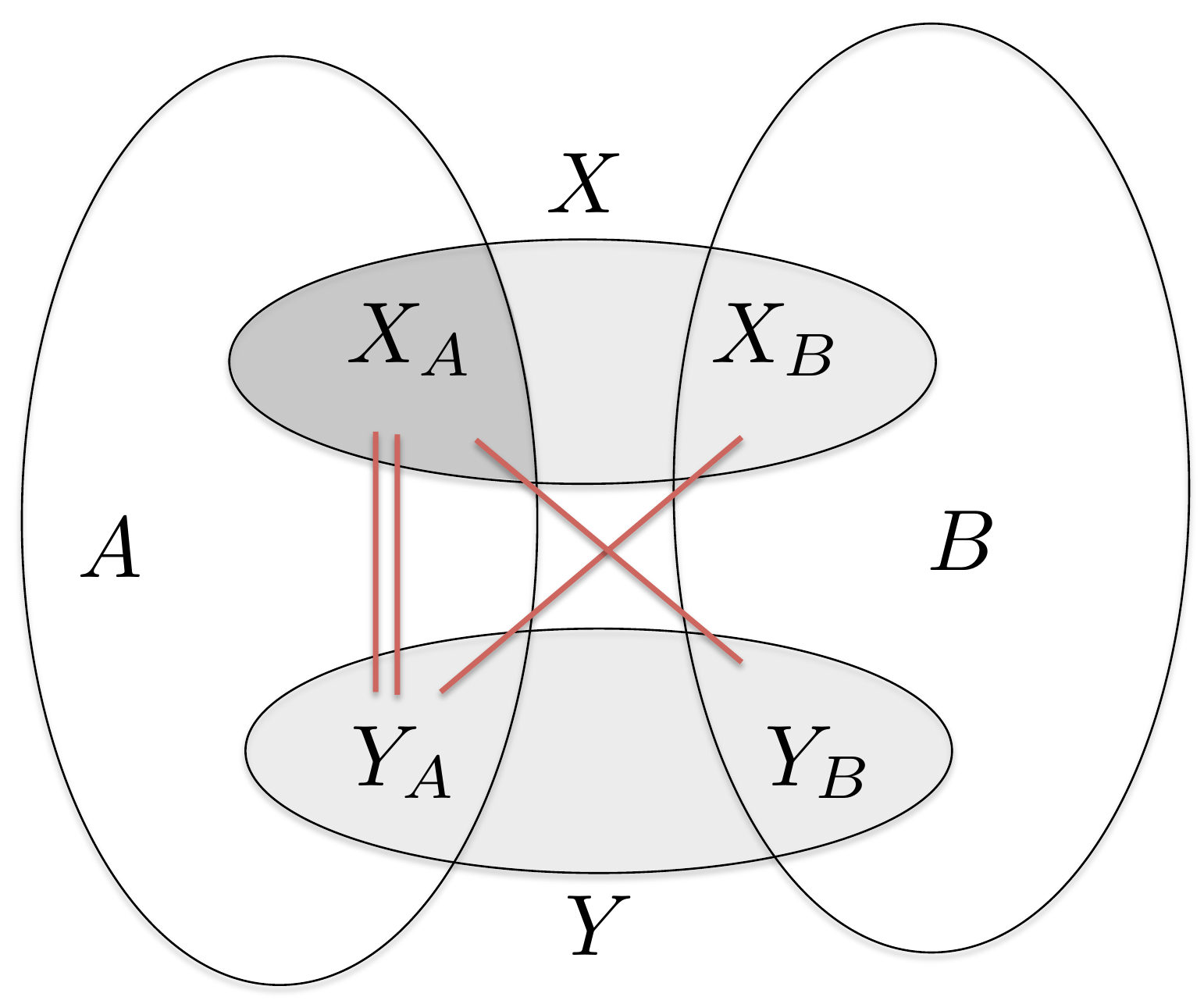}}}  \caption{\label{fig: for lemma-balanced-cut}Illustration for Lemma~\ref{lemma: balanced-cut}. Edges in $\Gamma'$ are shown in red.}
\end{figure}

Assume first that $|X_A|\geq |X_B|$. In this case, we define a new partition $(X',Y')$ of $S$, where 
$X'=X_A$ and $Y'=Y\cup X_B$. It is immediate to see that $(X',Y')$ is a balanced cut, since $|X'|\geq |X|/2\geq |S|/4$. 
In order to bound $E(X',Y')$, observe that 

\[
E_{G'}(X',Y')=E_{G'}(X_A,X_B)\cup E_{G'}(X_A,Y)\subseteq E_{G'}(X_A,X_B)\cup \Gamma'.\]

Therefore, $|E_{G'}(X',Y')|\leq |E_{G'}(X_A,X_B)|+|\Gamma'|<\ceil{k/2}<kr$.

From now on we assume that $|X_B|\geq |X_A|$, so $|X_B|\geq |S|/4>2^7F(k/2)$. Let $\cset_1$ be the set of all connected components of $G'[B]$. If for any component $C\in \cset_1$, $|C|>2^7F(k/2)$, then we stop the algorithm, and output $C$ as a contractible set. Indeed, $|C|>2^7F(k/2)$, while $|\out(C)|\leq |E_G'(A,B)|<\ceil{k/2}$. We now assume that for all components $C\in \cset_1$, $|C|\leq 2^7F(k/2)<|S|/4$. 

Let $\cset_2$ be the set of all connected components of $G'[X_B\cup Y_B]$. Notice that each connected component $C\in \cset_2$ must be contained in some connected component $C'\in \cset_1$, so $|C|<|S|/4$ must hold.
We construct a new partition $(X',Y')$ of $S$, as follows. Start with $X'=\emptyset$, and 
add components $C\in \cset_2$ to $X'$ one-by-one, until $|X'|\geq |S|/4$ holds. Since the size of each such component is less than $|S|/4$, while $|X_B|\geq |S|/4$, in the end, $|S|/4\leq |X'|\leq |S|/2$. Let $Y'=S\setminus X'$. Then $(X',Y')$ is a balanced partition of $S$, and $E_{G'}(X',Y')|\sse E_{G'}(A,B)$, so $|E_{G'}(X',Y')|\leq |E_{G'}(A,B)|\leq \ceil{k/2}<kr$.

\paragraph{Step 2}
From now on, we assume that we have successfully found a set $\pset_1$ of $\ceil{k/4}$ edge-disjoint paths connecting a subset $\tset^*\sse \tset$ of $\ceil{k/4}$ terminals to the edges in $\Gamma$. 
 Let $\Gamma_1$ be the subset of $\ceil{k/4}$ edges of $\Gamma$ that serve as endpoints of these paths, so $\pset_1: \tset^*\sconnect_1\Gamma_1$.

We select arbitrary $(r-1)$ disjoint subsets $\Gamma_2,\ldots,\Gamma_r$ of $\Gamma\setminus \Gamma_1$ , containing $\ceil{k/4}$ edges of each. For each $2\leq j\leq \Gamma_j$, we will try to find a collection $\pset_j'$ of edge-disjoint paths, connecting the edges of $\Gamma_1$ to the edges of $\Gamma_j$. We will show that if such set of paths cannot be found, then we can find another balanced partition $(X',Y')$ of $S$ with $|E_{G'}(X',Y')|<|E_{G'}(X,Y)|$. For simplicity, we provide and analyze the procedure for $j=2$, and the procedure is similar for all $2\leq j\leq r$.

We set up the following flow network. Start with the graph $G'[X]\cup \Gamma_1\cup\Gamma_2$. Let $V_1$ be the set of the endpoints of edges of $\Gamma_1$ that do not belong to $X$, $V_1=\set{v\mid (v,u)\in \Gamma_1,v\not\in X}$, and we define $V_2$ similarly for $\Gamma_2$. We then unify all vertices of $V_1$ into a source $s$, and all vertices of $V_2$ into a sink $t$. Let $N'$ be the resulting network. Assume first that there is an $s$-$t$ flow in $N'$ of value $\ceil{k/4}$. Then this flow defines a collection $\pset_2'$ of $\ceil{k/4}$ edge-disjoint paths, connecting the edges of $\Gamma_1$ to the edges of $\Gamma_2$. Concatenating the paths in $\pset_1$ with the paths in $\pset_2'$, we obtain a collection $\pset_2: \tset^*\sconnect_2\Gamma_2$ of paths in $G'$.

Assume now that such flow does not exist. Then there is an $s$-$t$ cut $(A,B)$ in $N'$, with $s\in A$, $t\in B$, and $|E(A,B)|<\ceil{k/4}$.
We partition the edges of $\Gamma_1$ into two subsets: set $T_A$ denotes the edges that do not belong to the cut $E_{N'}(A,B)$ (that is, for each edge $e=(s,v)\in T_A$, $v\in A$), and set $T_B$ denotes edges that belong to the cut (for each edge $e=(s,v)\in T_B$, $v\in B$). Similarly, we partition the set $\Gamma_2$ of edges as follows: set $T'_A$ contains all edges that belong to the cut $E_{N'}(A,B)$, and $T'_B$ contains all edges that do not belong to the cut. The set $\Gamma\setminus (\Gamma_1\cup \Gamma_2)$ of edges is also partitioned into two subsets: $\Upsilon_A$ denotes all edges $(u,v)\in \Gamma$ with $u\in Y$ and $v\in A$, and $\Upsilon_B$ denotes all edges  $(u,v)\in \Gamma$ with $u\in Y$ and $v\in B$. Finally, let $E'=E_{G'}(A,B)$
(See Figure~\ref{fig: cut claim}).

\begin{figure}[h]
\scalebox{0.3}{\rotatebox{0}{\includegraphics{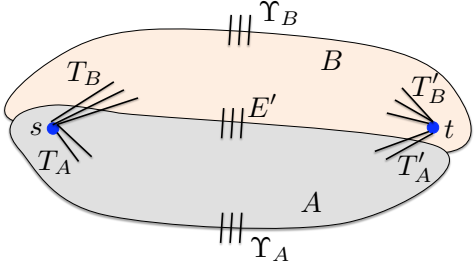}}}  \caption{\label{fig: cut claim}Illustration for Lemma~\ref{lemma: balanced-cut}}
\end{figure}

The set of edges that belong to the cut $E_{N'}(A,B)$ is $E'\cup T_B\cup T'_A$, and the value of this cut is less than $\ceil{k/4}$. In particular, since $|T_A\cup T_B|=\ceil{k/4}$ and $|T_A'\cup T_B'|=\ceil{k/4}$, it follows that $|E'|< |T_A|$, and $|E'|<|T_B'|$.
Assume first that $|A|\leq |B|$. We then define a new partition $(X',Y')$ of $S$, where $X'=B$ and $Y'=Y\cup A$. It is easy to see that $(X',Y')$ is a balanced cut. Notice that $E_{G'}(X,Y)=T_A\cup T'_A\cup T_B\cup T'_B\cup \Upsilon_A\cup \Upsilon_B$, while $E(X',Y')=T_B\cup T_B'\cup \Upsilon_B\cup E'$. In order to show that $|E(X',Y')|<|E(X,Y)|$, it is enough to prove that $|E'|<|T_A|+|T_A'|$, which follows from the fact that $|E'|<|T_A|$.

Otherwise, if $|A|>|B|$, we define a new partition $(X',Y')$ of $S$ where $X'=A$ and $Y'=Y\cup B$. Again, it is easy to see that $(X',Y')$ is a balanced cut.  Notice that $E(X,Y)=T_A\cup T'_A\cup T_B\cup T'_B\cup \Upsilon_A\cup \Upsilon_B$, while $E(X',Y')=T_A\cup T_A'\cup \Upsilon_A\cup E'$. In order to show that $|E(X',Y')|<|E(X,Y)|$, it is enough to prove that $|E'|<|T_B|+|T_B'|$, which follows from the fact that $|E'|<|T_A|$.

We say that steps 1 and 2 are successful iff we have found $r$ disjoint subsets $\Gamma_1,\ldots,\Gamma_r$ of $\Gamma$ containing $\ceil{k/4}$ edges each, and for each $1\leq j\leq r$, we have found a set $\pset_j$ of $\ceil{k/4}$ edge-disjoint paths, $\pset_j: \tset^*\sconnect_2 \Gamma_j$. We assume from now on that steps 1 and 2 have been successful. We now proceed to describe step 3.

\paragraph{Step 3}
Let $\Gamma'=\bigcup_{j=1}^r\Gamma_j$.
In this step, we try to verify that $X$ is $\alpha_W(r\cdot \ceil{k/4})$-well-linked for $\Gamma'$. If this is not the case, then we return a balanced partition $(X',Y')$ of $S$ with $|E_{G'}(X',Y')|<|E_{G'}(X,Y)|$.
We set up an instance of the sparsest cut problem, as follows. Start with the graph $G'$ and sub-divide every edge $e\in \Gamma'$ by a vertex $v_e$. Let $\tset'=\set{v_e\mid e\in \Gamma'}$, and let $G''$ be the sub-graph of the resulting graph  induced by $X\cup \tset'$. We run algorithm \algsc on the instance $(G'',\tset')$ of the sparsest cut problem. Let $A,B$ be the resulting partition of $X$, and assume w.l.o.g. that $|A|\leq |B|$. Denote $T_A=\out_{G'}(A)\cap \Gamma'$ and $T_B=\out_{G'}(B)\cap \Gamma'$.
Assume first that $|E_{G'}(A,B)|<\min\set{|T_A|,|T_B|}$. We then define a new partition $(X',Y')$ of $S$, where $X'=B$ and $Y'=A\cup Y$. It is easy to see that $(X',Y')$ is a balanced partition, since $|B|\geq |A|$. Moreover, $|E_{G'}(X',Y')|\leq |E_{G'}(X,Y)|-|T_A|+|E_{G'}(A,B)|< |E_{G'}(X,Y)|$ as required.

Assume now that $|E_{G'}(A,B)|\geq \min\set{|T_A|,|T_B|}$. Then we are guaranteed that $X$ is $(1/\alphasc(r\cdot \ceil{k/4}))\geq \alpha_W(r\cdot \ceil{k/4})$-well-linked for $\Gamma'$. We then declare that $X$ is a type-2 witness and terminate the algorithm. Indeed, we have established that $X$ is $\alpha_W(r\cdot \ceil{k/4})$-well-linked for $\Gamma'$, and we have found, for each $1\leq j\leq r$, a collection $\pset_j:\tset^*\sconnect_2\Gamma_j$ of paths in $G'$.
 \end{proof}

We are now ready to complete the proof of Theorem~\ref{thm: can find good structures or contract further}. 
The algorithm consists of two phases. In the first phase we have $\lceil \log r\rceil$ iterations. In each iteration $i$, we start with a family $\sset_i$ of $2^{i-1}$ disjoint subsets of vertices of $V(G')\setminus \tset$, where for each $S\in \sset_i$ $|S|> 2^9F(k/2)$, and produce a family $\sset_{i+1}$ of $2^i$ subsets, that become an input to the next iteration. In the input to the first iteration, $\sset_1=\set{V(G')\setminus \tset}$. Iteration $i$ is executed as follows. Consider some set $S\in \sset_i$. We apply the algorithm from Lemma~\ref{lemma: balanced-cut} to set $S$. If the output is a type-2 witness $\tA$, or a contractible set $S'$ of vertices, we stop the algorithm and output this set. Otherwise, we obtain a balanced partition $(X,Y)$ of $S$, with $|E(X,Y)|\leq rk$. In this case, we add $X$ and $Y$ to $\sset_{i+1}$. Notice that $|\out(X)|,|\out(Y)|\leq |\out(S)|+rk$. We let $\sset_{i+1}$ be the set obtained after we process all sets $S\in \sset_i$.
Observe that since we find balanced cuts in each iteration, for each $1\leq i\leq \lceil\log r\rceil+1$, for each $S\in \sset_i$, $|S|\geq \frac{|V(G')\setminus \tset|}{4^{i-1}}\geq \frac{F(k)}{4^{\lceil \log r\rceil}}\geq \frac{F(k)}{4r^2}> 2^9\cdot F(k/2)$, and so we can indeed apply Lemma~\ref{lemma: balanced-cut} to all sets in $\sset_i$.

Consider now the output of the last iteration $\sset_{\lceil\log r\rceil}$, and let $S_1,\ldots,S_r$ be any $r$ sets in $\sset_{\lceil\log r\rceil+1 }$. Fix some $j: 1\leq j\leq r$, and consider the set $S_j$. From the above discussion, $S_j\sse V(G')\setminus \tset$. Moreover, since $|\out(V(G')\setminus \tset)|=k$, and in each iteration, if we start with a set $S$ and produce a partition $(X,Y)$ of $S$, then $|\out(X)|,|\out(Y)|\leq |\out(S)|+rk$, we get that $|\out(S_j)|\leq k+rk\cdot\lceil \log r\rceil\leq 2kr\log r=k^*$.
Moreover, as observed above, $|S_j|\geq \frac{F(k)}{16r^2}\geq 2^{12}r\log r\cdot F(k/2)$. 

Let $\wset_j$ be the weak well-linked decomposition of $S_j$, given by Theorem~\ref{thm: weak well-linked}. Notice that from the definition of well-linkedness, for every cluster $C\in \wset_j$, $G'[C]$ is connected. If any set $R'\in \wset_j$, with $|\out(R')|\leq \ceil{k/2}$ is contractible, then we simply output $R'$ as a contractible set. From now on assume that all sets in $\wset_j$ are non-contractible. Notice that for each $R'\in \wset_j$, set $R'$ is $\alpha_W(k^*)$-well-linked. Let $S'_j\in \wset_j$ be the set of maximum cardinality. We need the following claim.

\begin{claim} $|S_j'|>2^7\cdot F(k/2)$.
\end{claim}
\begin{proof}
Recall that from Theorem~\ref{thm: weak well-linked}, $\sum_{R'\in \wset_j}|\out(R')|\leq 1.2|\out(S_j)|\leq 2.4kr\log r$.
We partition the set $\wset_j$ into two subsets: $\wset^1$ contains all sets $R'\in \wset_j$ with $|\out(R')|\geq k/2$, and $\wset^2$ contains all remaining sets. Further, we partition the set $\wset^2$ into subsets $\rset_i$, for
 $2\leq i\leq \log k +1$, as follows: $\rset_i$ contains all sets $R'$ with $k/2^i<|\out(R')|\leq k/2^{i-1}$. 
 
 Fix some $2\leq i\leq \log k+1$. Since  $\sum_{R'\in \wset_j}|\out(R')|\leq 2.4kr\log r$, we get that $|\rset_i|\leq 3r\log r\cdot 2^i$, and since each set $R'\in \rset_i$ is non-contractible, $|R'|\leq 2^7\cdot F(k/2^{i-1})$ must hold. We therefore obtain the following bound:
 
 \[\begin{split}
 \sum_{R'\in \wset^2}|R'|&\leq \sum_{i=2}^{\log k +1}3\cdot 2^{i}\cdot r\log r\cdot 2^7\cdot F(k/2^{i-1})
\\&=3\cdot 2^7\cdot r\log r\sum_{i=2}^{\log k +1}2^iF(k/2^{i-1})
 \end{split}\]

 Denote $T(i)=2^iF(k/2^{i-1})$. By the recursive definition of $F(k')$, $T(i)<\frac{2^i}{8}F(k/2^{i-2})=T(i-1)/4$. Therefore, the values $T(i)$ form a geometric series, and $\sum_{i=2}^{\log k +1}T(i)<4T(2)/3$.
We conclude that $\sum_{R'\in \wset^2}|R'|< 2^{11}r\log r\cdot F(k/2)\leq |S_j|/2$, and so $ \sum_{R'\in \wset^1}|R'|> |S_j|/2\geq 2^{11}r\log r\cdot F(k/2)$.

Finally, observe that set $\wset^1$ may contain at most $5r\log r$ sub-sets, since for each subset $R'\in \wset^1$, $|\out(R')|>k/2$, while $\sum_{R'\in \wset_j}|\out(R')|\leq 2.4rk\log r$. Therefore, at least one subset in $\wset^1$  contains more than $\frac{2^{11}r\log r\cdot F(k/2)}{5r\log r}>2^7\cdot F(k/2)$ vertices.
 \end{proof}

Next, we try to route the terminals in $\tset$ to the edges in $\out(S_j')$, as follows. We build a flow network, starting from the graph $G'$, contracting all terminals in $\tset$ into a source $s$, and all vertices in $S_j'$ 
into a sink $t$. We try to find an $s$-$t$ flow in this network of value at least $\ceil{k/2}$. Assume first that we are unable to find such flow. Then we can find a minimum $s$-$t$ cut $(A,B)$, with $s\in A$, $t\in B$, and the cut value is less than $\ceil{k/2}$ in this network. Let $B'=(B\setminus\set{t})\cup S'_j$. Then $B'\sse V(G')\setminus \tset'$, and it is a contractible set, since $|\out_{G'}(B')|\leq \ceil{k/2}$, while $S'_j\sse B'$, so $|B'|>2^7\cdot F(k/2)$.
Moreover, $G'[B']$ is a connected graph, since $G'[S_j]$ is connected.

Assume now  that we have managed to find a flow of value at least $\ceil{k/2}$ in this network. Then this defines a collection $\pset_j'$ of $\ceil{k/2}$ edge-disjoint paths, connecting distinct terminals in $\tset$ to distinct edges of $\out(S_j')$.

Overall, our algorithm may terminate early, in which case it is guaranteed to produce either a type-2 witness $\tA$, or a contractible set. Otherwise, the algorithm finds a family $\fset=\set{S_1',\ldots,S_r'}$ of vertex subsets that are $\alpha_W(k^*)$-well-linked, with the collections $\pset_1',\ldots,\pset_r'$ of paths as required, thus giving a type-1 witness. 
\end{proof}

We are now ready to complete the proof of Theorem~\ref{thm: flow sparsifier: well-linked terminals}. 
If the set $R=V(G)\setminus \tset$ is a good router, then we let $\cset=\set{R}$, and the sparsifier $H$ is the corresponding contracted graph (a star graph, where the star center is $v_R$, and the leaves are the terminals). Assume now that $R$ is not a good router. We then start with $G'=G$, and repeatedly apply Theorem~\ref{thm: can find good structures or contract further} to $G'$.
From Theorem~\ref{thm: from witnesses to good routers}, graph $G'$ cannot contain type-1 or type-2 witnesses, so the output of the theorem will always be a contractible set $S$ of vertices in $G'$. We then apply Procedure $\contract(G',S)$ to obtain a new contracted graph $G''$, with $|V(G'')|<|V(G')|$, and continue. We are guaranteed to obtain, after at most $|V(G)|$ iterations, a legal contracted graph $G'$ with $|V(G')\setminus\tset|\leq F(k)$, which we output as the final sparsifier. From Claim~\ref{claim: a legal contracted graph is a sparsifier}, $G'$ is indeed a quality-$(2\eta^*)$-sparsifier.

In order to bound the running time of the algorithm, we prove that for any $n$-vertex graph with a set $\tset$ of $k$ terminals, the running time of the algorithm is $T(n,k)=n^{O(\log k)}\cdot 2^k$. The proof is by induction on the values of $k$. For $k\leq 4$, the running time of the algorithm is $\poly(n)$. Assume that the claim holds for all values $k'< k$, and we now prove it for $k$.
Recall that our algorithm performs at most $n$ iterations. Each iteration involves a call to procedure $\contract(G',S)$, and takes an additional time of $\poly(n)$. Let $k'=|\out(S)|$, and recall that $k'\leq \ceil{k/2}$. Procedure $\contract(G',S)$ computes a strong well-linked decomposition $\sset$ of the set $S$, and then computes a sparsifier for each set $Z\in \sset$ recursively. For each set $Z\in \sset$, let $k_Z=\out_G(Z)$, and let $n_Z=|Z|$. Then $k_Z\leq k'\leq \ceil{k/2}$, and $\sum_{Z\in \sset}n_Z\leq n$. Therefore, by the induction hypothesis, the running time of the recursive procedure for each set $Z\in \sset$ is at most $T(n_Z,k_Z)\leq T(n_Z,\ceil{k/2})$, and the total running time of procedure $\contract(G',S)$ is at most $2^k\poly(n)+\sum_{Z\in \sset}T(n_Z,\ceil{k/2})\leq 2^k\poly(n)+T(n,\ceil{k/2})$. Overall, the running time of the algorithm is then bounded by $n\cdot (\poly(n)+2^k\poly(n)+T(n,\ceil{k/2}))\leq 2^k\poly(n)+n\cdot (n^{O(\log \ceil{k/2}})\cdot 2^k)\leq n^{O(\log k)}\cdot 2^k=T(n,k)$.


\paragraph{Acknowledgements} 

The author thanks Yury Makarychev and Konstantin Makarychev for many interesting discussions about vertex sparsifiers.


\bibliography{vertex-sparsifiers}

\newcommand{\etalchar}[1]{$^{#1}$}
\begin{thebibliography}{EGK{\etalchar{+}}10}

\bibitem[And10]{Andrews}
Matthew Andrews.
\newblock Approximation algorithms for the edge-disjoint paths problem via
  raecke decompositions.
\newblock In {\em Proceedings of the 2010 IEEE 51st Annual Symposium on
  Foundations of Computer Science}, FOCS '10, pages 277--286, Washington, DC,
  USA, 2010. IEEE Computer Society.

\bibitem[AR98]{Aumann-Rabani}
Yonatan Aumann and Yuval Rabani.
\newblock An {$O(\log k)$} approximate min-cut max-flow theorem and
  approximation algorithm.
\newblock {\em SIAM J. Comput.}, 27(1):291--301, 1998.

\bibitem[ARV09]{ARV}
Sanjeev Arora, Satish Rao, and Umesh~V. Vazirani.
\newblock Expander flows, geometric embeddings and graph partitioning.
\newblock {\em J. ACM}, 56(2), 2009.

\bibitem[CKS04]{ANF}
Chandra Chekuri, Sanjeev Khanna, and F.~Bruce Shepherd.
\newblock The all-or-nothing multicommodity flow problem.
\newblock In {\em Proceedings of the thirty-sixth annual ACM symposium on
  Theory of computing}, STOC '04, pages 156--165, New York, NY, USA, 2004. ACM.

\bibitem[CKS05]{CKS}
Chandra Chekuri, Sanjeev Khanna, and F.~Bruce Shepherd.
\newblock Multicommodity flow, well-linked terminals, and routing problems.
\newblock In {\em STOC '05: Proceedings of the thirty-seventh annual ACM
  symposium on Theory of computing}, pages 183--192, New York, NY, USA, 2005.
  ACM.

\bibitem[CLLM10]{CLLM}
Moses Charikar, Tom Leighton, Shi Li, and Ankur Moitra.
\newblock Vertex sparsifiers and abstract rounding algorithms.
\newblock In {\em Proceedings of the 2010 IEEE 51st Annual Symposium on
  Foundations of Computer Science}, FOCS '10, pages 265--274, Washington, DC,
  USA, 2010. IEEE Computer Society.

\bibitem[EGK{\etalchar{+}}10]{EGK}
Matthias Englert, Anupam Gupta, Robert Krauthgamer, Harald R\"{a}cke, Inbal
  Talgam-Cohen, and Kunal Talwar.
\newblock Vertex sparsifiers: new results from old techniques.
\newblock In {\em Proceedings of the 13th international conference on
  Approximation, and 14 the International conference on Randomization, and
  combinatorial optimization: algorithms and techniques}, APPROX/RANDOM'10,
  pages 152--165, Berlin, Heidelberg, 2010. Springer-Verlag.

\bibitem[FJS88]{FJS}
T.~Figiel, W.~B. Johnson, and F.~Schechtman.
\newblock Factorizations of natural embeddings of {$l^n_p$} into {$L_r$}.
\newblock {\em I. Studia Math.}, 89:79--103, 1988.

\bibitem[GVY95]{GVY}
N.~Garg, V.V. Vazirani, and M.~Yannakakis.
\newblock Approximate max-flow min-(multi)-cut theorems and their applications.
\newblock {\em SIAM Journal on Computing}, 25:235--251, 1995.

\bibitem[LLR94]{LLR}
N.~Linial, E.~London, and Y.~Rabinovich.
\newblock The geometry of graphs and some of its algorithmic applications.
\newblock {\em Proceedings of 35th Annual IEEE Symposium on Foundations of
  Computer Science}, pages 577--591, 1994.

\bibitem[LM10]{LM}
F.~Thomson Leighton and Ankur Moitra.
\newblock Extensions and limits to vertex sparsification.
\newblock In {\em Proceedings of the 42nd ACM symposium on Theory of
  computing}, STOC '10, pages 47--56, New York, NY, USA, 2010. ACM.

\bibitem[LR99]{LR}
F.~T. Leighton and S.~Rao.
\newblock Multicommodity max-flow min-cut theorems and their use in designing
  approximation algorithms.
\newblock {\em Journal of the ACM}, 46:787--832, 1999.

\bibitem[MM10]{MM}
Konstantin Makarychev and Yury Makarychev.
\newblock Metric extension operators, vertex sparsifiers and lipschitz
  extendability.
\newblock In {\em FOCS}, pages 255--264. IEEE Computer Society, 2010.

\bibitem[Moi09]{Moitra}
Ankur Moitra.
\newblock Approximation algorithms for multicommodity-type problems with
  guarantees independent of the graph size.
\newblock In {\em FOCS}, pages 3--12. IEEE Computer Society, 2009.

\bibitem[R{\"a}c02]{Raecke}
Harald R{\"a}cke.
\newblock Minimizing congestion in general networks.
\newblock In {\em In Proceedings of the 43rd IEEE Symposium on Foundations of
  Computer Science (FOCS)}, pages 43--52, 2002.

\bibitem[RZ10]{RaoZhou}
Satish Rao and Shuheng Zhou.
\newblock Edge disjoint paths in moderately connected graphs.
\newblock {\em SIAM J. Comput.}, 39(5):1856--1887, 2010.

\end{thebibliography}
\bibliographystyle{alpha}


\appendix

\label{------------------------------------------appendix----------------------------------------}
\iffull
\section{List of Parameters}
\label{sec: parameters list}
\begin{tabular}{|l|l|l|}\hline
&&\\
$\alphasc(k)$&$O(\sqrt{\log k})$&Approximation factor of algorithm $\algsc$ for sparsest cut\\ \hline
&&\\
$\alpha_W(z)$&$\Omega\left(\frac 1 {\log^{3/2} z}\right)$&Well-linkedness parameter for the\\
&&weak well-linked decomposition\\ \hline
&&\\
$\betaFCG(k) $&$O(\log k)$&Flow-cut gap for undirected graphs \\ \hline
&&\\
$\eta^*$&$34$&Parameter from the definition of good routers\\ \hline
&&\\
$r$&$O(\log^3k)$&Number of vertex subsets in a type-1 witness\\ \hline
&&\\
$k^*$& $2kr\log r=k\poly\log k$& \\ \hline
\end{tabular}
\fi

\label{------------------------------------------proofs from prelims---------------------------}
\section{Proofs Omitted from Section~\ref{sec: Prelims}}
\subsection{Proof of Theorem~\ref{thm: weak well-linked}}
\label{subsec: proof for weak well-linked}


We use the $\alphasc(z)$-approximation algorithm $\algsc$ for the sparsest cut problem. We set $\alpha_W(z)=\frac 1{2^{7}\alphasc(z)\log z}=\Omega\left(\frac 1 {\log^{3/2} z}\right)$.

Throughout the algorithm, we maintain a partition $\wset$ of the input set $S$ of vertices, where for each $R\in \wset$, $|\out(R)|\leq |\out(S)|$. At the beginning, $\wset$ consists of the subsets of $S$ defined by the connected components of $G[S]$. 

Let $R\in \wset$ be any set in the current partition, and let $(G_R,\tset'_R)$ be the instance of the sparsest cut problem corresponding to $R$, as defined in Section~\ref{sec: Prelims}.
We say that a cut $(A',B')$ in $G_R$ is sparse, iff its sparsity is less than $1/(2^{7}\log z)$. We apply the algorithm $\algsc$ to the instance $(G_R,\tset'_R)$ of sparsest cut. If the algorithm returns a cut $(A',B')$, that is a sparse cut, then let $A=A'\setminus \tset'_R$, and $B=B'\setminus \tset'_R$. We remove $R$ from $\wset$, and add $A$ and $B$ to it instead. Let $T_A=\out(R)\cap \out(A)$, and $T_B=\out(R)\cap \out(B)$, and assume w.l.o.g. that $|T_A|\leq |T_B|$. Then $|E(A,B)|< |T_A|/(2^{7}\log z)$ must hold, and in particular, $|\out(A)|\leq |\out(B)|\leq |\out(R)|\leq |\out(S)|$. For accounting purposes, each edge in set $T_A$ is charged $1/(2^{7}\log z)$ for the edges in $E(A,B)$. Notice that the total charge to the edges in $T_A$ is $|T_A|/(2^{7}\log z)\geq |E(A,B)|$. Notice also that since $|T_A|\leq |\out(R)|/2$ and $|E(A,B)|\leq |T_A|/2^{7}$, $|\out(A)|\leq 0.51|\out(R)|$.

The algorithm stops when for each set $R\in \wset$, the procedure $\algsc$ returns a cut that is not sparse. We argue that this means that each set $R\in \wset$ is $\alpha_W(z)$-well-linked. Assume otherwise, and let $R\in \wset$ be a set that is not $\alpha_W(z)$-well-linked. Then, by the definition of well-linkedness, the corresponding instance of the sparsest cut problem must have a cut of sparsity less than $\alpha_W(z)=1/(2^{7}\alphasc(z)\log z)$. The algorithm $\algsc$ should then have returned a cut whose sparsity is less than $\alpha_W(z)\cdot \alphasc(z)=1/(2^{7}\log z)$, that is a sparse cut.

Finally, we need to bound $\sum_{R\in\wset}|\out(R)|$. We use the charging scheme defined above. Consider some iteration where we partition the set $R$ into two subsets $A$ and $B$, with $|T_A|\leq |T_B|$. Recall that each edge in $T_A$ is charged $1/(2^{7}\log z)$ in this iteration, while $|\out(A)|\leq 0.51|\out(R)|$ holds.
Consider some edge $e=(u,v)$. Whenever $e$ is charged via the vertex $u$, the size of the set $\out(R)$, where $u\in R\in \wset$ goes down by a factor of at least $0.51$. Therefore, $e$ can be charged at most $2\log z$ times via each of its endpoints. The total charge to $e$ is then at most $4\log z/(2^{7}\log z)=1/2^5$. This however only accounts for the \emph{direct} charge. For example, some edge $e'\not \in \out(S)$, that was first charged to the edges in $\out(S)$, can in turn be charged for some other edges. We call such charging \emph{indirect}. If we sum up the indirect charge for every edge $e\in \out(S)$, we obtain a geometric series, and so the total direct and indirect amount charged to every edge $e\in \out(S)$ is at most $1/2^4<0.1$. Therefore, $\sum_{R\in \wset}|\out(R)|\leq 1.2|\out(S)|$ (we need to count each edge $e\in \left (\bigcup_{R\in \wset}\out(R)\right )\setminus \out(S)$ twice: once for each its endpoint).\hfill 

\subsection{Proof of Theorem~\ref{thm: strong well linked}}
\label{subsec: proof for strong well-linked}

The algorithm  is very similar to the algorithm used in the proof of Theorem~\ref{thm: weak well-linked}, but the analysis is different. We start by describing the algorithm.

Throughout the algorithm, we maintain a partition $\sset$ of the input set $S$ of vertices, where for each $R\in \sset$, $|\out(R)|\leq |\out(S)|$. At the beginning, $\sset=\set{S}$. 

Let $R\in \sset$ be any set in the current partition, and let $(G_R,\tset'_R)$ be the corresponding instance of the sparsest cut problem.
We say that a cut $(A',B')$ in $G_R$ is sparse, iff its sparsity is less than $1/3$. Notice that the set $R$ is $1/3$-well-linked iff there is no sparse cut in $R$

Our algorithm proceeds as follows.
 Whenever there is a set $R\in \sset$, such that the sparsity of the sparsest cut $(A',B')$ in the corresponding instance $(G_R,\tset'_R)$ is less than $1/3$, we set $A=A'\setminus \tset'_R$ and $B=B'\setminus \tset'_R$. We then remove $R$ from $\sset$, and add $A$ and $B$ to it instead.
Since $|\out(R)|\leq |\out(S)|\leq z$, the sparsest cut problem instance can be solved in time $2^z\poly(n)$: we simply go over all bi-partitions $(\tset_1,\tset_2)$ of the set $\tset'_R$ of terminals, and for each such bi-partition, compute the minimum cut separating the vertices in $\tset_1$ from the vertices in $\tset_2$. 
The algorithm terminates when for every set $R\in \sset$, the value of the sparsest cut in the corresponding instance is at least $1/3$.
From the above discussion, the running time of the algorithm is $2^z\poly(n)$. It is also clear that when the algorithm terminates, for every set $R\in \sset$, $|\out(R)|\leq |\out(S)|$, and $R$ is $1/3$-well-linked.
It now only remains to analyze the sizes of the collections $\sset_i$ of vertex subsets in the resulting partition, and to bound $\sum_{R\in \sset}|\out(R)|$.

 Let $\cset$ be the set of all cuts that we produce throughout this algorithm (that is, $\cset$ contains all sets $R$ that belonged to $\sset$ at any stage of the algorithm), and let $T$ be the corresponding partitioning tree, whose vertices represent the sets in $\cset$, and every inner vertex $v_R$ has exactly two children $v_A,v_B$, where $(A,B)$ is the partition that our procedure computed for the set $R$.

For the sake of the analysis of the algorithm, we define a new graph $G'$, as follows. We start with the graph $G$, and  the final partition $\sset$ of $S$.
Let $R,R'\in \sset$ be any pair of distinct vertex subsets, and let $e=(u,v)$ be any edge with $u\in R,v\in R'$. We subdivide the edge $e$ by adding two vertices $u_e,v_e$ to it, so that now we have a path $(u,u_e,v_e,v)$ instead of the edge $e$. Additionally, for every edge $e=(u,v)$ with $u\in S$, $v\not\in S$, we subdivide edge $e$ by adding a new vertex $u_e$ to it.

All the newly added vertices are called \emph{terminals} and the set of all such terminals is denoted by $\Gamma$. The resulting graph is denoted by $G'$.
For each subset $X\sse S$ of vertices, we will still denote by $\out(X)=\out_G(X)$. Consider now some subset $R\in \cset$ of vertices. We now define the subset $\Gamma(R)\sse \Gamma$ of terminals associated with $R$, as follows: $\Gamma(R)=\set{u_e\mid e=(u,v)\in \out(R), u\in R}$, so $|\Gamma(R)|=|\out(R)|$. Moreover, $\sum_{R\in \sset}|\out(R)|=\sum_{R\in \sset}|\Gamma(R)|=|\Gamma|$.
 
We define a charging scheme that will help us bound the number of terminals, and the number of sets in each collection $\sset_i$. 
 


We say that a set $R\in \cset$ belongs to level $i$ iff $\frac z{2^{i/2}}< |\out(R)|\leq \frac z {2^{(i-1)/2}}$. In particular, $S$ is a level-$1$ set, and we have at most $2\log z+1$ levels.
We also partition all terminals into levels, and within each level, we have two types of terminals: regular and special.
Intuitively, special terminals at level $i$ are the terminals that have been created by partitioning some sets from levels $1,\ldots,i-1$, while regular terminals are created by partitioning sets that belong to level $i$.

The terminals in set $\Gamma(S)$ are called special terminals, and they belong to level $1$.
Let $R$ be some level-$i$ set, and let $A,B$ be its children, with $|\out(R)\cap \out(A)|\leq |\out(R)\cap \out(B)|$. Then $|E(A,B)|\leq |\out(R)\cap \out(A)|/3=|\Gamma(R)\cap \Gamma(A)|/3$, and so $|\Gamma(A)|= |\Gamma(R)\cap \Gamma(A)|+|E(A,B)|\leq |\Gamma(R)|(\half+\frac 1 6)< |\Gamma(R)|/\sqrt 2$. Let $i'$ and $i''$ be the levels to which sets $A$ and $B$ belong, respectively. Then $i'\geq i+1$ must hold. We call all terminals in set $\Gamma(A)\setminus \Gamma(R)$ \emph{special terminals for level $i'$} (to indicate that they were created from partitioning a lower-level set). If $i''\geq i+1$ holds as well, then we call all terminals in set  $\Gamma(B)\setminus \Gamma(R)$ special terminals at level $i''$. Otherwise, they are \emph{regular terminals}, that belong to level $i''=i$.

Notice that both sets $\Gamma(A)\setminus \Gamma(R)$  and $\Gamma(B)\setminus \Gamma(R)$ of new terminals come from subdivisions of the edges in $E(A,B)$, so each such edge gives rise to one terminal in $\Gamma(A)$ and one in $\Gamma(B)$. We stress that for each $i$, a level-$i$ terminal continues to be a level-$i$ terminal throughout the algorithm, even if its corresponding subset of vertices in $\sset$ becomes further subdivided and stops being a level-$i$ set. So for example, if $R$ is a level-$i$ set, the terminals in $\Gamma(R)$ may belong to levels $1,\ldots,i$.
The next lemma bounds the number of terminals at each level, and it is central to the analysis of the algorithm.

\begin{lemma}
For each $1< i\leq 2\log z+1$, there are at most $2^{i-2}z$ special terminals, and at most $2^{i-1} z$ regular level-$i$ terminals. For $i=1$, there are $z$ level-$1$ special terminals, and at most $z/2$ regular level-$1$ terminals. 
\end{lemma}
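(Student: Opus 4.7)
The plan is to prove the lemma by induction on $i$, exploiting a structural fact about the partitioning tree $T$: within any given level $i$, the level-$i$ sets form a disjoint union of root-to-descendant paths in $T$. This is because every split of a level-$i$ set sends its smaller child $A$ to some level $i' \geq i+1$ (since $|\Gamma(A)| \leq |\Gamma(R)|/\sqrt{2} \leq z/2^{i/2}$), so at most one of the two children can remain at level $i$. Call these the \emph{level-$i$ paths}, and call a path's topmost set its \emph{root}; the root of a level-$i$ path is either $S$ itself (when $i=1$) or a set whose parent in $T$ sits at level $j<i$.

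I would handle the regular terminals along each level-$i$ path via a telescoping argument. Along a level-$i$ path $R_0, R_1, \ldots, R_k$, at each split of $R_j$ into the smaller side $A_j$ and the larger side $R_{j+1}$, the number of new regular level-$i$ terminals produced equals $|E(A_j, R_{j+1})|$. Combining the sparsity condition $|E(A_j, R_{j+1})| \leq |\Gamma(R_j) \cap \Gamma(A_j)|/3$ with the identity $|\Gamma(R_{j+1})| = |\Gamma(R_j)| - |\Gamma(R_j) \cap \Gamma(A_j)| + |E(A_j, R_{j+1})|$ yields $|\Gamma(R_j)| - |\Gamma(R_{j+1})| \geq 2\,|E(A_j, R_{j+1})|$, so the total regular level-$i$ terminals from this path telescopes to at most $|\Gamma(R_0)|/2$. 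For $i=1$, there is only one level-1 path (rooted at $S$), giving $\leq |\Gamma(S)|/2 \leq z/2$ regular level-1 terminals, and the $z$ level-1 special terminals are simply $\Gamma(S)$ by definition, settling the base case.

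For special level-$i$ terminals with $i > 1$, I would bound the sum $\sum_{R_0}|\Gamma(R_0)|$ over all level-$i$ path roots $R_0$, since any special level-$i$ terminal lies in some $\Gamma(R_0)$: it is created exactly at the split that produces $R_0$, and it belongs to $\Gamma(R_0) \setminus \Gamma(\mathrm{parent}(R_0))$. Each such $R_0$ arises from a split of some level-$j$ set ($j<i$), and the new part $\Gamma(R_0)\setminus\Gamma(\mathrm{parent})$ has size $|E(A,B)| \leq |\Gamma(\mathrm{parent})\cap\Gamma(A)|/3$. Crucially, all terminals in $\Gamma(\mathrm{parent})$ live at strictly lower levels, so a charging scheme assigning $2/3$ of a credit to each old terminal on the smaller side pays for the new terminals. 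By the inductive bounds, the total credit available from levels $1, \ldots, i-1$ is $\sum_{\ell=1}^{i-1}(2^{\ell-2}z + 2^{\ell-1}z) = O(2^{i-2}z)$, and a geometric series argument (tracking how many times a single lower-level terminal can occupy the smaller side of a split, using that $|\Gamma|$ of its set shrinks by a constant factor each time) keeps the per-terminal charge bounded, yielding the required $\leq 2^{i-2}z$ bound.

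The main obstacle I anticipate is making the charging scheme for special terminals airtight: a single lower-level terminal $t$ can lie in $\Gamma(R)\cap\Gamma(A)$ at many different splits as its containing set shrinks, so naively summing the charge $2/3$ per split risks a logarithmic blow-up. Handling this requires carefully tracking that each time $t$ is on the smaller side, $|\Gamma|$ of its new enclosing set drops by a factor bounded below by a constant greater than $1$, so the total charge to any fixed $t$ is bounded by a geometric sum proportional to a constant (not $\log z$). The precise constants must be tuned so that the geometric growth from $2^{i-1}$ in the inductive statement absorbs the accumulated contributions from all lower levels; this bookkeeping — reconciling the factor-$1/\sqrt{2}$ level-jump of the smaller child with the weaker factor-$2/3$ boundary shrinkage of a terminal that stays on the smaller side — is where the proof's main technical effort lies.
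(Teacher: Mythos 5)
Your plan has the same skeleton as the paper's proof (per-level counts, with new terminals charged against boundary terminals of lower levels), and the regular-terminal half is fine: the observation that level-$i$ sets form disjoint descending paths is correct, and your telescoping bound $|\Gamma(R_j)|-|\Gamma(R_{j+1})|\geq 2|E(A_j,R_{j+1})|$ gives exactly the inequality $n'_i\leq \frac12\bigl(S_{i-1}+n_i\bigr)$ (where $S_{i-1}$ counts all terminals of levels $<i$ and $n_i$ the level-$i$ special ones) that the paper obtains by a charging argument; the base case is also right. The gap is in the special-terminal bound, and it is twofold. First, the obstacle you single out --- a fixed lower-level terminal being charged at many splits, with a feared $\log z$ blow-up --- does not actually arise, but your proposed geometric-sum remedy is not a valid way to dismiss it: the charge per event is a flat $2/3$, and the fact that the enclosing cluster's boundary shrinks by a constant factor does not make successive charges decay, so that argument caps the per-terminal charge only by $O(\log z)$. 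The correct reason there is no repeated charging is different: writing $T_A=\Gamma(R)\cap\Gamma(A)$ and $T_B=\Gamma(R)\cap\Gamma(B)$ with $|T_A|\leq|T_B|$, we get $|\Gamma(A)|=|T_A|+|E(A,B)|\leq|T_B|+|E(A,B)|=|\Gamma(B)|$, so whenever a split of a level-$j$ set ($j<i$) produces a level-$i$ child, the smaller side $A$ itself lands at level at least $i$; since $|\Gamma(\cdot)|$ never increases down the partitioning tree, a terminal charged at such a split lives ever after in clusters of level $\geq i$ and can never be charged again for level-$i$ special terminals.

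Second, and more importantly, even granting one charge per terminal per level, your constant is too lossy for the stated bounds. Charging all new special terminals of a split (up to $2|E(A,B)|\leq\frac23|T_A|$ when both children land at level $i$) to the smaller side gives only $n_i\leq\frac23 S_{i-1}$; combined with $n'_i\leq\frac12(S_{i-1}+n_i)$ the totals then grow by a factor $\frac52$ per level instead of $2$. Concretely, with your inductive hypothesis $S_{i-1}\leq \frac32 z+\sum_{\ell=2}^{i-1}3\cdot 2^{\ell-2}z=3\cdot 2^{i-2}z-\frac32 z$, a $2/3$ charge yields only $n_i\leq 2^{i-1}z-z$, which already exceeds the claimed $2^{i-2}z$ at $i=3$, so the induction does not close. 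The paper's scheme repairs exactly this: the new terminals of each level-$i$ child $X\in\set{A,B}$ are charged to the old terminals $\Gamma(X)\cap\Gamma(R)$ on that same child, and since $|E(A,B)|\leq|T_A|/3\leq|T_X|/3$, the per-terminal charge is $1/3$ even when both children are at level $i$, while the single-charge property holds because $X$ is itself a level-$i$ cluster. With $n_i\leq S_{i-1}/3$ your own induction closes: $n_i\leq 2^{i-2}z-\frac z2$ and $n'_i\leq 2^{i-1}z-z$. So keep the telescoping for regular terminals, but redirect the special-terminal charge to each level-$i$ root's own old boundary terminals and drop the geometric-sum argument.
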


\begin{proof}
For each level $i$, let $n_i$ be the number of special terminals, and $n'_i$ the number of regular terminals. Let $S_i$ be the total number of terminals at levels $1,\ldots, i$.
We use the following two simple claims.
\begin{claim}
For each $i>1$, $n_i\leq S_{i-1}/3$.
\end{claim}
\begin{proof}
Recall that a level-$i$ special terminal can only be created when partitioning some cluster $R\in \cset$ that belongs to levels $1,\ldots,i-1$. Suppose that the partition of $R$ is $(A,B)$, and assume that $|\Gamma(A)\cap \Gamma(R)|\leq |\Gamma(B)\cap \Gamma(R)|$. Let $X\in \set{A,B}$ be the cluster that belongs to level-$i$ (it is possible that both $A$ and $B$ belong to level $i$ - the analysis for this case is similar and is carried out for each one of the clusters separately). Then the terminals in $\Gamma(X)\cap \Gamma(R)$ belong to levels $1,\ldots,i-1$, and the terminals in $\Gamma(X)\setminus \Gamma(R)$ become special terminals at level $i$. We charge the terminals in $\Gamma(X)\cap \Gamma(R)$ for the terminals in $\Gamma(X)\setminus \Gamma(R)$, where the charge to every terminal in $\Gamma(X)\cap \Gamma(R)$ is at most $1/3$. Moreover, since $X$ now becomes a level-$i$ cluster, the terminals in $\Gamma(X)$ will never be charged for special level-$i$ terminals again. Therefore, the number of special level-$i$ terminals is at most $S_{i-1}/3$.
 \end{proof}

\begin{claim}\label{claim: bound on regular}
For every level $i$, $n'_i\leq (S_{i-1}+n_i)/2$.
\end{claim}
\begin{proof}
The proof uses a charging scheme, that is defined as follows. 
Let $\Gamma'$ be the set of all terminals at levels $1,\ldots,i-1$, together with the special terminals at level $i$. We charge the regular level-$i$ terminals to the terminals in $\Gamma'$, and show that $n'_i\leq |\Gamma'|/2$.

Recall that regular level-$i$ terminals are only created by partitioning level-$i$ clusters $R\in \cset$. Let $R$ be any such level-$i$ cluster, that we have partitioned into $(A,B)$. Assume w.l.o.g. that $|T_A|\leq |T_B|$, and recall that $A$ must belong to some level $i'>i$. This partition only creates level-$i$ terminals if $B$ belongs to level $i$. The number of the newly created level-$i$ terminals, $|\Gamma(B)\setminus \Gamma(R)|=|E(A,B)|\leq |T_A|/3=|\Gamma(A)\cap \Gamma(R)|/3$. We charge the terminals in $|\Gamma(A)\cap \Gamma(R)|$ for the newly created terminals in $\Gamma(B)\setminus \Gamma(R)$, where each terminal in $|\Gamma(A)\cap \Gamma(R)|$ is charged $1/3$. Observe that the terminals in $|\Gamma(A)\cap \Gamma(R)|$ must belong to levels $1,\ldots, i$, and moreover, since $A$ does not belong to level $i$, these terminals will never be charged again for level-$i$ terminals. 
But we may charge them indirectly - by charging the terminals in $\Gamma(B)\setminus \Gamma(R)$ for some new terminals. However, since a direct charge to every terminal is bounded by $1/3$, the total direct and indirect charge to any terminal in $\Gamma'$ forms a geometrically decreasing sequence, and its sum is bounded by $1/2$. Therefore, $n'_i\leq |\Gamma'|/2$.
 \end{proof}

The number of special level-$1$ terminals is $z$, and  the number of regular level-$1$ terminals is at most $z/2$ by Claim~\ref{claim: bound on regular}, so $S_1\leq 3z/2$. In general, we now have that for any $i>1$, $n_i\leq S_{i-1}/3$, and $n_{i}'\leq (S_{i-1}+n_i)/2\leq 2S_{i-1}/3$. Therefore, $S_i=S_{i-1}+n_i+n_i'\leq 2S_{i-1}$. We conclude that $S_i\leq  3\cdot 2^{i-2}z$ for all $i$, and so $n_i\leq 2^{i-2}z$, and $n_i'\leq 2^{i-1}z$  for all $i> 1$. 
 \end{proof}

Consider now the final partition $\sset$. We can now bound $\sum_{R\in \sset}|\out(R)|=|\Gamma|=|S_{2\log z+1}|\leq 3z\cdot 2^{2\log z}=O(z^3)$.

Let $R\in \sset$, and assume that $R$ belongs to level $i$. Then all terminals of $\Gamma(R)$ must belong to levels $1,\ldots, i$. The total number of such terminals is bounded by $3z\cdot 2^{i-2}$, and set $R$ uses at least $z/2^{i/2}$ of them. It follows that the number of level-$i$ sets in $\sset$ is bounded by $2^{3i/2}$.

Recall that $\sset_i$ contains all sets $R\in \sset$ with  $z/2^{i}< |\out(R)|\leq z/2^{i-1}$, and so $\sset_i$ only contains sets $R\in \sset$ that belong to levels $2i$ or $2i-1$. From the above discussion $|\sset_i|\leq 2^{3i+3}$.


\label{-------------------------------------------Proofs for Flow Sparsifiers-----------------------------------}

\section{Proof of Corollary~\ref{corollary: general flow sparsifier}} \label{subsec: Proof of the corollary}

Assume first that all edge capacities are integral and bounded by $C$. For each terminal $t\in \tset$, let $C_t$ be the total capacity of all edges incident on $t$. We replace every edge $e\in E$ with $c_e$ parallel edges of unit capacity. For each terminal $t\in \tset$, we sub-divide each edge $e$ incident on $t$ with a vertex $v_e$, and we let $S_t$ be the set of these new vertices. Let $\tilde{G}$ be the resulting graph, and let $G'=\tilde{G}\setminus \tset$. We let $\tset'=\bigcup_{t\in \tset}S_t$ be the set of terminals for the new graph $G'$. Then $|\tset'|=C$, and each vertex in $\tset'$ has exactly one edge incident to it. We now apply Theorem~\ref{thm: flow sparsifier for unit capacities} to $(G',\tset')$ to obtain a sparsifier $H'$. In our final step, for each $t\in \tset$, we unify all vertices in the set $S_t$ in graph $H'$ into a single vertex $t$. Let $H$ be this final sparsifier. 
Clearly, $|V(H)|=C^{O(\log\log C)}$. We now show that $H$ is a quality-$(2\eta^*)$ sparsifier for $G$. 

Let $D$ be any set of demands on $\tset$, and let $F$ be the routing of these demands in graph $G$ with congestion $\eta=\eta(G,D)$. Then the routing $F$ naturally defines a set $D'$ of demands over the vertices of $\tset'$. For each pair $(v_e,v_{e'})\in \tset'$ of vertices, the demand $D'(v_e,v_{e'})$ is the total flow on all flow-paths in $F$ that start at $e$ and terminate at $e'$. Flow $F$ also gives a routing of this new set $D'$ of demands in graph $G'$ with congestion $\eta$. Since $H'$ is a flow sparsifier for $G'$, there is a routing $F'$ of the demands in $D'$ in graph $H'$ with congestion at most $\eta$. This routing induces a routing of the set $D$ of demands in graph $H$ with congestion at most $\eta$.

The other direction is proved similarly: if $D$ is any set of demands in $\tset$, and $F$ is the routing of these demands in $H$ with congestion $\eta=\eta(H,D)$, then $F$ defines a set $D'$ of demands on the vertices of $\tset'$ exactly as before. Flow $F$ then induces a routing of the set $D'$ of demands in graph $H'$ with congestion at most $\eta$. Since $H'$ is a quality-$(2\eta^*)$ sparsifier for $G'$, there is a routing $F'$ of the set $D'$ of demands in graph $G'$ with congestion at most $\eta\cdot 2\eta^*$. Flow $F'$ then induces a routing of the set $D$ of demands in graph $G$ with congestion at most $\eta\cdot 2\eta^*$.

Finally, consider the general case, where the edge capacities $c_e\geq 1$ are no longer required to be integral, and may not be bounded by $C$. For each edge $e\in E$, if $c_e>C$, then we set $c_e=C$. It is easy to see that this transformation does not affect the values $\eta(G,D)$ for demand sets $D$, since all flow in the network must traverse one of the edges incident to the terminals. Next, we define new edge capacities, by setting $c'_e=\ceil{ \frac{2\eta^*}{\eps}c_e}$. Let $G'$ be the resulting graph. Notice that for each edge $e$, $\frac{2\eta^*}{\eps}c_e\leq c'_e\leq \frac{2\eta^*}{\eps}c_e+1\leq \frac{2\eta^*+\eps}{\eps}c_e$. 

Consider some set $D$ of demands defined over the set $\tset$ of terminals. Let $F$ be the routing of $D$ in graph $G'$, whose congestion is $\eta(G',D)$. Consider the same flow $F$ in graph $G$. The congestion caused by $F$  on each edge $e$ of $G$ is bounded by $\frac{F(e)}{c_e}\leq \frac{F(e)}{c'_e}\cdot \frac{2\eta^*+\eps}{\eps}$. Therefore, $\eta(G,D)\leq \eta(G',D)\cdot \frac{2\eta^*+\eps}{\eps}$ for all $D$.

Similarly, given any set $D$ of demands, let $F$ be the routing of $D$ in graph $G$, whose congestion is $\eta(G,D)$. Consider the same flow $F$ in graph $G'$. The congestion on each edge $e$ in $G'$ is bounded by $\frac{F(e)}{c'_e}\leq \frac{F(e)}{c_e}\cdot \frac{\eps}{2\eta^*}$. Therefore, $\eta(G',D)\leq \eta(G,D)\cdot \frac{\eps}{2\eta^*}$ for all $D$. We conclude that for all $D$, $\frac{\eps}{2\eta^*+\eps}\eta(G,D)\leq \eta(G',D)\leq \frac{\eps}{2\eta^*}\eta(G,D)$.


Let $H'$ be a quality-$(2\eta^*)$ sparsifier for graph $G'$, and let $H$ be the graph obtained from $H'$ by multiplying all edge capacities by the factor of $\frac{\eps}{2\eta^*}$. Then for any set $D$ of demands:

\[\eta(H,D)=\frac{2\eta^*}{\eps}\eta(H',D)\leq\frac{2\eta^*}{\eps}\eta(G',D)\leq \eta(G,D);\]

and

\[\begin{split}
\eta(H,D)&=\frac{2\eta^*}{\eps}\eta(H',D)\\
&\geq \frac 1{2\eta^*}\frac{2\eta^*}{\eps}\eta(G',D)
\\ &\geq \frac{1}{2\eta^*+\eps}\eta(G,D).\end{split}\]

We conclude that $H$ is a quality-$(2\eta^*+\eps)$-sparsifier for $G$, and $|V(H)|=C^{O(\log\log C)}$.

\section{Proof Omitted from Section~\ref{subsec: Proof for well-linked terminals}}
\subsection{Proof of Claim~\ref{claim: well-linkedness of sets in G}}
For simplicity, we build two new graphs $H$ and $H'$, as follows. Sub-divide every edge $e\in E'$ with a vertex $v_e$ in both $G$ and $G'$, and let $\tset'=\set{v_e\mid e\in E'}$. Let $H$ be the sub-graph of the resulting graph obtained from $G$, induced by $S\cup \tset'$, and let $H'$ be the sub-graph of the corresponding graph obtained from $G'$, induced by $S'\cup \tset'$. Clearly, $H'$ is a legal contracted graph for $H$, and from the definition of well-linkedness, $H'$ is $\alpha$-well-linked for $\tset'$. We only need to prove that $H$ is $\alpha/3$-well-linked for $\tset'$. For any subset $Z$ of vertices in either $H$ or $H'$, let $\tset_Z=Z\cap \tset'$.

Assume for contradiction that $H'$ is not $\alpha$-well-linked for $\tset'$, and let $(X,Y)$ be the violating partition of $V(H)$, that is, $|E_H(X,Y)|<\frac{\alpha}{3}\min\set{|\tset_X|,|\tset_Y|}$. We construct a partition $(X',Y')$ of $V(H')$, such that $|E_{H'}(X',Y')|<\alpha\cdot\min\set{|\tset_{X'}|,|\tset_{Y'}|}$, contradicting the fact that $H'$ is $\alpha$-well-linked for $\tset'$.

Let $\cset_S\sse \cset$ be the collection of all clusters $C$, with $v_C\in S'$. In order to construct the partition $(X',Y')$ of $V(H')$, we start with the partition $(X,Y)$ of $V(H)$, and process all clusters $C\in \cset_S$ one-by-one. For each such cluster, we  move all vertices of $C$ either to $X$ or to $Y$. Once we process all clusters in $\cset_S$, we will obtain obtain a partition $(\tilde X,\tilde Y)$ of $V(H)$, that will naturally define a partition $(X',Y')$ of $V(H')$.

Consider some cluster $C\in \cset_S$, and partition the edges in $\out_{H}(C)$ into four subsets, $E_X,E_Y$,  $E_{XY},E_{YX}$, as follows. For each edge $e=(u,v)\in \out_H(C)$, with $u\in C$, $v\not\in C$, if both $u,v\in X$, then we add $e$ to $E_X$; if both $u,v\in Y$, we add $e$ to $E_Y$; if $u\in X$, $v\in Y$, then we add $e$ to $E_{XY}$, and otherwise we add it to $E_{YX}$.  If $|E_X|+|E_{XY}|\leq |E_Y|+|E_{YX}|$, then we move all vertices of $C$ to $Y$, and otherwise we move them to $X$. Let $E_C\sse E_H(X,Y)$ be the subset of the edges in the cut $(X,Y)$, with both endpoints in $C$, that is, $E_C=E_H(X\cap C, Y\cap C)$.

Assume w.l.o.g. that $|E_X|+|E_{XY}|\leq |E_Y|+|E_{YX}|$, and so we have moved the vertices of $C$ to $Y$. The only new edges that have been added to the cut are the edges of $E_X$. Since set $C$ is $1/3$-well-linked in $H$, $|E_X|\leq 3|E_C|$. We charge the edges in $E_C$ for the edges in $E_X$. The charge to every edge of $E_C$ is at most $3$, and since the edges of $E_C$ have both endpoints inside the cluster $C$, we will never charge them again. 

Once we process all super-nodes $v_C$ in this fashion, we obtain a partition $(\tilde{X},\tilde{Y})$ of $V(H)$, where for every cluster $C\in \cset_S$, all vertices of $C$ belong to either $\tilde X$ or $\tilde Y$. This cut naturally defines a partition $(X',Y')$ of the vertices of $V(H')$, where $v_C\in X'$ iff $C\sse \tilde X$. From the above discussion, $|E_{H'}(X',Y')|=|E_{H}(\tilde X,\tilde Y)|\leq 3|E_{H}(X,Y)|$. Moreover, since the terminals in $\tset'$ do not belong to any cluster $C\in \cset_S$, the partitions of the terminals of $\tset'$ induced by the cuts $(X,Y)$ in $H$ and $(X',Y')$ in $H'$ are identical. Therefore, $|E_{H'}(X',Y')|<\alpha\min\set{|\tset_{X'}|,|\tset_{Y'}|}$, a contradiction.

\subsection{Proof of Claim~\ref{claim: paths in contracted and original graphs}}

We set up the following two flow networks. For the first flow network, we start with the graph $G$. We add a source $s$, and connect it with a directed edge to every vertex $v\in \tset'$. For every edge $e\in E'$, we sub-divide $e$ by adding a vertex $z_e$ to it, and connect $z_e$ to the sink $t$ with a directed edge. We set the capacity of every edge in this network to be $3\eta$, except for the edges leaving $s$ or entering $t$, whose capacities are set to $1$. Let $N_1$ denote the resulting flow network. In order to show the existence of the set $\pset$ of paths, it is enough to show that the value of the maximum flow in network $N_1$ is $|\tset'|$. For each edge $e\in E(N_1)$, we denote by $c(e)$ its capacity, and for each cut $(X,Y)$ in the network, we denote by $c(X,Y)$ the total capacity of edges connecting the vertices of $X$ to the vertices of $Y$.

The second network, $N_2$, is constructed similarly, except that we use the contracted graph $G'$, instead of the graph $G$. Specifically, we start with graph $G'$, add a source $s$ and a sink $t$. Source $s$ connects with a directed edge to every vertex $v\in\tset'$. As before, we sub-divide every edge $e\in E'$ with a vertex $z_e$, and connect $z_e$ to the sink $t$. The capacities of all edges in $N_2$ are $\eta$, except for the edges that leave the source $s$ or enter the sink $t$, whose capacities are set to $1$. Observe that the existence of the set $\pset'$ of paths in graph $G'$ guarantees that there is a flow of value $|\tset'|$ in network $N_2$.  For each edge $e\in E(N_2)$, we denote by $c'(e)$ its capacity in $N_2$, and for each cut $(X,Y)$ in the network, we denote by $c'(X,Y)$ the total capacity of edges connecting the vertices of $X$ to the vertices of $Y$.

It is now enough to prove that there is a flow of value $|\tset'|$ in network $N_1$. Assume this is not the case. Then there is an $s$-$t$ cut $(X,Y)$ in network $N_1$, with $c(X,Y)<|\tset'|$. We show that there is an $s$-$t$ cut $(X',Y')$ in network $N_2$, with $c'(X',Y')<|\tset'|$, contradicting the existence of the set $\pset$ of paths.

We consider the super-nodes $v_C\in V(G')$ one-by-one. For each such super-node $v_C$, we move all vertices of $C$ either to $X$ or to $Y$. The final cut, $(\tilde X,\tilde Y)$ will naturally define an $s$-$t$ cut $(X',Y')$ in network $N_2$, and we will show that its capacity is less than $|\tset'|$.

Consider some super-node $v_C\in V(G')$. Recall that we are guaranteed that $\tset'\cap C=\emptyset$, so $\out_{N_1}(C)=\out_{G}(C)$. Let $E_C=E_{G}(C\cap X,C\cap Y)$.
We partition the edges of $\out_{G}(C)$ into four subsets, $E_X$ containing edges with both endpoints in $X$, $E_Y$ containing edges with both endpoints in $Y$, $E_{XY}$ containing edges $e=(u,v)$ with $u\in X\cap C$, $v\in Y\setminus C$, and $E_{YX}$ containing the remaining edges. If $|E_X|+|E_{XY}|\leq |E_Y|+|E_{YX}|$, then we move all vertices of $C$ to $Y$, and otherwise we move them to $X$.

Assume w.l.o.g. that $|E_X|+|E_{XY}|\leq |E_Y|+|E_{YX}|$, so we have moved the vertices of $C$ to $Y$. Since cluster $C$ is $1/3$-well-linked in graph $G$, $|E_C|\geq |E_X|/3$. We charge the edges of $E_C$ for the edges of $E_X$, with the charge to every edge of $E_C$ being at most $3$. Observe that none of the edges in $E_C\cup E_X$ is incident on the source $s$ or the sink $t$.

Let $(\tilde X,\tilde Y)$ be the cut obtained after processing all super-nodes $v_C\in V(G')$.
Let $E_1(X,Y)\sse E(X,Y)$ be the subset of edges incident on the source $s$ or on the sink $t$ in the original cut, and let $E_1(\tilde X,\tilde Y)$ be the subset of edges incident on the source $s$ or on the sink $t$ in the new cut. Since none of the clusters $C$ we have considered contained vertices of $\tset$, or vertices $z_e$ for $e\in E'$, $|E_1(X,Y)|=|E_1(\tilde X,\tilde Y)|$. Let $E_2(X,Y)=E(X,Y)\setminus E_1(X,Y)$, and similarly, let $E_2(\tilde X,\tilde Y)=E(\tilde X,\tilde Y)\setminus E_1(\tilde X,\tilde Y)$. From the above discussion, $|E_2(\tilde X,\tilde Y)|\leq 3|E_2(X,Y)|$. Recall that the capacities of all edges in $E_2(X,Y)$ and $E_2(\tilde X,\tilde Y)$ are $3\eta$, while the capacities of edges in $E_1(X,Y)$ are $1$, and we have assumed that $c(X,Y)=|E_1(X,Y)|+3\eta|E_2(X,Y)|< |\tset'|$.

Cut $(\tilde X,\tilde Y)$ naturally defines an $s$-$t$ cut $(X',Y')$ in network $N_2$. The capacity of this cut in network $N_2$ is $c'(X',Y')=|E_1(X,Y)|+\eta|E_2(\tilde X,\tilde Y)|\leq |E_1(X,Y)|+3\eta|E_2(X,Y)|< |\tset'|$, a contradiction.

\end{document}